\newtheorem{Theorem}{Theorem}
\newtheorem{condition}{Condition}[section]
\newcommand{\beq}{\begin{equation}}
	\newcommand{\eeq}{\end{equation}}
\newcommand{\Pro}{\mathbb{P}}
\newcommand{\bu}{{\mathbf u}}
\newcommand{\bSigma}{\boldsymbol{\Sigma}}
\def\bx{\bm{x}}
\DeclareMathOperator*{\argmax}{arg\,max}
\DeclareMathOperator*{\argmin}{arg\,min}
\def\mR{{\mathbb R}}
\def\E{\mathbb{E}}
\def\L{{\mathcal L}}
\begin{document}

\title{Differentially Private Estimation and Inference in High-Dimensional Regression with FDR Control}

\author{\name Zhanrui Cai \email zhanruic@hku.hk \\
       \addr Faculty of Business and Economics\\
      The University of Hong Kong\\
       Pok Fu Lam Road, Hong Kong, China
       \AND
       \name Sai Li \email saili@tsinghua.edu.cn \\
       \addr Department of Statistics and Data Science\\
      Tsinghua University\\
       No. 30 Shuangqing Road, 100084, Beijing, China
              \AND
       \name Xintao Xia \email xintaox@zju.edu.cn \\
       \addr Center for Data Science\\
       Zhejiang University\\
      No. 866 Yuhangtang Road, 310058, Zhejiang, China
              \AND
       \name Linjun Zhang \email lz412@stat.rutgers.edu \\
       \addr Department of Statistics\\
     Rutgers University\\
    110 Frelinghuysen Rd, 08854, New Jersey, USA}

\footnote[0]{Authors are listed alphabetically. Zhanrui Cai is the corresponding author.}

\editor{Raef Bassily}
  
\maketitle

\begin{abstract}
This paper proposes new methodologies for conducting practical differentially private (DP) estimation and inference in high-dimensional linear regression. We first introduce a DP Bayesian Information Criterion (DP-BIC) for selecting the unknown sparsity parameter in differentially private sparse linear regression (DP-SLR), eliminating the need for prior knowledge of model sparsity, which is a requisite in the existing literature. Next, we develop the DP debiased algorithm that enables privacy-preserving inference on a particular subset of regression parameters. Our proposed method enables privacy-preserving inference on the regression parameters by leveraging the inherent sparsity of high-dimensional linear regression models. Additionally, we address private feature selection by considering multiple testing in high-dimensional linear regression by introducing a DP multiple testing procedure that controls the false discovery rate (FDR). This allows for accurate and privacy-preserving identification of significant predictors in the regression model. Through extensive simulations and real data analyses, we demonstrate the effectiveness of our proposed methods in conducting inference for high-dimensional linear models while safeguarding privacy and controlling the FDR.
\end{abstract}

\begin{keywords}
differential privacy, high dimension, linear regression, debiased Lasso, false discovery rate control
\end{keywords}

\section{Introduction}\label{sec: intro}

In the era of big data, the significance of data privacy has grown considerably. With the continuous collection, storage, processing, and sharing of vast amounts of personal data, there is a pressing need to protect sensitive information. Unfortunately, traditional data analytics and statistical inference tools may fail to ensure such protection. The concept of differential privacy, initially proposed by theoretical computer scientists \citep{dwork2006calibrating}, has made substantial progress and found widespread use in various large-scale applications. Differentially private algorithms incorporate random noise independent of the original database and produce privatized summary statistics or model parameters. The ultimate goal of differentially private analysis is to safeguard individual data while allowing meaningful statistical analysis of the original database.

This work is motivated by the growing need to conduct statistical inference on confidential data, particularly when variable selection is required. In this paper, we analyze the \href{https://www.nrcs.usda.gov/nri}{National Resources Inventory} (NRI) data, a statistical survey of land use and natural resource conditions on U.S. non-Federal lands. Each observation in the NRI data includes information on soil conditions, water conditions, and other related resources at a specific geographic location on U.S. non-Federal lands. The NRI aims to assess the quantity and quality of natural resources while closely monitoring changes and trends, with a particular focus on soil erosion. Thus, it is crucial to provide accurate estimates and reliable confidence intervals for soil erosion to facilitate regular evaluations of the effectiveness of soil and water conservation practices, irrigation techniques, and farming technologies and practices. In this paper, we build a regression model to predict the long-term average annual soil loss based on available covariates such as climatic factors, erodibility factors, soil loss tolerance, land cover and use, wetland conditions, and other variables in the NRI dataset.

However, the NRI data are highly confidential. Using standard statistical methods may pose significant confidentiality risks. The locations of sampled points, along with other identifying details, are considered confidential information under 7 USC 2276 and the interpretive policy in NRCS General Manual Title 290, Part 400.11, B(4) in Appendix A. Improper release of such information violates federal law and can lead to serious legal consequences. Preventing the disclosure of sample location information in released analysis results is therefore essential. If attackers were able to identify the geographic location of even a single sample point, altering the original land conditions could introduce substantial bias into national resource estimates. Such bias could mislead government policy and ultimately threaten national security. Traditional data analytics and statistical inference tools often fail to protect NRI data, particularly with respect to location confidentiality. We apply our proposed methods to analyze water erosion using the NRI dataset, obtaining accurate estimates and valid confidence intervals while protecting data privacy.

In this paper, we develop a novel framework for conducting differentially private statistical inference in high-dimensional linear regression. Let $Y\in\mR$ denote the response and $\boldsymbol{X}\in\mR^p$ denote the covariates. Assume the random vector $(Y,\boldsymbol{X})$ follows the linear model
\begin{equation*}
	Y=\boldsymbol{\beta}^\top \boldsymbol{X}+e,
\end{equation*}
where $e$ is random noise following a Gaussian distribution. Let $\{(\boldsymbol{x}_i, y_i)\}_{i=1}^{n}$ be independent realizations of $(Y,\boldsymbol{X})$. We focus on the high-dimensional setting where $p$ may grow exponentially with $n$, and only a small subset of the coefficients in $\boldsymbol{\beta}$ are nonzero. Our goal is to develop differentially private estimation and inference methods for $\boldsymbol{\beta}$, along with a differentially private false discovery rate control procedure for selecting the nonzero coefficients.

In the typical non-private setting, numerous approaches have been developed to address the challenge of statistical inference in high-dimensional linear models. The debiased Lasso \citep{zhang2014confidence, javanmard2014confidence, van2014asymptotically} emerged as a technique to mitigate the bias inherent in the Lasso estimator, thereby providing asymptotically optimal confidence intervals for regression coefficients \citep{cai2017confidence}. More recently, \cite{wang2022finite} introduced the repro framework for finite-sample inference with high-dimensional covariates. Beyond inference on individual parameters, another key objective in high-dimensional linear regression is controlling the FDR of the variable selection. This objective has led to the development of FDR control methods in the literature. One influential approach is the knockoff framework introduced by \cite{barber2015controlling}, which exploits the symmetry of the statistics under the null hypothesis. The idea was further developed in numerous other settings \citep{candes2018panning, cai2025knockoffs}. Recently, \cite{dai2022false, dai2023scale} proposed a method that combines symmetric mirror statistics with data splitting to asymptotically control the FDR.

Addressing privacy concerns in high-dimensional statistical inference has received significant attention in recent literature. \cite{avella2021differentially} applied first and second-order optimization algorithms to develop private M-estimators and analyzed their asymptotic normality, along with the associated privacy error rate. \cite{xia2025statistical} proposed the statistical inference method for differentially private stochastic gradient descent. They demonstrated, both theoretically and empirically, that the error induced by the privacy mechanism can be made arbitrarily small. The problem of private multiple testing has also been actively studied \citep{dwork2018differentially, xia2023fdr, cai2025knockoffs}. 

This paper contributes to the differentially private analysis of high-dimensional linear regression in several key aspects. 
\begin{enumerate}
	\item We propose a DP-BIC to accurately select the unknown sparsity parameter in DP-SLR proposed by \cite{cai2019cost}, eliminating the need for prior knowledge of the model sparsity. This advancement enhances the reliability of the DP-SLR framework and can be used in many downstream tasks. 
	\item We develop a differentially private debiased procedure that yields asymptotically normal estimators under privacy guarantees. This procedure enables the construction of differentially private confidence intervals for individual parameters of interest.
	\item We design a differentially private method for controlling the FDR in multiple testing scenarios, which inevitably arise in high-dimensional inference problems under privacy constraints. Our approach achieves FDR control at any user-specified rate $\alpha$ and attains asymptotic power approaching one under mild conditions.
\end{enumerate}

\textit{Notation:} For any $p$-dimensional vector $\boldsymbol{x}=(x_1,\dots,x_p)^{\top}$, we define the $l_q$-norm of $\boldsymbol{x}$ for $1\leq q<\infty$ as $\|\boldsymbol{x}\|_q:=(\sum_{i=1}^{p}|x_i|^q)^{1/q}$ for $1\leq q$, with $|\cdot|$ representing the absolute value. The $l_\infty$-norm of $\boldsymbol{x}$ is defined as $\|\boldsymbol{x}\|_{\infty}:=\max_{i=1,\dots,p}|x_i|$. The $\text{supp}(\boldsymbol{x})=\{i:|x_i|>0\}$ is the index set of nonzero elements in $\boldsymbol{x}$. We define the $l_0$-norm of $\boldsymbol{x}$ by $\|\boldsymbol{x}\|_0=|\text{supp}(\boldsymbol{x})|$, which is the number of nonzero coordinates of $\boldsymbol{x}$. For a positive integer $n$, we use $[n]$ to denote the set $\{1,\dots,n\}$. For a subset $\mathcal{S}\subseteq[p]$ and vector $\boldsymbol{x}\in\mathbb{R}^{p}$, we use $\boldsymbol{x}_{\mathcal{S}}$ to denote the restriction of vector $\boldsymbol{x}$ to the index set $\mathcal{S}$ and $|\mathcal{S}|$ to denote the number of elements in $\mathcal{S}$. For a vector $\boldsymbol{x}\in\mathbb{R}^{p}$, we use $\Pi_R(\boldsymbol{x})$ to denote the projection of $\boldsymbol{x}$ onto the $l_2$-ball $\{\boldsymbol{u}\in\mathbb{R}^p:\|\boldsymbol{u}\|_2\leq R\}$, where $R$ is a positive real number. For a real symmetric matrix $\boldsymbol{A}\in\mathbb{R}^{p\times p}$, we use $\Lambda_{\min}(\boldsymbol{A})$ and $\Lambda_{\max}(\boldsymbol{A})$ to denote the minimum and maximum eigenvalues of $\boldsymbol{A}$. For a set of random variables $\{X_n\}_{n=1}^{\infty}$ and a random variable $X$, the notation $X_n\xrightarrow{D}X$ means $X_n$ converges to $X$ in distribution and the notation $X_n=O_P(a_n)$ means $X_n/a_n$ is stochastically bounded for a sequence of positive real numbers $\{a_n\}_{n=1}^{\infty}$.

\section{Preliminaries}\label{sec: formulation}

Consider the dataset $D:= \{(\boldsymbol{x}_i, y_i)\}_{i=1}^{n}\in\mathcal{D}$, drawn independently and identically from a distribution satisfying
$$y_i=\boldsymbol{\beta}^\top \boldsymbol{x}_i+e_i,$$
where $e_i$ follows a sub-Gaussian distribution and the unknown parameter $\boldsymbol{\beta}\in\mR^p$ satisfies $\|\boldsymbol{\beta}\|_0\le s$. We focus on the high-dimensional setting where the dimension $p$ may grow exponentially with the sample size $n$, while the sparsity $s$ grows slowly with $n$, all under the $(\varepsilon,\delta)$-DP framework. In what follows, we introduce the formal definitions of differential privacy and sensitivity.

\begin{definition}[Differential Privacy \citep{dwork2006calibrating}]
	\label{def:dp}
	A randomized algorithm $M(\cdot):\mathcal{D}\to\mathcal{R}$ is $(\varepsilon,\delta)$-DP for $\varepsilon,\delta>0$ if for every pair of neighboring data sets $D,D'\in\mathcal{D}$ that differ by one individual datum and every measurable set $\mathcal{S}\subset\mathcal{R}$ with respect to $M(\cdot)$,
	\begin{equation}
		\mathbb{P}(M(D)\in\mathcal{S})\leq e^{\varepsilon}\mathbb{P}(M(D')\in\mathcal{S})+\delta,
	\end{equation}
	where the probability measure $\mathbb{P}$ is induced by the randomness of $M(\cdot)$ only.
\end{definition}


\begin{definition}[Sensitivity]
\label{def:sensitivity}
For a vector-valued deterministic algorithm $\mathcal{T}(\cdot):\mathcal{D}\to\mR^{m}$, the $l_q$ sensitivity of $\mathcal{T}(\cdot)$ is defined as
\begin{equation}
\Delta_q(\mathcal{T}):=\sup_{D,D'\in\mathcal{D}}\|\mathcal{T}(D)-\mathcal{T}(D')\|_q,
\end{equation}
where $D$ and $D'$ only differ in one single entry.
\end{definition}

Sensitivity is extremely useful in characterizing the magnitude of change in the algorithm when a single individual in the dataset is replaced. In the appendix, we introduce some useful tools in DP, such as privacy mechanisms and composition theorems. In high-dimensional problems, parameters of interest are often assumed to be sparse. Reporting the entire set of estimation results can introduce substantial additional randomness due to privacy requirements. Fortunately, by exploiting sparsity, one can selectively disclose only the significant nonzero coordinates. The ``peeling" algorithm \citep{dwork2018differentially} is a differentially private algorithm that addresses this problem by identifying and returning the top-$k$ most significant coordinates based on the absolute values. Since its proposal by \cite{dwork2018differentially}, the algorithm has been widely used for protecting privacy in high-dimensional data analysis \citep{cai2019cost, xia2023fdr, xia2025differentially}. We summarize its details in Algorithm \ref{algo: noisy hard thresholding} and present its theoretical properties in Lemma \ref{lem:nht}.

\begin{algorithm}[!t]	
	\caption{Noisy Iterative Hard Thresholding (Peeling) ($NoisyIHT(\mathcal{T}(D), s^{\prime},\varepsilon,\delta,\lambda)$)}
	\label{algo: noisy hard thresholding}
	\begin{algorithmic}[1]
		\Require{Dataset $D$, vector-valued function $\mathcal{T}(D)=(\mathcal{T}(D)_1,\dots,\mathcal{T}(D)_d)^{\top} \in \mathbb{R}^d$, target sparsity $s^{\prime}$, privacy parameters $(\varepsilon, \delta)$, noise scale $\lambda$.}
		\State Initialize $S = \emptyset$.
		\For{$i = 1$ to $s^{\prime}$}
		\State	Generate $\boldsymbol{\eta}_i=(\eta_{i1}, \eta_{i2}, \cdots, \eta_{id} )^{\top} \in \mathbb{R}^d$ with $\eta_{i1}, \eta_{i2}, \cdots, \eta_{id} \stackrel{\text{i.i.d.}}{\sim} \text{Laplace}\{\lambda \cdot 2\sqrt{3s^{\prime}\log(1/\delta)}/\varepsilon\}$.
		\State	Append $j^* = \argmax_{j \in [d] \setminus S} |\mathcal{T}(D)_j| + \eta_{ij}$ to $S$.
		\EndFor
		\State Set $\tilde{P}_s\{\mathcal{T}(D)\} =\mathcal{T}(D)_S$.
		\State Generate $\tilde{\boldsymbol{\eta}}=(\tilde{\eta}_{1}, \tilde{\eta}_{2}, \cdots, \tilde{\eta}_{d} )^{\top} \in \mathbb{R}^d$ with $\tilde \eta_{1}, \cdots, \tilde \eta_{d} \stackrel{\text{i.i.d.}}{\sim} \text{Laplace}\{\lambda \cdot 2\sqrt{3s^{\prime}\log(1/\delta)}/\varepsilon\}$.
		\Ensure{$\tilde{P}_s\{\mathcal{T}(D)\} + \tilde{\boldsymbol{\eta}}_S$.}
	\end{algorithmic}
\end{algorithm}

\begin{lemma}[\cite{dwork2018differentially} and \cite{cai2019cost}]
	\label{lem:nht}
	For a vector-valued function $\mathcal{T}$ with $\|\mathcal{T}(D)-\mathcal{T}(D')\|_{\infty}\leq\lambda$, where $D'$ is a neighboring data set of $D$, Algorithm \ref{algo: noisy hard thresholding} is $(\varepsilon,\delta)$-DP.
\end{lemma}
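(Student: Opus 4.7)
The plan is to decompose Algorithm~\ref{algo: noisy hard thresholding} into $2s$ adaptively composed primitives and charge the privacy cost via advanced composition. First I would argue that each iteration of the outer loop is a \emph{report noisy max} query on the coordinate-wise absolute values $|\xi_1|,\ldots,|\xi_d|$. The hypothesis $\|\bm{v}(D)-\bm{v}(D')\|_\infty\leq\lambda$ implies this vector has $\ell_\infty$ sensitivity $\lambda$, so the standard report-noisy-max analysis with Laplace noise of scale $b=\lambda\cdot 2\sqrt{3s\log(1/\delta)}/\epsilon$ yields $(\epsilon_0,0)$-differential privacy for each iteration, with $\epsilon_0=\lambda/b=\epsilon/(2\sqrt{3s\log(1/\delta)})$. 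Restricting the $\argmax$ to $[d]\setminus S$ and appending $j^\ast$ to $S$ is post-processing of the previously released indices, so it does not consume further budget.

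Next I would handle the final noise injection. Conditional on $S$, the map $D\mapsto \xi_j(D)$ for each $j\in S$ has $\ell_1$ sensitivity at most $\lambda$, and the independent Laplace noise $\tilde\eta_j$ of scale $b$ makes each scalar release $(\epsilon_0,0)$-differentially private by Lemma~\ref{lemma:laplace_gaussian_mechanism}. Thus the whole algorithm is an adaptive composition of $2s$ mechanisms, each $(\epsilon_0,0)$-differentially private. Invoking advanced composition (Lemma~\ref{lem:post_combin_dp}, item 3) with $k=2s$ and slack $\delta$ gives overall privacy loss
$$\epsilon' \;=\; 2s\,\epsilon_0(e^{\epsilon_0}-1)\;+\;\epsilon_0\sqrt{4s\log(1/\delta)}.$$
Substituting the value of $\epsilon_0$, the second term equals $\epsilon/\sqrt{3}$, and the first term is of order $\epsilon^2/\log(1/\delta)$, so in the standard regime $\epsilon\lesssim 1$ one gets $\epsilon'\leq\epsilon$. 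The constant $2\sqrt{3}$ baked into the noise scale is calibrated precisely to absorb these two contributions.

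\textbf{Main obstacle.} The delicate step is the report-noisy-max claim, because the $\argmax$ is not itself a Laplace release and Lemma~\ref{lemma:laplace_gaussian_mechanism} does not apply directly; one must argue via a coupling of the noise draws under $D$ and $D'$ (equivalently, by comparing the density of the maximizing index) that the output density changes by at most $e^{\epsilon_0}$, uniformly over the reported index. The remainder is bookkeeping: checking that restricting to $[d]\setminus S$ preserves the per-coordinate sensitivity bound, that the final step really is $s$ parallel Laplace mechanisms conditionally on $S$, and that the multiplicative constant $2\sqrt{3s\log(1/\delta)}/\epsilon$ reconciles with the advanced-composition calibration.
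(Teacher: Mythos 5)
The paper does not prove Lemma~\ref{lem:nht}; it is quoted verbatim from \cite{dwork2018differentially}, so there is no in-paper proof to compare against, only the standard argument. Your high-level decomposition matches that argument: view the outer loop as $s$ report-noisy-max queries on the sensitivity-$\lambda$ vector $(|\xi_1|,\dots,|\xi_d|)$, view the final output as $s$ scalar Laplace releases of the selected $\xi_j$'s, and charge the total cost with advanced composition over $k=2s$ adaptive rounds. Your arithmetic for the $\sqrt{}$-term ($\epsilon_0\sqrt{4s\log(1/\delta)}=\epsilon/\sqrt 3$) and for the quadratic term are correct under the stated per-round budget $\epsilon_0=\lambda/b$, and your observation that restricting the argmax to $[d]\setminus S$ is handled within the adaptive composition framework is also fine.

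The gap is precisely the step you flag as ``delicate,'' and it is more than bookkeeping. The factor-one bound $\epsilon_0=\lambda/b$ for report-noisy-max with Laplace($b$) noise is proved in Dwork--Roth \emph{only for monotone queries} (counting queries, where adding a record moves every $q_j$ in the same direction). That monotonicity is used in both directions of the coupling in the proof of Claim~3.9: shifting the winning coordinate's noise by $+\lambda$ (not $+2\lambda$) suffices because $q_i$ and $q_{j\ne i}$ move the same way. Here the scores are $|\xi_j(D)|$, which are \emph{not} monotone under a one-record change: some coordinates can increase by $\lambda$ while others decrease by $\lambda$. The textbook conclusion for arbitrary sensitivity-$\lambda$ scores is $\epsilon_0=2\lambda/b$, and if you substitute that into your advanced-composition bound, the $\sqrt{}$-term alone equals $2\epsilon/\sqrt 3>\epsilon$; the $\sqrt 3$ slack in the noise scale was not designed to absorb a factor of two. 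So your ``coupling of the noise draws'' sketch needs to establish a genuinely sharper-than-textbook bound (e.g.\ by exploiting symmetry of Laplace noise to show the loss is $\epsilon_0+O(\epsilon_0^2)$ rather than $2\epsilon_0$), or the proof must follow Dwork, Su, and Zhang's actual route, which is not identical to ``standard report-noisy-max plus advanced composition.'' As written, the proposal asserts the $e^{\epsilon_0}$ bound without confronting the failure of monotonicity, and that is where the argument would break if carried out.
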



\section{Differentially Private Estimation}
\label{sec: estimation}

The estimation of regression parameters in the high-dimensional differentially private setting has been studied by \cite{talwar2015nearly,thakurta2013differentially} and more recently by \cite{cai2019cost} with optimality guarantees for both statistical errors and privacy errors. However, existing algorithms \citep{thakurta2013differentially,cai2019cost} for high-dimensional differentially private estimation, when the dimension $p$ grows exponentially with the sample size $n$, require prior knowledge of the sparsity parameter $s$, which is typically unknown in practice. In this section, we propose the DP-BIC in Algorithm \ref{alg:linear} to select the sparsity parameter adaptively, eliminating the need for prior knowledge of the model sparsity. The pipeline of the proposed estimation algorithm is presented in Algorithm \ref{alg:linear}.

\begin{algorithm}[!t]
\caption{Adaptive Differentially Private Sparse Linear Regression}\label{alg:linear}
\begin{algorithmic}[1]
	\Require Dataset $\{(\boldsymbol{x}_i,  y_i)\}_{i=1}^{n}$, candidate set size $K$, step size $\eta^0$, privacy parameters $(\varepsilon, \delta)$, noise scale $B$, number of iterations $T$, truncation level $R$, feasibility parameter $C$, initial value $\boldsymbol{\beta}_{ini}$, constant $c_B$ in BIC criterion.
	\State Data splitting: randomly split the dataset into $T$ subsets of roughly equal size, $[n]= \mathcal{S}_0\cup\dots\cup\mathcal{S}_{T-1}$, where $\mathcal{S}_i\cap\mathcal{S}_j=\emptyset$ for $i\neq j$.
	\For {$k$ in $0$ to $K$}
	\State Initialization: $s^{\prime}=2^k$, $\boldsymbol{\beta}^{(0)}_k=\boldsymbol{\beta}_{ini}$.
	\If{$k>0$}
	\State Warm start: $\boldsymbol{\beta}^{(0)}_k=\hat{\boldsymbol{\beta}}(k-1)$.
	\EndIf 
	\For{$t$ in $0$ to $T-1$}
	\State Gradient descent: compute $\boldsymbol{\beta}^{(t + 0.5)}_k = \boldsymbol{\beta}^{(t)}_k - (\eta^0/|\mathcal{S}_{t}|)\sum_{i\in\mathcal{S}_{t}}  ( \Pi_R(\boldsymbol{x}_i^\top \boldsymbol{\beta}_k^{(t)})-\Pi_{R}(y_i))\boldsymbol{x}_i$, where $|\mathcal{S}_t|$ is the size of set $\mathcal{S}_t$ and $\Pi_R(x)$ denotes the projection of $x$ onto the $l_2$-ball $\{u\in\mathbb{R}:\|u\|_2\leq R\}$.
	\State	Private report: $\boldsymbol{\beta}^{(t+1)}_k = \Pi_C(\text{NoisyIHT}(\boldsymbol{\beta}_k^{(t+0.5)}, s^{\prime}, \varepsilon/\{T(K+2)\}, \delta/\{T(K+1)\}, \eta^0 B/|\mathcal{S}_t|))$, where $\Pi_C(\boldsymbol{x})$ denotes the projection of $\boldsymbol{x}$ onto the $l_2$-ball $\{\boldsymbol{u}\in\mathbb{R}^p:\|\boldsymbol{u}\|_2\leq C\}$.
	\EndFor
    \State Parameter clipping: $\hat{\boldsymbol{\beta}}(k)=\boldsymbol{\beta}_k^{(T)}/\max_i\{|\boldsymbol{x}_i^{\top}\boldsymbol{\beta}_k^{(T)}|/R,1\}$.
	\EndFor
	\State	Model selection:
	\begin{align*}
		\label{hbeta-adapt}
		\hat{\boldsymbol{\beta}}=\argmin_{\hat{\boldsymbol{\beta}}(k):0\leq k\leq K}\bigg[&\sum_{i=1}^{n}\{\Pi_R(y_i)-\Pi_R(\boldsymbol{x}_i^{\top}\hat{\boldsymbol{\beta}}(k))\}^2+z_k\\
        &+c_B\bigg\{\log(p)\log(n)\cdot 2^k+\frac{\log(p)^2\cdot 2^{2k}\log(1/\delta)\log(n)^7}{n\varepsilon^2}\bigg\}\bigg],
	\end{align*}
	where $z_k\stackrel{i.i.d.}{\sim} \text{Laplace}\{2(2R)^2(K+2)/\varepsilon\}$.
	\Ensure $\hat{\boldsymbol{\beta}}$.
\end{algorithmic}
\end{algorithm}

In high-dimensional model selection, information criteria such as the Bayesian Information Criterion (BIC) and the Generalized Information Criterion (GIC) have been widely studied. In general, an information criterion is constructed as
\begin{equation*}
    \text{estimate of risk functions}+a_n\times\text{measure of model complexity},
\end{equation*}
where $a_n$ is a positive sequence depending only on the sample size and the dimensionality of the covariates. Specifically, the ``measure of model complexity" corresponds to the sparsity parameter $s$ of the candidate model \citep{fan2013tuning}. When $p = O(n^{\kappa})$ for some $\kappa > 0$, \cite{wang2009shrinkage} proposed using $a_n = c_B \log(p)/n$ in the non-private setting, which corresponds to the first term in the proposed DP-BIC (Step 13 in Algorithm 2). A similar choice of $a_n$ was considered by \cite{fan2013tuning} for generalized linear models when $\log(p) = O(n^{\kappa})$. Intuitively, the second part of the equation helps avoid overfitted models by adding a penalty related to model sparsity. In this sense, $a_n$ should be larger. On the other hand, for underfitted models, the penalty should not exceed the improvement in the risk function achieved by incorporating important features. Intuitively, by choosing the penalty to approximately match the tight $\ell_2$ estimation error bound of $\boldsymbol{\beta}$ (i.e., $O(s \log(p)/n)$ for linear models under regularity conditions), one can prevent overfitting while mitigating underfitting. In the differentially private setting, greater sparsity implies larger error variance and thus reduces estimation accuracy. Our choice for the DP-BIC is closely related to the $\ell_2$ estimation error bound, which includes an additional term that depends on both model complexity and sparsity parameters \citep{cai2019cost}.

Algorithm \ref{alg:linear} incorporates several innovations. First, our choice to use powers of $2$ as candidate values for the sparsity parameter strikes a delicate balance and achieves two critical goals: (1) it ensures that the candidate set covers the true model by defining an interval in which $s$ falls, i.e., $s^*<s<2s^*$ for some $s^*$ in the candidate set; and (2) it limits the total number of candidate models to $O(\log(n))$, which is $o(n)$. This guarantees that the cost of privacy does not affect estimation accuracy beyond a logarithmic factor in the asymptotic setting. The required candidate set size $K=O\big(\max\{\log_2(\sqrt{n}/\log(p)^2 ),1\}\big)$ in Theorem \ref{thm:main} aligns with the sparsity requirements for statistical inference, as discussed in Section \ref{sec: inference}. The conditions can be relaxed to $K=O\big(\max\{\log_2(n/\log(p) ),1\}\big)$ by employing the cross-fitting technique of \cite{chernozhukov2018double}. Moreover, the choice of powers of $2$ can be replaced with any fixed base, providing greater flexibility in the algorithm. Second, the proposed algorithm employs random sample splitting, which is equivalent to employing the stochastic gradient descent algorithm with one pass of the entire dataset. Because of the splitting, $\boldsymbol{x}_i$ used in $t$-th iteration and $\boldsymbol{\beta}_k^{(t)}$ are independent, allowing us to obtain a high probability bound of $|\boldsymbol{x}_i^{\top}\hat{\boldsymbol{\beta}}_k^{(t)}|$ using the Chernoff bound. Note that sample splitting is not strictly necessary under stronger design assumptions commonly adopted in the differential privacy literature. For example, \cite{talwar2015nearly} considered the optimization over the set $\|\boldsymbol{\beta}\|_1\leq C$ for a given constant $C$, which is stronger than our Condition \ref{cond:2}; \cite{cai2019cost} assumed that for any subset $I\subset\{1,\dots,p\}$, $\|\boldsymbol{x}_I\|_{\infty}\leq c_x/\sqrt{|I|}$ and $1/L\leq |I|\cdot\Lambda_{\min}(\text{Cov}(\boldsymbol{x}_I\boldsymbol{x}_I^{\top}))\leq |I|\cdot\Lambda_{\max}(\text{Cov}(\boldsymbol{x}_I\boldsymbol{x}_I^{\top}))\leq L$, which are less commonly imposed than our Condition \ref{cond:1}. In the finite-sample case, when the sparsity satisfies $\sqrt{s}\leq\log(n)$, sample splitting in Algorithm \ref{alg:linear} can be omitted, and the full sample can be used at each step. Third, the proposed algorithm leverages private estimation outcomes from earlier steps with lower sparsity levels as warm starts, thereby improving the accuracy of subsequent estimation.

\begin{condition}
\label{cond:1}
The covariates $\boldsymbol{x}_i$ are independently sub-Gaussian with mean zero and covariance matrix $\boldsymbol{\Sigma}$, which satisfies $1/L\leq \Lambda_{\min}(\boldsymbol{\Sigma})\leq \Lambda_{\max}(\boldsymbol{\Sigma})\leq L$. Moreover, there exists a positive constant $c_x<\infty$ such that $\|\boldsymbol{x}_i\|_{\infty}\leq c_x$. 
\end{condition}

The design condition $\|\boldsymbol{x}_i\|_{\infty}\leq c_x$ in Condition \ref{cond:1} is widely adopted in the differential privacy literature to ensure bounded sensitivity (e.g., \cite{dwork2014analyze,talwar2015nearly,thakurta2013differentially}). It was also imposed in \cite{cai2019cost} to facilitate the statistical analysis of DP-SLR. This condition can be relaxed by employing a robust loss function \citep{avella2021differentially}. The upper bound on the infinity norm of $\boldsymbol{x}_i$ can also be weakened to hold with high probability, which is easily obtained for sub-Gaussian distributions with $c_x=O(\sqrt{\log(p)})$. With similar technical procedures, the second term of the error bound will be increased by $\log(p)$. The sub-Gaussian and bounded eigenvalue assumptions in Condition \ref{cond:1} are frequently assumed in high-dimensional literature \citep{van2014asymptotically}. Unlike the algorithm of \cite{cai2019cost}, we employ a data-splitting technique to establish independence between $\boldsymbol{\beta}^{(t)}_k$ and the sub-data $\boldsymbol{x}_i$ used in the $t$-th iteration. Combining this independence with Condition \ref{cond:2}, and by properties of sub-Gaussian random variables, we obtain the high-probability bound $|\boldsymbol{x}_i^{\top}\boldsymbol{\beta}^{(t)}|=O_p\{\sqrt{\log(n)}\}$. Lemma \ref{lem:privacy-adapt} provides the privacy guarantee of Algorithm \ref{alg:linear}, where only Condition \ref{cond:1} is required.

\begin{lemma}
\label{lem:privacy-adapt}
Suppose Condition \ref{cond:1} holds and $B\geq 4Rc_x$, Algorithm \ref{alg:linear} is $(\varepsilon,\delta)$-DP.
\end{lemma}

\begin{condition}
\label{cond:2}
The true parameter satisfies $\|\boldsymbol{\beta}\|_2\leq c_0$ for some constant $c_0>0$ and $\|\boldsymbol{\beta}\|_0\leq s$.
\end{condition}

The sparsity assumption in Condition \ref{cond:2} is commonly imposed in the high-dimensional literature and can be relaxed to approximate sparsity \citep{chen2007large,belloni2019valid}. In Condition \ref{cond:2}, the upper bound on the $\ell_2$ norm of $\boldsymbol{\beta}$ is used to control the sensitivity of the gradient, as in \cite{cai2019cost}. Bounding the sensitivity of the gradient function is necessary in differential privacy \citep{avella2021differentially}. Our Conditions \ref{cond:1} and \ref{cond:2} are less restrictive than the design conditions considered in \cite{cai2019cost} and \cite{thakurta2013differentially}. This relaxation comes at the cost of reducing the stochastic batch size by a factor of $1/T$, analogous to the comparison between stochastic gradient descent and traditional gradient descent. As shown in Theorem \ref{thm-adapt}, the number of iterations $T$ satisfies $T = O(\log(n))$, which leads to an increase of $O(\log(n))$ in the error bound. Throughout our analysis, the privacy parameters $(\varepsilon,\delta)$ are allowed to depend on the sample size and are not assumed to be fixed constants. We now establish an error bound for the proposed estimation procedure.

\begin{Theorem}
\label{thm-adapt}
Assume that Conditions \ref{cond:1} and \ref{cond:2} hold. Let $R=c_{\sigma}\sqrt{2\log(n)}$, $B=4Rc_x$ and $C>c_0$, where $c_{\sigma}$ is a positive constant only depends on $L$, $c_0$ and the distribution of the random error $e_i$. Suppose that the parameters satisfy $K=O\big(\max\{\log_2(\sqrt{n}/\log(p)^2 ),1\}\big)$ and $T=\rho L^2\log(8c_0^2Ln)$ for some positive constant $\rho$. Assume further that the following sparsity, dimensionality, and privacy conditions hold: $2^K>\rho L^4s$, $s^2\log (p)\log (n)=o(n)$, $s^{1.5}\log (p)\sqrt{\log(1/\delta)}\log (n)^{3.5}/\varepsilon=o(n)$ and $\log(1/\delta)\log(n)^3/\varepsilon^2=o(n^{1/2})$. Let the constant $c_B$ in the BIC criterion be a sufficiently large constant. Then with probability at least $1-\exp\{-c_1\log (n)\}$, there exist constants $c_2,c_3,c_4$, such that
\begin{equation*}
	\begin{split}
		\|\hat{\boldsymbol{\beta}}-\boldsymbol{\beta}\|_2^2\leq&c_2\frac{s\log(p)\log(n)}{n}+c_3\frac{s^2\log(p)^2\log(1/\delta)\log(n)^7}{n^2\varepsilon^2}+c_4\frac{\log(n)^3}{n\varepsilon}.
	\end{split}
\end{equation*}
\end{Theorem}

The first two terms in the upper bound match the minimax lower bound established in \cite{cai2019cost}, up to a logarithmic factor in $n$. Compared with the algorithm of \cite{cai2019cost}, which assumes a known sparsity level $s$, our proposed algorithm introduces an additional term $\log(n)^3/(n\epsilon)$, arising from the large deviation of the added random variable $z_k$ in the BIC criterion. The extra $\log(n)$ factors in both the statistical error and privacy cost terms result from our use of data splitting in the estimation process, the output of $K$ estimates in total, and the application of BIC for selecting the ``optimal" model. Additional design assumptions can further reduce the privacy error. For example, under assumptions $|y_i|=O_p(1)$ \citep{talwar2015nearly} and $\|\boldsymbol{x}_i\|_2=O_p(1)$ \citep{dwork2014analyze}, we can use the full dataset in each iteration. As a result, the estimation error can be reduced to $ \|\hat{\boldsymbol{\beta}}-\boldsymbol{\beta}\|_2^2=O_p(s\log(p)/n+s^2\log(p)^2\log(1/\delta)\log(n)^2/(n^2\varepsilon^2)+\log(n)/(n\varepsilon))$ when $K=O(1)$. Such design assumptions are often reasonable in practice, as the data are typically normalized before analysis.

The choice of an upper bound on model complexity in the candidate model class ensures that over-parameterized models still converge, albeit potentially at slower rates. Let the upper bound on sparsity be denoted by $s_{\max} = 2^K$. For high-dimensional model selection using BIC, Theorem 1 of \citet{fan2013tuning} required that $s_{\max}=o\big(\sqrt{\frac{n}{\log(p)\log(n)}}\big)$ across all candidate models. For high-dimensional model selection using cross-validation, \citep{chetverikov2021cross} imposed a lower bound on the $\ell_1$ penalty, which serves a role similar to that of an upper bound on model complexity and ensures the convergence of candidate models. In our setting, we require $s_{\max} = O\big(\frac{\sqrt{n}}{\log^2(p)}\big)$. This condition is motivated by the sparsity requirement for the standard debiased Lasso. According to \citep{van2014asymptotically}, the model sparsity $s$ should satisfy $s=o(\frac{\sqrt{n}}{\log(p)})$, which is consistent with our condition.

The results in Theorem \ref{thm-adapt} do not rely on a minimum signal strength condition, which is commonly assumed in the high-dimensional tuning-parameter selection literature, see \cite{fan2013tuning}. A key advantage of the debiased estimator—introduced in \eqref{eq:db}—is that valid inference requires only a specific convergence rate of the estimators. Consequently, our BIC procedure needs only to ensure a reasonable convergence rate. Our results further show that the $\ell_2$ difference between the estimates and the true coefficients can be bounded by the minimax rates, with an additional term $\log(n)^3/(n\varepsilon)$ arising from privacy constraints.

\section{Differentially Private Confidence Interval}\label{sec: inference}

In this section, we construct a confidence interval for a particular regression coefficient $\beta_j$, for $j\in[p]$ under $(\varepsilon,\delta)$-DP. Following the debiased Lasso framework, we first estimate the precision matrix $\boldsymbol{\Omega}:=\boldsymbol{\Sigma}^{-1}$ in a privacy-preserving manner. The $j$th column of $\boldsymbol{\Omega}$, denoted by $\boldsymbol{w}_j$, satisfies the linear equation $\boldsymbol{e}_j=\boldsymbol{\Sigma}\boldsymbol{w}_j$, where $\boldsymbol{e}_j$ is the unit vector with its $j$th component equal to $1$ and all other components equal to $0$. Thus, $\boldsymbol{w}_j$ is the unique minimizer of the convex quadratic function
\begin{equation}
\label{eq:w}
\frac{1}{2}\boldsymbol{w}_j^{\top}\boldsymbol{\Sigma}\boldsymbol{w}_j-\boldsymbol{w}_j^{\top}\boldsymbol{e}_j.
\end{equation}

We propose to estimate $\boldsymbol{w}_j$ by solving the empirical version of \eqref{eq:w} with an $\ell_0$ constraint. Our method differs slightly from node-wise regression \citep{van2014asymptotically}, which first performs a regression of $x_j$ on $\boldsymbol{x}_{-j}$, where $\boldsymbol{x}_{-j}$ contains all columns of $\boldsymbol{x}$ except $x_j$, and then estimates the residual variance. A key advantage of our approach is that it directly estimates $\boldsymbol{w}_j$, thereby eliminating the need for an additional composition theorem to combine the private estimation of node-wise regression coefficients with the private estimation of residual variance.

\begin{condition}
\label{cond:3}
The $j$-th column of $\boldsymbol{\Omega}$ is sparse and satisfies $\|\boldsymbol{w}_j\|_0\leq s_j$ for $j\in[p]$.
\end{condition}

The sparsity assumption of the precision matrix is frequently adopted in the high-dimensional statistical inference literature \citep{zhang2014confidence, javanmard2014confidence, van2014asymptotically}. This condition is also essential for enabling differentially private precision matrix estimation. Note that the $\ell_2$ norm of $\boldsymbol{w}_j$ is bounded by $L$ under Condition \ref{cond:1}. Following the idea of Algorithm \ref{alg:linear}, we propose using a BIC criterion to select the optimal model when the sparsity level $s_j$ is unknown. The algorithm for estimating $\boldsymbol{w}_j$ is summarized in Algorithm \ref{alg3}. Lemma \ref{lem:privacy_adap_w} shows that, under certain regularity conditions, Algorithm \ref{alg3} is $(\varepsilon,\delta)$-DP and the theoretical properties of $\hat{\boldsymbol{w}}_j$ are established in Lemma \ref{lem:est_w_bic}.

\begin{algorithm}[!t]\caption{Adaptive Differentially Private Estimation of $\boldsymbol{w}_j$}\label{alg3}
\begin{algorithmic}[1]
	\Require Dataset $ \{\boldsymbol{x}_i\}_{i =1}^{n}$, candidate set size $K$, step size $\eta^0$, privacy parameters $(\varepsilon, \delta)$, noise scale $B$, number of iterations $T$, truncation level $R$, feasibility parameter $C$, initial value $\boldsymbol{w}_{ini}$, constant $c_B$ in BIC criterion.
	\State Data splitting: randomly split data into $T$ parts of roughly equal size, $[n] = \mathcal{S}_0\cup\dots\cup\mathcal{S}_{T-1}$, where $\mathcal{S}_i\cap\mathcal{S}_j=\emptyset$ for $i\neq j$.
	\For{$k$ in $0$ to $K$}
	\State Initialization: $s_j=2^k$, $\boldsymbol{w}_{j,k}^{(0)}=\boldsymbol{w}_{ini}$.
	\If{$k>0$}
	\State Warm start: $\boldsymbol{w}_{j,k}^{(0)}=\hat{\boldsymbol{w}}_j(k-1)$.
	\EndIf
	\For{$t$ in $0$ to $T-1$}
	\State Gradient descent: $\boldsymbol{w}_{j,k}^{(t + 0.5)} = \boldsymbol{w}_{j,k}^{(t)} -\eta^0\{\boldsymbol{e}_j-\sum_{i\in\mathcal{S}_t}\boldsymbol{x}_i\Pi_R(\boldsymbol{x}_i^{\top}\boldsymbol{w}_{j,k}^{(t)})/|\mathcal{S}_t|\}$.
	\State Private report: $\boldsymbol{w}_{j,k}^{(t+1)} = \Pi_C(\text{NoisyIHT}(\boldsymbol{w}_{j,k}^{(t+0.5)}, s_j, \varepsilon/\{T(K+2)\}, \delta/\{T(K+1)\}, \eta^0 B/|\mathcal{S}_t|))$.
	\EndFor
     \State Parameter clipping: $\hat{\boldsymbol{w}}_j(k)=\boldsymbol{w}_{j,k}^{(T)}/\max_i\{|\boldsymbol{x}_i^{\top}\boldsymbol{w}_{j,k}^{(T)}|/R,1\}$.
	\EndFor
	\State  Model selection:
	\begin{align*}
		\label{hbeta-adapt} \hat{\boldsymbol{w}}_j=\argmin_{\hat{\boldsymbol{w}}_j(k):0\leq k\leq K}\bigg[&\sum_{i=1}^{n}\{\Pi_R(\hat{\boldsymbol{w}}_j(k)^{\top}\boldsymbol{x}_i)\Pi_R(\boldsymbol{x}_i^{\top}\hat{\boldsymbol{w}}_j(k))/2-\hat{\boldsymbol{w}}_j(k)^{\top}\boldsymbol{e}_j\}+z_k\\
		&+c_B\bigg\{\log(p)\log(n)\cdot 2^k+\frac{\log(p)^2\cdot 2^{2k}\log(1/\delta)\log(n)^7}{n\varepsilon^2}\bigg\}\bigg]\nonumber,
	\end{align*}
	where $z_k\stackrel{i.i.d.}{\sim} \text{Laplace}\{(K+2)R^2/\varepsilon\}$.
	\Ensure{$\hat{\boldsymbol{w}}_j$.}
\end{algorithmic}
\end{algorithm}

\begin{lemma}
\label{lem:privacy_adap_w}
Under Conditions \ref{cond:1}, \ref{cond:2}, and $B\geq 2Rc_x$, Algorithm \ref{alg3} is $(\varepsilon,\delta)$-DP.
\end{lemma}

\begin{lemma}
\label{lem:est_w_bic} 
Assume that Conditions \ref{cond:1}, \ref{cond:2} and \ref{cond:3} hold. Let $B=2Rc_x$, $C>L$ and $R=C_1\sqrt{\log(n)}$ for a constant $C_1$. Suppose that the tuning parameters satisfy $K=O\big(\max\{\log_2(\sqrt{n}/\log(p)^2 ),1\}\big)$ and $T=\rho L^2\log(8L^3n)$ for some positive constant $\rho$. Assume further that the following sparsity, dimensionality, and privacy conditions hold: $2^K>\rho L^4s_j$, $T=\rho L^2\log(8L^3n)$, $s_j^2\log (p)\log (n)=o(n)$ and $s_j^{1.5}\log (p)\sqrt{\log(1/\delta)}\log (n)^{3.5}/\varepsilon=o(n)$ and $\log(1/\delta)\log(n)^3/\varepsilon^2=o(n^{1/2})$. Let the constant $c_B$ in the BIC criterion be a sufficiently large constant. Then, with probability at least $1-\exp\{-c_1\log (n)\}$, there exist constants $c_2,c_3,c_4$, such that
\begin{equation*}
	\begin{split}
		\|\hat{\boldsymbol{w}}_j-\boldsymbol{w}_j\|_2^2\leq&c_2\frac{s_j\log (p)\log (n)}{n}+c_3\frac{s_j^2\log (p)^2\log(1/\delta)\log(n)^7}{n^2\varepsilon^2}+c_4\frac{\log(n)^3}{n\varepsilon}.
	\end{split}
\end{equation*}
\end{lemma}

One can show that the first two terms in the error bound of Lemma \ref{lem:est_w_bic} match the minimax lower bound, up to a logarithmic factor of $n$, using the ``tracing attack" technique developed in \cite{cai2019cost}. Compared with the case where the sparsity level $s_j$ is known, the proposed algorithm introduces an additional factor of $\log(n)$ in the privacy cost component of the error bound due to the DP-BIC selection step. The privacy error can be further reduced under additional assumptions, as discussed earlier.

After obtaining the private estimator $\hat{\boldsymbol{w}}_j$, we propose the following differentially private debiased estimator to facilitate private inference:
\begin{equation}
\label{eq:db}
\hat{\beta}_j^{(db)}=\hat{\beta}_j+\frac{1}{n}\sum_{i=1}^{n} \Pi_R(\boldsymbol{x}_i^{\top}\hat{\boldsymbol{w}}_j)(\Pi_R(y_i)-\Pi_R(\boldsymbol{x}_i^{\top}\hat{\boldsymbol{\beta}}))+z_j^{(db)},
\end{equation} 
where $\hat{\beta}_j^{(db)}$ denotes the debiased estimator of the $j$th component, $\hat{\beta}_j$ is the $j$th component of $\hat{\boldsymbol{\beta}}$, and $z_j^{(db)}\sim N(0,(4R^2/n)^2\cdot 2\log(1.25/\delta)/\varepsilon^2)$. Unlike the non-private debiased estimator in \cite{van2014asymptotically}, the proposed estimator \eqref{eq:db} incorporates additional random noise $z_j^{(db)}$ to guarantee $(\varepsilon,\delta)$-DP, since the debiasing step involves the dataset. Given $\hat{\boldsymbol{w}}_j$ and $\hat{\boldsymbol{\beta}}$, the debiased estimator $\hat{\beta}_j^{(db)}$ is $(\varepsilon,\delta)$-DP by the Gaussian mechanism. Owing to privacy constraints, the variance analysis of $\hat{\beta}_j^{(db)}$ differs from that in \cite{van2014asymptotically}. The following lemma provides theoretical insights into the decomposition of the private debiased estimator $\hat{\beta}^{(db)}_j$.

\begin{lemma}[Limiting distribution of the private debiased estimator]
\label{lem:three}
Assume the same conditions as in Theorem \ref{thm-adapt} and Lemma \ref{lem:est_w_bic}. Let $s_0=\max\{s,s_j\}$, $R=\max\{c_{\sigma},L\}\sqrt{2\log(n)}$. Then
\begin{align*}
\sqrt{n}(\hat{\beta}^{(db)}_j-\beta_j)=u_j+v_j+\sqrt{n}z_j^{(db)}, 
\end{align*}
where $u_j\stackrel{D}{\to}N(0,\Omega_{j,j}\sigma^2)$ and is independent of $z_j^{(db)}$, $\Omega_{j,j}$ is the $(j,j)$th entry of the precision matrix $\boldsymbol{\Omega}$, and $\sigma^2$ is the variance of $e_i$ in the linear model. Let
$$r_n=s_0\log(p)\log(n)/n^{1/2}+s_0^{2}\log(p)^2\log(1/\delta)\log(n)^7/(n^{1.5}\varepsilon^2)+\log(n)^3/(n^{1/2}\varepsilon).$$
The remainder term satisfies $v_j=O_p(\max(r_n^{1/2},r_n))$.
\end{lemma}

Therefore, we need to estimate the variance $\Omega_{j,j}\sigma^2$ in a differentially private manner in order to construct a differentially private confidence interval. Note that an estimate of $\Omega_{j,j}$ can be directly obtained from $\hat w_{j,j}$, the $j$th component of $\hat{\boldsymbol{w}}_j$. Thus, we only need to estimate $\sigma^2$. We propose the following differentially private estimator of $\sigma^2$:
\begin{equation*}
\hat\sigma^2=\frac{1}{n}\sum_{i=1}^n\{\Pi_R(y_i)-\Pi_R(\boldsymbol{x}_i^\top\hat{\boldsymbol{\beta}})\}^2+z,
\end{equation*}
where $z\sim N(0,\{\frac{2(2R)^2}{n}\}^2\cdot\frac{2\log(1.25/\delta)}{\varepsilon^2})$. The added noise term $z$ ensures that the estimate $\hat\sigma^2$ satisfies $(\varepsilon,\delta)$-DP.

For the reader's convenience, we summarize the complete algorithm for constructing a differentially private confidence interval for $\beta_j$ in Algorithm \ref{algo:inference}. The algorithm consists of four steps, with an allocated privacy budget of $(\varepsilon/4,\delta/4)$ for each step: (1) estimating the regression parameter $\hat{\boldsymbol{\beta}}$; (2) estimating the corresponding column of the precision matrix, $\hat{\boldsymbol{w}}_j$; (3) computing the debiased estimator $\hat{\beta}_j^{(db)}$; and (4) estimating the standard error of the debiased estimator. Since each of these four steps is $(\varepsilon/4,\delta/4)$-DP, Algorithm \ref{algo:inference} as a whole satisfies $(\varepsilon,\delta)$-DP. The privacy budget allocation is flexible and can be adjusted in practice depending on specific requirements. The overall privacy guarantee, along with the nominal coverage of the proposed confidence interval, is given in Theorem \ref{thm:main}.

\begin{algorithm}[ht]	\caption{$(1-\alpha)\times 100\%$ differentially private confidence interval for $\beta_j$}\label{algo:inference}

\begin{algorithmic}[1]
	\Require{Dataset $ \{(\boldsymbol{x}_i,  y_i)\}_{i=1}^{n}$, privacy parameters $(\varepsilon, \delta)$, confidence level $\alpha$, truncation level $R$.}	
	\State Compute $\hat{\boldsymbol{\beta}}$ using Algorithm \ref{alg:linear} with privacy parameters $(\varepsilon/4,\delta/4)$ and tuning parameters defined in Theorem \ref{thm-adapt}.
	\State Compute $\hat{\boldsymbol{w}}_j$ using Algorithm \ref{alg3} with privacy parameters $(\varepsilon/4,\delta/4)$ and tuning parameters defined in Lemma \ref{lem:est_w_bic}.
	\State Debiased estimator:  
	\begin{align*}
		\hat{\beta}_j^{(db)}=\hat{\beta}_j+\frac{\sum_{i=1}^{n} \Pi_R(\boldsymbol{x}_i^{\top}\hat{\boldsymbol{w}}_j)(\Pi_R(y_i)-\Pi_R(\boldsymbol{x}_i^{\top}\hat{\boldsymbol{\beta}}))}{n}+z_j^{(db)},
	\end{align*}
	where $z_j^{(db)}\sim N(0,16(4R^2/n)^2\cdot 2\log(4\times 1.25/\delta)/\varepsilon^2)$.
	\State Compute confidence interval:
	\begin{equation}
		\label{eq-ci}
		I_j = [\hat{\beta}^{(db)}_j- z_{1-\alpha/2}\sqrt{\widehat{V}_j},     \;\;   \hat{\beta}^{(db)}_j+ z_{1-\alpha/2}\sqrt{\widehat{V}_j}], 
	\end{equation}
	where $\hat V_j$ is defined as $$
	\hat V_j^2=\frac{\hat{w}_{j,j}\hat\sigma^2}{n},$$ 
	and $z_{1-\alpha/2}$ is the $(1-\alpha/2)$th quantile of the standard normal distribution, and $\hat{\sigma}^2=\frac{1}{n}\sum_{i=1}^n(\Pi_R(y_i)-\Pi_R(\boldsymbol{x}_i^\top\hat{\boldsymbol{\beta}}))^2+z$, where $z\sim N(0,16\{2(2R)^2/n\}^2\cdot 2\log(4\times 1.25/\delta)/\varepsilon^2)$.
	\Ensure{$I_j$.}
\end{algorithmic}
\end{algorithm}

\begin{Theorem}[Validity of the proposed CI]
\label{thm:main}
Under Conditions \ref{cond:1} and \ref{cond:2}, Algorithm~\ref{algo:inference} is $(\varepsilon,\delta)$-DP. Under the assumptions of Lemma \ref{lem:three}, and we assume $s_0\log (p)\log (n)/\sqrt n=o(1)$, $s_0^2\log^2 (p)\log(1/\delta)\log(n)^{7}/(n^2\varepsilon^2)=o(n^{-1/2})$, $\log(n)^3/\varepsilon=o(n^{1/2})$, $\log(n)\log(1/\delta)^{1/2}/\varepsilon=o(n^{1/2})$. We have 
$$
\lim_{n\to\infty}\Pro(\beta_j\in I_j)=1-\alpha.
$$
\end{Theorem}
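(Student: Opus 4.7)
The plan is to split the proof into the two claims of the theorem: the privacy guarantee and the asymptotic coverage. The privacy part is essentially a composition argument, while the coverage part reduces, via the decomposition already established in Lemma \ref{lem:three}, to proving consistency of the two variance-related quantities $\hat w_{j,j}$ and $\hat\sigma^2$ and then invoking Slutsky's theorem.

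For privacy, I would argue step by step through the four components of Algorithm \ref{algo:inference}, each run at budget $(\varepsilon/4,\delta/4)$. Step 1 is $(\varepsilon/4,\delta/4)$-DP by Lemma \ref{lem:privacy-adapt}; step 2 is $(\varepsilon/4,\delta/4)$-DP by Lemma \ref{lem:privacy_adap_w}. For step 3, conditional on the private outputs $\hat\bbeta$ and $\hat\bw_j$ from the first two steps, the only data-dependent quantity inside $\hat\beta_j^{(db)}$ is $\frac{1}{n}\sum_i \Pi_R(\bx_i^\top\hat\bw_j)(\Pi_R(y_i)-\Pi_R(\bx_i^\top\hat\bbeta))$; a single-record change alters one summand by at most $2R\cdot 2R=4R^2$, giving $\ell_2$-sensitivity $4R^2/n$, so the stated Gaussian noise variance $16(4R^2/n)^2\cdot 2\log(5/\delta)/\varepsilon^2$ yields $(\varepsilon/4,\delta/4)$-DP by Lemma \ref{lemma:laplace_gaussian_mechanism} and post-processing. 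Step 4 is analogous: the summand $(\Pi_R(y_i)-\Pi_R(\bx_i^\top\hat\bbeta))^2\in[0,4R^2]$ gives sensitivity at most $8R^2/n$ after the neighboring-dataset substitution, matching the variance in the displayed formula. Simple composition (Lemma \ref{lem:post_combin_dp}) then combines the four $(\varepsilon/4,\delta/4)$ guarantees into overall $(\varepsilon,\delta)$-DP.

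For the coverage claim, Lemma \ref{lem:three} already delivers $\sqrt{n}(\hat\beta_j^{(db)}-\beta_j)=u_j+v_j$ with $u_j\xrightarrow{D}N(0,\Omega_{j,j}\sigma^2)$. Under the tightened sparsity conditions in the theorem, each of the two terms comprising $v_j$ is $o_p(1)$, hence $\sqrt n(\hat\beta_j^{(db)}-\beta_j)\xrightarrow{D}N(0,\Omega_{j,j}\sigma^2)$. It remains to show $\hat w_{j,j}\xrightarrow{P}\Omega_{j,j}$ and $\hat\sigma^2\xrightarrow{P}\sigma^2$. The former is immediate from coordinate-wise extraction of the $\ell_2$ bound in Lemma \ref{lem:est_w_bic}. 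For the latter, I would decompose
\begin{equation*}
\hat\sigma^2=\frac{1}{n}\sum_{i=1}^n e_i^2 + \frac{2}{n}\sum_{i=1}^n e_i\bx_i^\top(\bbeta-\hat\bbeta) + \frac{1}{n}\sum_{i=1}^n(\bx_i^\top(\bbeta-\hat\bbeta))^2 + \text{(truncation remainder)} + Z,
\end{equation*}
then argue that the law of large numbers handles the first term, the Cauchy-Schwarz inequality together with Theorem \ref{thm-adapt} controls the cross and quadratic terms, the truncation remainder is negligible because $R=\sigma\sqrt{2\log n}$ dominates $\max_i|y_i|$ and $\max_i|\bx_i^\top\hat\bbeta|$ with high probability under sub-Gaussianity, and the Gaussian noise $Z$ has standard deviation $O(R^2/(n\varepsilon))=o(1)$. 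Combining these with Slutsky's theorem yields $\sqrt n(\hat\beta_j^{(db)}-\beta_j)/\sqrt{\hat w_{j,j}\hat\sigma^2}\xrightarrow{D}N(0,1)$, which gives the nominal asymptotic coverage.

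The main obstacle I anticipate is controlling the truncation remainder inside $\hat\sigma^2$ uniformly in $i$, since the projection $\Pi_R$ is nonlinear and $\hat\bbeta$ is random and data-dependent; the argument requires showing that the event $\{\Pi_R(y_i)=y_i,\,\Pi_R(\bx_i^\top\hat\bbeta)=\bx_i^\top\hat\bbeta\text{ for all }i\}$ holds with probability tending to one, which in turn uses the sub-Gaussian tail bound for $y_i$, Condition \ref{cond:1} on $\|\bx_i\|_\infty$, and the $\ell_2$ consistency of $\hat\bbeta$ from Theorem \ref{thm-adapt}. Everything else (sensitivity verification, composition, application of Slutsky) is essentially bookkeeping.
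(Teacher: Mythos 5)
Your argument mirrors the paper's proof: privacy follows from the same per-step sensitivity bounds ($4R^2/n$ for the debiasing correction and $8R^2/n$ for the variance estimator) combined via simple composition, while coverage follows from Lemma~\ref{lem:three}, consistency of $\hat{w}_{j,j}$ (from Lemma~\ref{lem:est_w_bic}) and of $\hat\sigma^2$, and Slutsky's theorem. The only difference is cosmetic: in the consistency argument for $\hat\sigma^2$ the paper expands $(\Pi_R(y_i)-\Pi_R(\bx_i^\top\hat\bbeta))^2$ around $\bx_i^\top\bbeta$ with the truncation kept inside each term (and applies the LLN and Theorem~\ref{thm-adapt} directly to the truncated pieces), whereas you strip the truncation off as an explicit remainder before expanding $(y_i-\bx_i^\top\hat\bbeta)^2$; both reduce to the same LLN plus Cauchy--Schwarz plus Theorem~\ref{thm-adapt} bounds, and if anything your treatment of the truncation's negligibility is more explicit than the paper's.
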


Theorem \ref{thm:main} shows that the proposed algorithm achieves asymptotic nominal coverage while ensuring privacy. The condition $s_0\log (p)\log (n)/\sqrt n=o(1)$ matches that assumed in the non-private debiased Lasso of \cite{cai2017confidence}, while the additional rate conditions arise from privacy constraints. The choice of the upper bound $K$ in Algorithms \ref{alg:linear} and \ref{alg3} is crucial for obtaining the $\ell_1$ bound of the estimation error $\hat{\boldsymbol{\beta}}-\boldsymbol{\beta}$ and $\hat{\boldsymbol{w}}_j-\boldsymbol{w}_j$, which are key to deriving the debiased estimator. These conditions can be relaxed to $K = \log_2(n/\log(p))$ by applying the data-splitting technique of \cite{chernozhukov2018double}.

The condition $\log(n)\log(1/\delta)^{1/2}/\varepsilon=o(n^{1/2})$ in Theorem \ref{thm:main} ensures that the variance of $\sqrt{n}z_j^{(db)}$ is $o(1)$, and further implies that the asymptotic variance of $\sqrt{n}(\hat{\beta}^{(db)}_j-\beta_j)$ equals that of the non-private debiased estimator. Nevertheless, in finite samples, we recommend incorporating a minor correction by including the variance of $\sqrt{n}z_j^{(db)}$ in the confidence interval to improve finite-sample performance. A similar approach was previously discussed by \cite{avella2021differentially} in the context of low-dimensional noisy gradient descent and noisy Newton's method algorithms. 

Note that the proposed debiased estimator \eqref{eq:db} incorporates an additional noise $z_j^{(db)}$, generated from a Gaussian distribution with known variance. The confidence interval with finite-sample correction is defined by accounting for the variance of $z_j^{(db)}$ as follows:
\begin{equation}\label{eq: correction}
I_j = [\hat{\beta}^{(db)}_j- z_{1-\alpha/2}\sqrt{\widehat{V}_j/n+V_c},     \;\;   \hat{\beta}^{(db)}_j+ z_{1-\alpha/2}\sqrt{\widehat{V}_j/n+V_c}], 
\end{equation}
where $V_c=16(\frac{4R^2}{n})^2\cdot\frac{2\log(4\times 1.25/\delta)}{\varepsilon^2}$ represents the variance of $z_j^{(db)}$. Since $V_c$ is small by assumption, it is dominated by $\widehat{V}_j/n$ as $n \to \infty$. Consequently, the corrected confidence interval remains asymptotically efficient relative to the debiased Lasso. However, in small samples, the effect of the additional noise should be taken into account, as demonstrated in the simulation study.

\section{Differentially Private FDR Control}\label{sec: fdr}

Under differential privacy constraints, it is crucial to perform parameter selection with FDR control and to release debiased estimators only for the selected subset of parameters. In Section \ref{sec: inference}, we consider inference for a particular $\beta_j$ by constructing a debiased estimator $\hat{\beta}_j^{(db)}$ under $(\varepsilon, \delta)$-DP. For commonly used privacy parameters—such as $\varepsilon = 1$ and $\delta = n^{-1-\kappa}$ for some $\kappa > 0$—the composition theorem implies that releasing the full set of debiased estimators $\{\hat{\beta}_j^{(db)}\}_{j=1}^{p}$ would require allocating a privacy budget of $(\varepsilon/p, \delta/p)$ to each individual estimator. Such an allocation induces a large privacy error in estimating $\hat{\boldsymbol{w}}_j$ (defined in Lemma \ref{lem:est_w_bic}), making it impossible to obtain a consistent estimator of $\boldsymbol{w}_j$ and breaking the validity of the inference procedure. These observations underscore the necessity of variable selection with FDR control and the release of debiased estimators only for selected parameters.

False Discovery Rate (FDR) control with privacy guarantees in high-dimensional linear models is a challenging problem. Existing approaches to differentially private FDR control \citep{dwork2018differentially, xia2023fdr} require mutual independence of $p$-values under the null hypotheses, an assumption that does not necessarily hold in linear regression settings. Our approach draws inspiration from the recent advancements in mirror statistics \citep{dai2022false, dai2023scale}. In particular, the use of sample splitting and post-selection techniques enables effective dimensionality reduction, transforming a high-dimensional problem into one of substantially lower dimension. This reduction, in turn, allows us to more efficiently manage the scale of noise required for privacy preservation.

Specifically, we divide the data into two parts, denoted by $\mathcal{D}_1$ and $\mathcal{D}_2$. We first apply the high-dimensional DP-SLR algorithm to $\mathcal{D}_1$. The resulting estimator is denoted by $\tilde{\boldsymbol{\beta}}_{(1)}$, with its support defined as $\mathcal{A}:=\{j\in[p]:\tilde{\beta}_{(1)j}\neq 0\}$, where $\tilde{\beta}_{(1)j}$ is the $j$th component of $\tilde{\boldsymbol{\beta}}_{(1)}$. We then use $\mathcal{D}_2$ to fit a differentially private ordinary least squares (DP-OLS) model based on the estimated active set $\mathcal{A}$, and denote the resulting estimator by $\tilde{\boldsymbol{\beta}}_{(2)}$. For each $j\in\mathcal{A}$, we define the mirror statistic $M_j$ as $M_j:=\text{sign}(\tilde{\beta}_{(1)j}\tilde{\beta}_{(2)j})f(|\tilde{\beta}_{(1)j}|,|\tilde{\beta}_{(2)j}|)$. Following \cite{dai2022false}, the function $f$ can be chosen as $f(u,v)=2\min(u,v)$, $f(u,v)=uv,$ or $f(u,v)=u+v$. The data-driven cutoff $\tau_q$ is defined as 
\begin{equation*}
	\tau_q:=\min\bigg\{t>0:\frac{\#\{j:M_j<-t,j\in\mathcal{A}\}}{\#\{j:M_j>t,j\in\mathcal{A}\}\lor 1}\leq q\bigg\},
\end{equation*}
where $q$ is the target FDR level and $\#$ denotes the cardinality of a set. We select the subset of variables $\mathcal{A}_{\tau_q}=\{j\in\mathcal{A}:M_j>\tau_q\}$ as the important variables. Let $\mathcal{S}:=\{j\in[p]:\beta_j\neq 0\}$ denote the true support set, and let $\bar{\mathcal{S}}=[p]-\mathcal{S}$ denote its complement. The \textit{false discovery proportion} (FDP), FDR and \textit{power} of the proposed selection procedure are defined as
\begin{equation*}
\text{FDP}(\mathcal{A}_{\tau_q}):=\frac{|\mathcal{A}_{\tau_q}\cap\bar{\mathcal{S}}|}{|\mathcal{A}_{\tau_q}|\lor 1},\quad\text{FDR}(\mathcal{A}_{\tau_q})=\mathbb{E}\{\text{FDP}(\mathcal{A}_{\tau_q})\}, \quad \text{Power}(\mathcal{A}_{\tau_q}):=\frac{|\mathcal{A}_{\tau_q}\cap\mathcal{S}|}{|\mathcal{S}|}.
\end{equation*}
We summarize the details of the algorithm in Algorithm \ref{alg:fdr}. The privacy guarantee of the proposed procedure is established in Lemma \ref{lem:privacy_fdr}.

\begin{lemma}
\label{lem:privacy_fdr}
Assume Conditions \ref{cond:1} and \ref{cond:2} hold. Then Algorithm \ref{alg:fdr} is $(2\varepsilon,2\delta)$-DP provided that $B_1\geq 4|\mathcal{A}|c_x^2/n$ and $B_2\geq 4R\sqrt{|\mathcal{A}|}c_x /n$.
\end{lemma}

Under mild conditions, the proposed method asymptotically controls the FDR at a user-specified level $q$, while the power approaches $1$. We summarize these results in Theorem \ref{thm:fdr}.

\begin{Theorem}
\label{thm:fdr}
Suppose the conditions in Theorem \ref{thm-adapt} hold and assume that $\hat{s}^3\sqrt{\log(1/\delta)}/\varepsilon=o(n^{1/2})$, where $\hat{s}$ denotes the size of the selected support set $\mathcal{A}$. If the signal strength satisfies: \begin{equation*}
	\begin{split}
		\min_{j\in\mathcal{S}}|\beta_j|\gg&\max\{\sqrt{s\log (p)\log(n)/n},s\log(p)\log(n)^{3.5}\log(1/\delta)^{0.5}/(n\varepsilon),\log(n)^{1.5}/\sqrt{n\varepsilon}\},
	\end{split}
\end{equation*}
where the true support set is defined as $\mathcal{S}:=\{j\in[p]:\beta_j\neq 0\}$, then the output of Algorithm \ref{alg:fdr} satisfies $\limsup_{n,p\to\infty}{\rm FDR}(\mathcal A_{\tau_q})\leq q$, for any nominal FDR level $q\in(0,1)$.

Moreover, if the signal strength further satisfies
$$\min_{j\in\mathcal{S}}|\beta_j|\gg\max\{\hat{s}^{1/2}\log(n)^{1/2},\hat{s}^{3/2}\}\sqrt{\log(1/\delta)}/(n\varepsilon),$$
then $\liminf_{n,p\to\infty}{\rm Power}(\mathcal A_{\tau_q})=1.$
\end{Theorem}

The first minimal signal strength condition guarantees the SURE screening property \citep{fan2008sure}, i.e., the set $\mathcal{A}$ contains all active coefficients. This property is essential for controlling the FDR in high-dimensional linear models; see \cite{barber2019knockoff} and \cite{dai2022false}. A critical requirement for valid FDR control is that the linear model continues to hold conditional on the selected set $\mathcal{A}$. By employing a data-splitting strategy, this condition can be relaxed to require only that the selected set $\mathcal{A}$ contains all active coefficients with high probability. The sparsity assumption ensures the consistency of the DP-OLS estimator. Similar conditions were imposed by \cite{dwork2014analyze} for the consistent estimation of covariance matrices. This requirement can be satisfied by choosing an appropriate upper bound for the sparsity level $2^K$ in Algorithm \ref{alg:fdr}. Since the target of the first-stage estimation is to prescreen the data, we can instead apply the algorithm of \cite{cai2019cost} with a conservative choice of sparsity level. For the power analysis, a minimal signal strength condition is also necessary to account for the estimation error inherent in the DP-OLS procedure.
	
Regarding DP-FDR control, we acknowledge the latest mirror statistics developed in \cite{dai2023scale}. However, directly implementing the algorithm of \cite{dai2023scale} would result in the noise required for privacy overwhelming the signals. This is because DP-FDR control requires the screening step to reduce the number of tests to a moderate level, ensuring that the amount of noise needed remains manageable. See, for example, the peeling algorithm in \cite{dwork2018differentially} and the mirror-peeling algorithm in \cite{xia2023fdr}.
	
\begin{algorithm}[ht]\caption{Differentially Private False Discovery Rate Control}\label{alg:fdr}
\begin{algorithmic}[1]
	\Require{Dataset $ \{(\boldsymbol{x}_i,  y_i)\}_{i}^{n}$, privacy parameters $(\varepsilon, \delta)$, noise scale $B_1$ and $B_2$, target FDR $q$.}
	\State  Data splitting: randomly split data into $\mathcal{D}_1$ and $\mathcal{D}_2$, each of roughly equal size.
	\State Compute $\tilde{\boldsymbol{\beta}}_{(1)}$ using DP-SLR with data $\mathcal{D}_1$ with privacy parameters $(\varepsilon,\delta)$. Denote the support set of $\tilde{\boldsymbol{\beta}}_{(1)}$ by $\mathcal{A}$.
	\State Estimate
	\begin{equation*}
		\begin{split}
			\tilde{\boldsymbol{\beta}}_{(2)\mathcal{A}}:=&\big(\sum_{i\in\mathcal{D}_2}\boldsymbol{x}_{i,\mathcal{A}}\boldsymbol{x}_{i,\mathcal{A}}^{\top}/|\mathcal{D}_2|+\boldsymbol{N}_{XX}\big)^{-1}\times\big(\sum_{i\in\mathcal{D}_2}\boldsymbol{x}_{i,\mathcal{A}}^{\top}\Pi_R(y_i)/|\mathcal{D}_2|+\boldsymbol{N}_{XY} \big),
		\end{split}
	\end{equation*}
	where $\boldsymbol{x}_{i,\mathcal{A}}$ is the subvector of $\boldsymbol{x}_i$ corresponding to the index set $\mathcal{A}$. The matrix $\boldsymbol{N}_{XX}$ is a $|\mathcal{A}|\times|\mathcal{A}|$ symmetric matrix with i.i.d. entries drawn from $N(0,B_1^2\cdot8\log(2.5/\delta)/\varepsilon^2)$, and $\boldsymbol{N}_{XY}$ is a $|\mathcal{A}|\times 1$ vector with i.i.d. entries drawn from $N(0,B_2^2\cdot8\log(2.5/\delta)/\varepsilon^2)$.
	\State For each $j\in\mathcal{A}$, compute the mirror statistic $M_j$ by $$M_j=\text{sign}(\tilde{\beta}_{(1)j}\tilde{\beta}_{(2)j})f(|\tilde{\beta}_{(1)j}|,|\tilde{\beta}_{(2)j}|);$$
	\State   Let the data-driven cutoff $\tau_q$ be defined as $$\tau_q:=\min\bigg\{t>0:\frac{\#\{j:M_j<-t,j\in\mathcal{A}\}}{\#\{j:M_j>t,j\in\mathcal{A}\}\lor 1}\leq q\bigg\};$$
	
	\Ensure{subset $\mathcal{A}_{\tau_q}=\{j\in\mathcal{A}:M_j>\tau_q\}$}.
\end{algorithmic}
\end{algorithm}

\section{Numeric Study}
\label{sec:numeric}
\subsection{Simulation}

We evaluate the finite-sample behavior of the private debiased procedure for inference on individual regression coefficients, as well as the false discovery rate of the selection procedure. In the simulation study, we consider linear models where the rows of the covariate matrix  $\boldsymbol{X}$ are i.i.d. drawn from $N(\boldsymbol{0},\boldsymbol{\Sigma})$. The response variable $\boldsymbol{y}$ is generated according to the linear model $\boldsymbol{y}=\boldsymbol{X}\boldsymbol{\beta}+\boldsymbol{e}$, where $\boldsymbol{y}\in\mathbb{R}^n$, $\boldsymbol{X}\in\mathbb{R}^{n\times p}$, and $\boldsymbol{e}\in\mathbb{R}^n$.

\subsubsection{Debiased Inference}

We first evaluate the performance of the proposed debiased procedure under two designs. Consider the Toeplitz covariance matrices (AR) and the block equicorrelated covariance matrices for the design matrix:
\begin{equation*}
\begin{split}
	\text{Toeplitz: }&\Sigma_{j,k}=\rho^{|j-k|}\text{ for }j,k\in\{1,\dots,p\}\\
	\text{Block equicorrelated: }&\Sigma = I_{p/4}\otimes\bigl((1-\rho)I_{4}+\rho J_{4}\bigr),
\end{split}
\end{equation*}
where $\otimes$ denotes the Kronecker product, so that $\Sigma$ is block diagonal with $p/4$ identical $4\times 4$ equi-correlated blocks. The active set has cardinality $s_0=|S_0|=3$ and is given by $S_0=\{1,2,3\}$. The nonzero regression coefficients are fixed at $1$. The errors are independently drawn from $N(0,1)$. The sample size is set to $n=2000$, and the number of covariates is $p=2000$. The privacy parameters are $\varepsilon=4$ and $\delta=1/n^{1.1}$ for each coordinate. The number of candidate models is $K=2$ for the debiased inference. The number of iterations is $T=2$, and the step size is $\eta^0=4$.
For comparison, we report coverages and interval lengths for three methods: DB-Lasso, which is the debiased Lasso method in \cite{van2014asymptotically}; DP naive, which is the proposed DP debiased procedure (Algorithm \ref{algo:inference}) without finite-sample correction; and DP correction, the proposed debiased procedure with finite-sample correction in formula (\ref{eq: correction}). All the results are based on $100$ independent repetitions of the model with random design and fixed regression coefficients.

\begin{figure}[ht]
\centering
\includegraphics[width=0.5\linewidth]{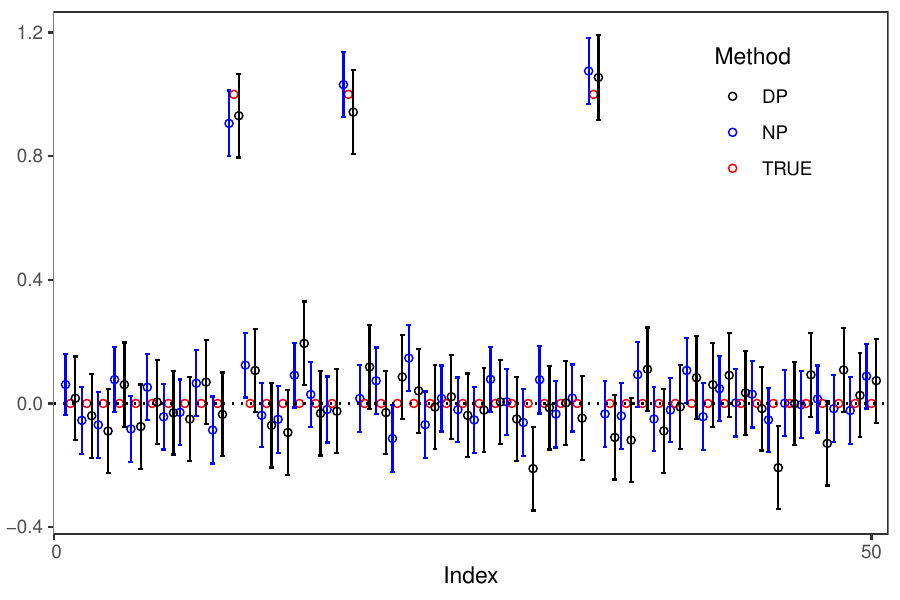}
\caption{ 95\% confidence intervals for one realization of DP correction under the AR covariance structure with $\rho=0.2$ for the first $50$ regression parameters. The true parameters are denoted in red, and the debiased estimators are denoted in black.}
\label{fig:1}
\end{figure}

To demonstrate the effectiveness of the proposed debiased procedure, we randomly select the active set from the first $50$ coordinates and present the estimated confidence intervals for these coordinates in one particular realization in Figure \ref{fig:1}. Notably, the estimated confidence intervals cover all signals, which correspond to the first three coordinates. The overall coverage for the coordinates shown in the figure is approximately $95\%$. Additional results are reported in Table \ref{tab:1} for the Toeplitz covariance design and in Table \ref{tab:2} for the equicorrelated design.

\begin{table}[ht]
\centering
\begin{tabular}{llcccc}
	\hline
	Measure&Method & $\rho=0.0$ & $\rho=0.2$ & $\rho=0.4$ & $\rho=0.6$  \\ 
	\hline
	\multirow{3}{*}{Avgcov}&DB-Lasso &0.950 & 0.954 & 0.959 & 0.966 \\
	&DP naive&0.823 & 0.829 & 0.845 & 0.896 \\ 
	&DP correction &0.951 & 0.951 & 0.949 & 0.964 \\ \hline
	\multirow{3}{*}{Avglength}&DB-Lasso &0.087 & 0.089 & 0.098 & 0.117\\ 
	&DP naive&0.087 & 0.089 & 0.097 & 0.112\\ 
	&DP correction & 0.126 & 0.127 & 0.133 & 0.145\\ \hline
\end{tabular}
\caption{Average coverage and length of the 95\% confidence interval under the Toeplitz covariance matrix.}
\label{tab:1}
\end{table}

\begin{table}[ht]
\centering
\begin{tabular}{llcccc}
	\hline
	Measure&Method & $\rho=0.05$ & $\rho=0.10$ & $\rho=0.15$ & $\rho=0.20$  \\ 
	\hline
	\multirow{3}{*}{Avgcov}&DB-Lasso &0.958 & 0.960 & 0.960 & 0.961\\
	&DP naive & 0.843 & 0.826 & 0.824 & 0.820 \\
	&DP correction &0.949 & 0.950 & 0.939 & 0.945 \\
	\hline
	\multirow{3}{*}{Avglength}&DB-Lasso &0.089 & 0.091 & 0.093 & 0.096 \\ 
	&DP naive &0.096 & 0.088 & 0.096 & 0.089 \\ 
	&DP correction &0.132 & 0.127 & 0.132 & 0.127 \\   \hline
\end{tabular}
\caption{Average coverage and length of the 95\% confidence interval under the blocked equal covariance matrix. }
\label{tab:2}
\end{table}

The numeric performance of our proposed debiased procedure exhibits remarkable similarity between the Toeplitz covariance design (Table \ref{tab:1}) and the equal correlation design (Table \ref{tab:2}). Notably, the coverage rates for DP naive fall significantly below the $95\%$ benchmark, empirically confirming our intuition that additional correction is necessary for finite samples, as discussed in Section \ref{sec: inference}. In contrast, the DP correction method achieves substantially improved coverage compared to DP naive, albeit with wider confidence intervals. The corrected confidence intervals are approximately $30\%$ wider than those of DP naive. The interval length for DP correction is approximately $30\%$ greater than that of DB-Lasso, reflecting the efficiency loss introduced by privacy constraints. Overall, the proposed method exhibits coverage rates of roughly $95\%$ with only a marginal reduction in efficiency.

\subsubsection{FDR Control}

Next, we evaluate the algorithm's performance in controlling the FDR. To assess its effectiveness, we consider the Toeplitz covariance matrices. The active set, denoted as $S_0$, consists of $|S_0|=30$ covariates randomly chosen from the full set of covariates. The nonzero regression coefficients $\beta_j$ for $j\in S_0$ are independently sampled from a normal distribution with mean zero and standard deviation $\xi$, where $\xi$ represents the signal strength. The errors in the linear model are assumed to follow $N(0, 1)$. The sample size is set to $n=10,000$, and the number of covariates is $p=10,000$. The privacy parameters are set to $\varepsilon=4$ and $\delta=1/n^{1.1}$, and the target FDR control level is $q=0.1$. Equal-sized data splitting is used.

We compare our method with the non-private FDR control algorithm presented in \cite{dai2022false}. The empirical FDR and power are reported, and all results are based on 100 independent simulations of the model with a fixed design and random regression coefficients. Figure \ref{fig:2} presents the empirical FDR and power across various signal levels. Both the proposed DP-FDR control procedure and the non-private procedure effectively control the empirical FDR at the predetermined level of $q=0.1$. The power of the proposed method exhibits a minor reduction compared to the non-private procedure due to privacy constraints. For reasonably large sample sizes, the proposed algorithm can maintain FDR control with a slight sacrifice in power compared to the non-private approach.

\begin{figure}[ht]
\centering
\includegraphics[width=0.7\linewidth]{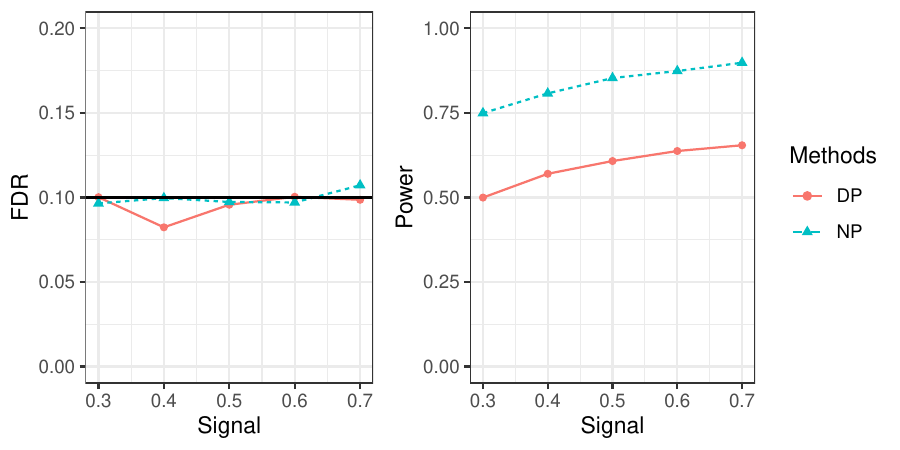}
\caption{Empirical FDRs and powers of Algorithm \ref{alg:fdr} (DP) and the non-private algorithm (NP) with increasing signals $\xi$ for $\rho=0.2$ and $n=10,000$.}
\label{fig:2}
\end{figure}

\begin{figure}[ht]
\centering
\includegraphics[width=0.7\linewidth]{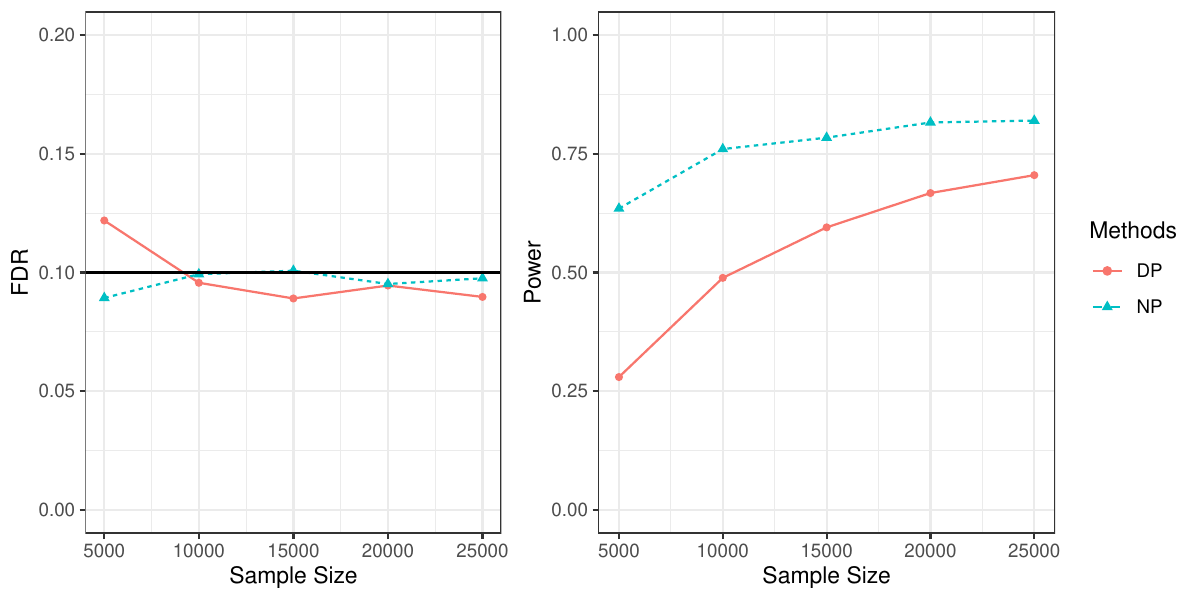}
\caption{Empirical FDRs and powers of Algorithm \ref{alg:fdr} (DP) and the non-private algorithm (NP) with increasing sample sizes $n$ for $\rho=0.2$ and $\xi = 0.3$.}
\label{fig:3}
\end{figure}

Figure \ref{fig:3} presents the empirical FDR and power across increasing sample sizes. It is important to note that the proposed procedure may fail to control the empirical FDR when the sample size is very small. This is primarily because, in cases of small sample sizes, the initial step involving DP-SLR may not accurately identify all active features. Additionally, there may be a nontrivial bias in the second step, which is the DP-OLS estimation. However, as the sample size increases, the proposed method successfully controls the empirical FDR at the predetermined level of $q=0.1$. Similarly, for small sample sizes, the power of the proposed procedure is notably lower than that of the non-private procedure due to reduced estimation accuracy caused by privacy constraints. Nevertheless, as the sample size grows, both the differentially private algorithm and the non-private algorithm exhibit increased power, and the difference between them diminishes. This improvement is due to increased estimation accuracy in both the DP and non-private algorithms. Overall, our numerical study demonstrates that for reasonably large sample sizes, the proposed algorithm can effectively maintain FDR control with a slight reduction in power compared to the non-private algorithm.

\subsection{Real Example: Soil Erosion in National Resources Inventory}

In this section, we demonstrate the performance of the proposed differentially private algorithms in analyzing soil erosion using the National Resources Inventory (NRI) dataset. Due to legal requirements, we cannot report non-private estimators. An additional real data analysis is provided in the appendix, where we compare the proposed methods with non-private methods on a public dataset.

Protecting data privacy is crucial when analyzing the NRI dataset. The NRI collects longitudinal data on land use, land cover, and natural resource conditions on non-Federal lands in the United States \citep{nusser1997national}. The integrity and confidentiality of the data collection sites, which are selected using rigorous scientific sample survey methods, are paramount. According to policies set by the USDA (United States Department of Agriculture) and the NRCS (Natural Resources Conservation Service), the NRI program is conducted in a manner that ensures the confidentiality of information and restricts access to the locations of data collection sites. This includes keeping confidential the location coordinates, maps, photographs, observations of local conditions, and other materials collected for inventories, as they do not constitute public information and are intended solely for use in official inventory activities or as authorized by the Secretary of Agriculture. Furthermore, any NRI data that could reveal the identity of owners, operators, or the locations of data collection sites is strictly protected and not disclosed outside the USDA.

Soil erosion, a natural process influenced by both environmental factors and human activities, leads to runoff over less permeable sub-layers and causes indirect environmental harm. Estimating soil erosion by water is crucial because of its impact on agriculture, infrastructure, ecological sustainability, and water quality \citep{kim2005rapid}. A primary goal of the NRI is to estimate erosion reductions that may result from the implementation of conservation plans. An accurate soil erosion model plays an increasingly important role in the design and implementation of soil management and conservation strategies \citep{panagos2015tackling}. Our focus is on developing a soil erosion model using the NRI dataset by identifying key features and providing valid confidence intervals within the framework of differential privacy.

The dataset comprises sampled locations from the state of Kansas, collected in 2017. The original dataset contains $40,475$ observations, including both real observations and imputed points, with $636$ covariates describing various land features and sample indicators. The dataset is pre-processed by focusing on core points consistently observed in every survey year and by removing sample indicators from the covariates. This processing method is widely adopted within the NRI to ensure sample reliability. After processing, the dataset contains $n=2100$ observations with $p=474$ covariates. The response variable $Y$ is the long-term average annual soil loss.

We first evaluate the performance of the DP-FDR control algorithm on the NRI data. The privacy parameters are set to $\epsilon=4$ and $\delta=1/n^{1.1}$. Equal-sized data splitting is used. When the FDR is controlled at $q=0.1$, the selected variables are presented in Table \ref{tab:3}.

\begin{table}[ht]
\centering
\small
\begin{tabular}{rrrrr}
	\hline
	WCFact & KWFact & TFact & IFact  \\ 
	\hline
	USLE1 & USLE2&USLE3&USLE4  \\ 
	\hline
\end{tabular}
\caption{Selected Real Feature by Algorithm \ref{alg:fdr} with FDR control at $q=0.1$}
\label{tab:3}
\end{table}

The proposed method selected a reasonable subset of important features. For instance, KWFact denotes the soil erodibility factor in the Universal Soil Loss Equation (USLE), and TFact signifies soil loss tolerance, indicating the acceptable level of annual soil loss in tons per acre. These two covariates are known to be highly correlated with soil erosion \citep{alewell2019using} and are selected in both steps of the procedure. WCFact represents the climatic factor in the Wind Erosion Equation (WEQ), which is directly associated with the wind erosion model and is typically not incorporated into water erosion models. However, as highlighted by \cite{nearing2004expected}, the dynamics of how climate change influences soil erosion by water are multifaceted. For example, rainfall patterns may vary in volume and intensity, frequency of precipitation days, and proportion of rain to snow. These variations affect plant biomass production, the rate of plant residue decomposition, soil microbial activity, and evapotranspiration. Thus, it is reasonable to incorporate climatic factors into the model. Our procedure also selected IFact, the soil erodibility index, which appears in the WEQ model. WEQ is an empirical modeling procedure used to estimate soil loss caused by wind erosion from agricultural fields and has become the most comprehensive and widely used model for this purpose. Since water erosion and wind erosion compete with one another, it is reasonable to expect that increasing wind erosion reduces water erosion. Additionally, USLE1, USLE2, USLE3, and USLE4 function as polynomial expressions of various USLE factors, including rainfall, soil erodibility, cover and management, support practices, slope length, and slope percentage. These factors are integral to the USLE model and were used to predict soil erosion in previous NRI studies. It is therefore consistent that our methods selected these four USLE variables. Overall, the proposed procedure successfully identified critical features from the prior NRI soil loss model while also incorporating additional variables that significantly affect water erosion but have not yet been considered in the current NRI project.

To evaluate the performance of Algorithm \ref{algo:inference}, we report the $95\%$ confidence intervals of the variables selected in the DP-FDR control step in Table \ref{tab:4}. All covariates, except IFact, are significant at the $95\%$ confidence level. The parameter associated with WCFact has a negative sign, reflecting the positive correlation between climatic factors and vegetation cover, which in turn leads to a negative correlation with soil erosion. The parameters associated with KWFact and TFact are positive: KWFact reflects soil erodibility, while TFact represents soil loss tolerance. Both are determined using expert knowledge and historical information, and are directly linked to soil erosion. The parameter associated with IFact also has a negative sign, consistent with the competitive relationship between water and wind erosion. However, its coefficient is not significant at the $95\%$ level, suggesting that the competition effect is weak. Table \ref{tab:4} can be used to forecast soil loss under specific soil conditions, making it a valuable tool for developing conservation strategies and crop management plans.

\begin{table}[ht]
\centering
\begin{tabular}{r|rrr}
	\hline
	Feature & Parameter &Lower bound& Upper bound \\ 
	\hline
	WCFact&-0.177 & -0.246 & -0.109 \\ 
	KWFact & 0.028 & 0.006 & 0.051 \\ 
	TFact & 0.058 & 0.036 & 0.081 \\
	IFact & -0.022 & -0.057 & 0.012 \\ 
	USLE1 & -0.134 & -0.157 & -0.112 \\
	USLE2 & 0.029 & 0.006 & 0.051 \\ 
	USLE3 & -0.043 & -0.066 & -0.021 \\ 
	USLE4 &  -0.030 & -0.052 & -0.007 \\ 
	\hline
\end{tabular}
\caption{The $95\%$ confidence intervals for selected features in Table \ref{tab:3} by proposed Algorithm \ref{algo:inference} with finite sample correction.}
\label{tab:4}
\end{table}

\section{Discussion}

This paper presents a comprehensive framework for conducting differentially private analysis in high-dimensional linear models, encompassing estimation, inference, and false discovery rate (FDR) control. The framework is particularly valuable in scenarios where individual privacy in the dataset must be protected and can be readily applied across various disciplines. The numerical studies conducted in this work demonstrate that privacy protection can be achieved with only a minor loss in the accuracy of confidence intervals and multiple testing.

We briefly discuss several possible extensions. For example, the tools developed for DP estimation, the debiased Lasso, and FDR control in this paper can be extended to generalized linear models. It would also be interesting to explore scenarios where part of a dataset—potentially following a different distribution—is publicly available and not subject to privacy constraints. In addition, the newly developed DP-BIC could be adapted for other tasks involving the selection of tuning parameters with privacy guarantees. These directions are left for future research.

\acks{The authors thank the anonymous reviewers for their valuable suggestions. Zhanrui Cai was supported in part by the Hong Kong Research Grants Council (Grant No.27301925) and the National Natural Science Foundation of China (Grant No.12501386). Sai Li was supported by the National Natural Science Foundation of China (No. 12571314). Linjun Zhang was supported in part
by NSF DMS-2015378 and NSF CAREER DMS-2340241. 
}



\appendix

\section{Proofs}

\subsection{The convergence rate in Algorithm \ref{alg:linear}}

We first establish the privacy guarantee and derive the convergence rate of $\hat{\boldsymbol{\beta}}$ in Algorithm \ref{alg:linear}.

\begin{proof}[Proof of Lemma \ref{lem:privacy-adapt}]
	
For $0 \leq k \leq K$, the $\ell_{\infty}$ sensitivity of the gradient at the $t$-th iteration, given by $-\eta_0/|S_t|\sum_{i \in S_t}(\Pi_R(\boldsymbol{x}_i^{\top}\boldsymbol{\beta}_k^{(t)}) - \Pi_R(y_i))\boldsymbol{x}_i$ as defined in line 9 of Algorithm \ref{alg:linear}, satisfies:
\begin{equation*}
\begin{split}
&\sup_{(\boldsymbol{x}_i,y_i),(\boldsymbol{x}_i^{\prime},y_i^{\prime})}\eta_0/|S_t|\cdot\|(\Pi_R(\boldsymbol{x}_i^{\top}\boldsymbol{\beta}^{(t)}_k)-\Pi_R(y_i))\boldsymbol{x}_i-(\Pi_R(\boldsymbol{x}_i^{\prime\top}\boldsymbol{\beta}^{(t)}_k)-\Pi_R(y_i^\prime))\boldsymbol{x}_i^\prime\|_{\infty}\\
&\leq \eta_0T/n\cdot 2(R+R)c_x,
\end{split}
\end{equation*}
where we use the fact that the sample size is $|S_t|=n/T$ and Condition \ref{cond:1}, which assumes that $\|\boldsymbol{x}_i\|_{\infty}$ is bounded by $c_x$. By the Gaussian mechanism (Lemma \ref{lemma:laplace_gaussian_mechanism}) and the advanced composition theorem (Lemma \ref{lem:post_combin_dp}), reporting the gradient in line 9 of Algorithm \ref{alg:linear} is $(\varepsilon/\{T(K+2)\},\delta/\{T(K+1)\})$-DP. Thus, by the composition theorem (Lemma \ref{lem:post_combin_dp}), for $0\leq k\leq K$, the output $\hat{\boldsymbol{\beta}}(k)$ is $(\varepsilon/(K+2),\delta/(K+1))$-DP. Finally, by applying the composition theorem, releasing all $\{\hat{\boldsymbol{\beta}}(k)\}_{k=0}^{K}$ is $(\varepsilon(K+1)/(K+2),\delta)$-DP.

Next, we consider the sensitivity of the BIC loss. Note that
\begin{equation*}
\begin{split}			&\sup_{(\boldsymbol{x}_i,y_i),(\boldsymbol{x}_i^{\prime},y_i^{\prime})}|(\Pi_R(\boldsymbol{x}_i^{\top}\boldsymbol{\beta}(k))-\Pi_R(y_i))^2-(\Pi_R(\boldsymbol{x}_i^{\prime\top}\boldsymbol{\beta}(k))-\Pi_R(y_i^\prime))^2|\leq 2(2R)^2,
\end{split}
\end{equation*}
for every $0\leq k\leq K$. The BIC selection procedure returns the noisy minimizer. By Claim 3.9 in \cite{dwork2014algorithmic}, the BIC selection procedure is $(\varepsilon/(K+2),0)$-DP. Finally, by the composition theorem, the output of Algorithm \ref{alg:linear} is $(\varepsilon,\delta)$-DP.
\end{proof}

\begin{proof}[Proof of Theorem \ref{thm-adapt}]
Let $\hat{k}$ be the selected number corresponding to the selected model $\hat{\boldsymbol{\beta}}$ in Algorithm \ref{alg:linear}, and let $k^*$ denote the true index such that $2^{k^*-1}\leq\rho L^4 s \leq 2^{k^*}$. By the condition $2^K>\rho L^4s$ stated in Theorem \ref{thm-adapt}, the true parameter $k^*$ satisfies $k^*<K$. Note that $k^*$ is uniquely determined by $s$ and $(\rho,L)$. For two sequences of positive integers, $\{a_n\}_{n=1}^{\infty}, \{b_n\}_{n=1}^{\infty}$, the notation $a_n=o(b_n)$ means that $\lim_{n\to\infty}a_n/b_n=0$.

Define the event
\begin{equation*}
\begin{split}
E_0:=\big\{&\inf_{\|\bu\|_0=o(n),\|\bu\|_2=1} {\bu}^{\top}\widehat{\bSigma}{\bu}\geq c_{\gamma_l}\|\bu\|_2^2,\sup_{\|\bu\|_0=o(n),\|\bu\|_2=1} {\bu}^{\top}\widehat{\bSigma}{\bu}\leq c_{\gamma_u}\|\bu\|_2^2\big\},
\end{split}
\end{equation*}
where $\widehat{\bSigma}=\sum_{i=1}^{n}\boldsymbol{x}_i^{\top}\boldsymbol{x}_i/n$, and $c_{\gamma_l},c_{\gamma_u}$ are positive constants depending only on the eigenvalues of the population covariance matrix $\boldsymbol{\Sigma}$. The event $E_0$ provides lower and upper bounds on the sparse eigenvalues. It is closely related to the well-known restricted eigenvalue conditions essential in high-dimensional linear regression. By Theorem 16 in \cite{rudelson2012reconstruction}, the event $E_0$ holds with probability at least $1-\exp(-C_0 n)$ for a positive constant $C_0$. Define the event under which the truncation operators do not take effect to be
\begin{equation*}
\begin{split}
E_1:=&\big\{\max_{i=1,\dots,n}|y_i|\leq R,\max_{t=0,\dots,T-1;k=0,\dots,K}|\boldsymbol{x}_i^{\top}\boldsymbol{\beta}_k^{(t)}|\leq R\text{ for all }i\in \mathcal{S}_t\big\}.
\end{split}
\end{equation*}
We use $\|\cdot\|_{\psi_2}$ to denote the sub-Gaussian norm and $\|\cdot\|_{\psi_1}$ to denote the sub-exponential norm, respectively. By Condition \ref{cond:1} and the independence between $\boldsymbol{x}_i$ and $\boldsymbol{\beta}_k^{(t)}$ due to data splitting, we apply the Chernoff bound to obtain the following large deviation result:
$$\mathbb{P}(|\boldsymbol{x}_i^{\top}\boldsymbol{\beta}_k^{(t)}|\geq R)\leq 2\exp\{-cR^2/(C^2\|\boldsymbol{x}_i\|_{\psi_2}^2)\},$$
where we use the fact that $\boldsymbol{x}_i^{\top}\boldsymbol{\beta}_k^{(t)}$ is sub-Gaussian with its sub-Gaussian norm bounded by $C\|\boldsymbol{x}_i\|_{\psi_2}$, and $c$ is an absolute constant. The use of $c$ is standard in the high-dimensional statistics literature; see, for example, Theorem 2.6.2 in \cite{vershynin2010introduction}. By definition, the sub-Gaussian norm of $\boldsymbol{x}_i^{\top}\boldsymbol{\beta}_k^{(t)}$ is bounded above by $\|\boldsymbol{\beta}_k^{(t)}\|_2\|\boldsymbol{x}_i\|_{\psi_2}\leq C\|\boldsymbol{x}_i\|_{\psi_2}$, where we use the assumptions that $\boldsymbol{x}_i$ is a sub-Gaussian random vector and that $\|\boldsymbol{\beta}_k^{(t)}\|_2\leq C$ due to truncation. Furthermore, by Condition \ref{cond:2} and the linear model assumption, the response variable $y_i$ is also sub-Gaussian, with its sub-Gaussian norm bounded by $c\sqrt{c_0\|\boldsymbol{x}_i\|_{\psi_2}^2+\|e_i\|^2_{\psi_2}}$, where $c$ is an absolute constant. Note that the event $E_1$ is the intersection of $n+n(K+1)$ sub-events. A union bound for the probability of $E_1$ can be obtained using the inequality $(1-p_1)\times(1-p_2)\times\dots\times(1-p_m)\geq 1-p_1-\dots-p_m$. Thus, we have
\begin{equation*}
\begin{split}
\mathbb{P}(E_1)&\geq 1-\sum_{i=1}^{n}\mathbb{P}(|y_i|\geq R)-\sum_{k=0}^{K}\sum_{t=0}^{T-1}\sum_{i\in \mathcal{S}_t}\mathbb{P}(|\boldsymbol{x}_i^{\top}\boldsymbol{\beta}_k^{(t)}|\geq R)\\
&\geq 1-2n(K+2)\exp(-cR^2\max\{C^2\|\boldsymbol{x}_i\|_{\psi_2}^2,c_0\|\boldsymbol{x}_i\|_{\psi_2}^2+\|e_i\|^2_{\psi_2}\}),
\end{split}
\end{equation*}
where the first inequality follows from applying the Chernoff bound $n(K+2)$ times. By choosing $R\geq\sqrt{2\max\{C^2\|\boldsymbol{x}_i\|_{\psi_2}^2,c_0\|\boldsymbol{x}_i\|_{\psi_2}^2+\|e_i\|^2_{\psi_2}\}\log(n)/c}$, we have $\mathbb{P}(E_1)\geq 1-2n(K+2)\exp\{-2\log(n)\}=1-2(K+2)\exp\{-\log(n)\}\stackrel{n\to\infty}{\to}1$, where we use the assumption that $K=O\big(\log(n)\big)$. It remains to analyze the convergence of the differentially private sparse linear regression. Define the event
\begin{equation*}
\begin{split}
E_2&=\bigg\{\|\boldsymbol{\beta}_{k}^{(T)}-\boldsymbol{\beta}\|^2_2\leq c_2^{\prime}\frac{2^{k}\log (p)\log (n)}{n}+c_3^{\prime}\frac{2^{2k}\log (p)^2\log(1/\delta)\log(n)^7}{n^2\varepsilon^2}\text{ for }2^k\geq\rho L^4 s\bigg\},
\end{split}
\end{equation*}
where $E_2$ captures the event that, for a given $k$, the corresponding estimator $\boldsymbol{\beta}_{k}^{(T)}$ achieves the convergence rate stated in Theorem 4.4 of \cite{cai2019cost}. Under the event $E_1$, and by Theorem 4.4 in \cite{cai2019cost}—with $T$ therein replaced by $KT$—the event $E_2$ holds for a given $k$ with probability at least $1-\exp\{-c_1^{\prime}\log (n)\}$, for some positive constants $c_1^{\prime},c_2^{\prime},c_3^{\prime}$. By applying the union bound, we conclude that the event $E_2$ holds with probability at least $1-(K+1)\exp\{-c_1^{\prime}\log (n)\}$. We now analyze the theoretical performance of the proposed BIC method under the event $E_0\cap E_1\cap E_2$.
	
Note that under the event $E_1\cap E_2$, for $2^k\geq\rho L^4 s$, we have
\begin{equation*}
\begin{split}
|\bx_i^{\top}\boldsymbol{\beta}_{k}^{(T)}|&\leq|\bx_i^{\top}\boldsymbol{\beta}|+|\bx_i^{\top}(\boldsymbol{\beta}-\boldsymbol{\beta}_{k}^{(T)})|\leq|\bx_i^{\top}\boldsymbol{\beta}|+\|\bx_i\|_{\infty}\|\boldsymbol{\beta}-\boldsymbol{\beta}_{k}^{(T)}\|_{1}\\
&\leq |\bx_i^{\top}\boldsymbol{\beta}| + c_x\sqrt{2^{k}}\|\boldsymbol{\beta}-\boldsymbol{\beta}_{k}^{(T)}\|_{2},
\end{split}
\end{equation*}
where the first inequality follows from the triangle inequality, the second inequality follows from Hölder's inequality, and the last inequality uses the bound $\|\cdot\|_1\leq\sqrt{\|\cdot\|_0}\times\|\cdot\|_2$. By the assumptions in Theorem \ref{thm-adapt}, we have 
\begin{equation*}
    \begin{split}
        c_x\|\boldsymbol{\beta}-\boldsymbol{\beta}_{k}^{(T)}\|_{1}&\leq c_x\sqrt{c_2^{\prime}\frac{2^{2k}\log (p)\log (n)}{n}+c_3^{\prime}\frac{2^{3k}\log (p)^2\log(1/\delta)\log(n)^7}{n^2\varepsilon^2}}\\
        &\leq c_x\sqrt{ c_2^{\prime}\frac{2^{2K}\log (p)\log (n)}{n}+c_3^{\prime}\frac{2^{3K}\log (p)^2\log(1/\delta)\log(n)^7}{n^2\varepsilon^2}}\\
        &=O\bigg(\sqrt{\frac{n\log(p)\log(n)}{n\log(p)^4}+\frac{n^{3/2}\log (p)^2\log(1/\delta)\log(n)^7}{n^2\varepsilon^2\log (p)^6}}\bigg)\\
        &=O\bigg(\sqrt{\frac{1}{\log(p)^2}+\frac{\log(1/\delta)\log(n)^3}{n^{1/2}\varepsilon^2}}\bigg)=o(1)
    \end{split}
\end{equation*}
and thus, for a proper choice of $R$, the parameter clipping does not occur for $2^k\geq\rho L^4s$. In the remainder of the proof for the BIC criterion, we use $\boldsymbol{y}$ and $\boldsymbol{X}$ to denote the vector $(y_1,\dots,y_n)^{\top}$ and the matrix $(\bx_1,\dots,\bx_n)^{\top}$, respectively. By the oracle inequality for the BIC criterion, we obtain the following expression, which is a direct consequence of the selection procedure:
\[
\|\boldsymbol{y}-\boldsymbol{X}\hat{\boldsymbol{\beta}}\|_2^2+c_Bf(n,\hat{k})+z_{\hat{k}}\leq \|\boldsymbol{y}-\boldsymbol{X}\hat{\boldsymbol{\beta}}(k^*)\|_2^2+c_Bf(n,k^*)+z_{k^*},
\]
where the function $f(n,k)=2^k\log (p)\log(n)+\{2^{2k}\log (p)^2\log(1/\delta)\log(n)^7\}/(n\varepsilon^2)$, and $z_{\hat{k}},z_{k^*}$ are the added noise terms due to privacy. Furthermore, by taking the maximum of the additional noise terms, we have:
\[
\|\boldsymbol{y}-\boldsymbol{X}\hat{\boldsymbol{\beta}}\|_2^2+c_Bf(n,\hat{k})\leq \|\boldsymbol{y}-\boldsymbol{X}\hat{\boldsymbol{\beta}}(k^*)\|_2^2+c_Bf(n,k^*)+\epsilon_{privacy},
\]
where $\epsilon_{privacy}$ is defined as $2\sup_{k=0,\dots,K}|z_k|$. The above inequality implies that
\begin{equation}
\begin{split}
\label{eq:thm1_1}			\|\boldsymbol{X}(\hat{\boldsymbol{\beta}}-\hat{\boldsymbol{\beta}}(k^*))\|_2^2&\leq 2|\langle \boldsymbol{X}(\hat{\boldsymbol{\beta}}-\hat{\boldsymbol{\beta}}(k^*)),\boldsymbol{y}-\boldsymbol{X} \hat{\boldsymbol{\beta}}(k^*)\rangle|+c_B\{f(n,k^*)-f(n,\hat{k})\}+\epsilon_{privacy}.
\end{split}
\end{equation}
Let the support set be $\widehat{U}=\text{supp}(\hat{\boldsymbol{\beta}}-\hat{\boldsymbol{\beta}}(k^*))$. Note that
\[
|\widehat{U}|=O(\sqrt{n}/\log(p)^2+s)=o(n).
\]
Hence, under the event $E_0$, the inequality,
\begin{align}
& c_{\gamma_l}\|\hat{\boldsymbol{\beta}}-\hat{\boldsymbol{\beta}}(k^*)\|_2^2\leq\frac{1}{n}\|\boldsymbol{X}(\hat{\boldsymbol{\beta}}-\hat{\boldsymbol{\beta}}(k^*))\|_2^2\nonumber\\
\leq& \frac{2}{n}|\langle \boldsymbol{X}(\hat{\boldsymbol{\beta}}-\hat{\boldsymbol{\beta}}(k^*)),\boldsymbol{X} (\hat{\boldsymbol{\beta}}(k^*)-\boldsymbol{\beta})\rangle|+\frac{2}{n}|\langle \boldsymbol{X}(\hat{\boldsymbol{\beta}}-\hat{\boldsymbol{\beta}}(k^*)),\boldsymbol{y}-\boldsymbol{X}\boldsymbol{\beta}\rangle|\nonumber\\
&+c_B\{f(n,k^*)-f(n,\hat{k})\}/n+\epsilon_{privacy}/n \nonumber\\
\leq& \frac{2}{n}\|\boldsymbol{X}(\hat{\boldsymbol{\beta}}-\hat{\boldsymbol{\beta}}(k^*))\|_2\|\boldsymbol{X} (\hat{\boldsymbol{\beta}}(k^*)-\boldsymbol{\beta})\|_2+2\|\hat{\boldsymbol{\beta}}-\hat{\boldsymbol{\beta}}(k^*)\|_1\|\frac{1}{n}\boldsymbol{X}^{\top}\boldsymbol{e}\|_{\infty}\nonumber\\
&+c_B\{f(n,k^*)-f(n,\hat{k})\}/n+\epsilon_{privacy}/n \nonumber\\
\leq& 2c_{\gamma_u}\|\hat{\boldsymbol{\beta}}-\hat{\boldsymbol{\beta}}(k^*)\|_2\|\hat{\boldsymbol{\beta}}(k^*)-\boldsymbol{\beta}\|_2+\|\hat{\boldsymbol{\beta}}-\hat{\boldsymbol{\beta}}(k^*)\|_1\sqrt{2\|\boldsymbol{x}_ie_i\|^2_{\psi_1}/c\frac{\log (p)}{n}}\nonumber\\
&+c_B\{f(n,k^*)-f(n,\hat{k})\}/n+\epsilon_{privacy}/n\label{eq:thm1_bic}
\end{align}
holds with probability at least $1-2\exp(-\log(p))$, where the second inequality follows from the relationship in \eqref{eq:thm1_1}, the third inequality follows from Hölder's inequality, and the last inequality follows from the event $E_0$ and a concentration inequality. In this expression, we use $\boldsymbol{e}=(e_1,\dots,e_n)^{\top}$ to denote the vector of random errors in the linear model. Note that each component of $\boldsymbol{x}_ie_i$ is a product of two sub-Gaussian random variables, and is therefore sub-exponential. Thus, we have
\begin{equation*}
\begin{split}
&\mathbb{P}\bigg(\|\frac{1}{n}\boldsymbol{X}^{\top}\boldsymbol{e}\|_{\infty}\geq\sqrt{2\|\boldsymbol{x}_ie_i\|^2_{\psi_1}/c\frac{\log(p)}{n}}\bigg)\\
&\leq\sum_{j=1}^{p}\mathbb{P}\bigg(|\frac{1}{n}\sum_{i=1}^{n}x_{i,j}e_i|\geq\sqrt{2\|\boldsymbol{x}_ie_i\|^2_{\psi_1}/c\frac{\log (p)}{n}}\bigg)\\
&\leq 2p\exp\bigg(-c\frac{2\|\boldsymbol{x}_ie_i\|^2_{\psi_1}\log(p)}{c\|\boldsymbol{x}_ie_i\|^2_{\psi_1}}\bigg)=2p\exp(-2\log(p))=2\exp(-\log(p)),
\end{split}
\end{equation*}
where we use the union bound in the first inequality and Bernstein’s inequality in the second inequality.
	
We first consider the case where $\hat{k}< k^*$. We obtain the following inequality:
\begin{equation*}
\begin{split}
&c_{\gamma_l}\|\hat{\boldsymbol{\beta}}-\hat{\boldsymbol{\beta}}(k^*)\|_2^2\leq 2c_{\gamma_u}\|\hat{\boldsymbol{\beta}}-\hat{\boldsymbol{\beta}}(k^*)\|_2\|\hat{\boldsymbol{\beta}}(k^*)-\boldsymbol{\beta}\|_2\\
&+\|\hat{\boldsymbol{\beta}}-\hat{\boldsymbol{\beta}}(k^*)\|_2\sqrt{\frac{(2^{\hat{k}}+2^{k^*})\log(p)}{n}}\sqrt{2\|\boldsymbol{x}_ie_i\|^2_{\psi_1}/c}\\
&+c_B\{f(n,k^*)-f(n,\hat{k})\}/n+\epsilon_{privacy}/n,
\end{split}
\end{equation*}
by applying Hölder's inequality to \eqref{eq:thm1_bic},
$$\|\hat{\boldsymbol{\beta}}-\hat{\boldsymbol{\beta}}(k^*)\|_1\leq\|\hat{\boldsymbol{\beta}}-\hat{\boldsymbol{\beta}}(k^*)\|_2\sqrt{\|\hat{\boldsymbol{\beta}}-\hat{\boldsymbol{\beta}}(k^*)\|_0},$$
and we use the fact that $\|\hat{\boldsymbol{\beta}}-\hat{\boldsymbol{\beta}}(k^*)\|_0\leq 2^{\hat{k}}+2^{k^*}$. By treating $\|\hat{\boldsymbol{\beta}}-\hat{\boldsymbol{\beta}}(k^*)\|_2:=t$ as an unknown variable, the preceding expression becomes a quadratic function in $t$. To simplify the notation, we define $$a_1=2c_{\gamma_u}/c_{\gamma_l}\|\hat{\boldsymbol{\beta}}(k^*)-\boldsymbol{\beta}\|_2+1/c_{\gamma_l}\sqrt{\frac{(2^{\hat{k}}+2^{k^*})\log(p)}{n}}\sqrt{2\|\boldsymbol{x}_ie_i\|^2_{\psi_1}/c}$$ 
and
$$a_2=c_B/c_{\gamma_l}\{f(n,k^*)-f(n,\hat{k})\}/n+1/c_{\gamma_l}\times\epsilon_{privacy}/n.$$
Then we have the inequality $t^2-a_1t-a_2\leq 0$. By the assumption that $\hat{k}< k^*$ and $c_B>0$, it follows that $c_B\{f(n,k^*)-f(n,\hat{k})\}+\epsilon_{privacy}>0$. Therefore, the solution to the quadratic inequality exists and satisfies $t\leq a_1/2+\sqrt{a_1^2/4+a_2}$. Furthermore, by the inequality $\sqrt{a+b}\leq\sqrt{a}+\sqrt{b}$ for $a,b\geq 0$, we obtain $t\leq a_1+\sqrt{a_2}$. Finally, by the event $E_2$ and the fact that $2^{\hat{k}}+2^{k^*}\leq 2^{k^*+1}\leq 2\rho L^4s$, we have
\begin{equation*}
\begin{split}
a_1\leq& \frac{2c_{\gamma_u}}{c_{\gamma_l}}\sqrt{c_2^{\prime}\frac{2^{k^*}\log (p)\log (n)}{n}+c_3^{\prime}\frac{2^{2k^*}\log (p)^2\log(1/\delta)\log(n)^7}{n^2\epsilon^2}}\\
&+1/c_{\gamma_l}\sqrt{4\|\boldsymbol{x}_ie_i\|^2_{\psi_1}/c}\sqrt{\frac{ 2^{k^*}\log(p)}{n}}.
\end{split}
\end{equation*}
It remains to consider the term $\sqrt{a_2}$. Since the distribution of $z_i$ is Laplace, it is sub-exponential. We have
\begin{equation*}
\begin{split}
\mathbb{P}\bigg\{\epsilon_{privacy}\geq 4c\log(n)\frac{2(2R)^2(K+2)}{\varepsilon}\bigg\}&\leq\sum_{i=0}^{K}\mathbb{P}\bigg\{|z_i|\geq 4c\log(n)\frac{2(2R)^2(K+2)}{\varepsilon}\bigg\}\\
&\leq(K+1)\exp\{-2\log(n)\}\leq\exp\{-\log(n)\}.
\end{split}
\end{equation*}
By the definition of $f(n,k)$, we have
\begin{equation*}
\begin{split}
a_2&\leq c_B/c_{\gamma_l}\{f(n,k^*)-f(n,\hat{k})\}/n+1/c_{\gamma_l}\epsilon_{privacy}/n\\
&\leq c_B/c_{\gamma_l}f(n,k^*)/n+2c\log(n)\frac{2(4R)^2(K+2)}{\varepsilon}/(c_{\gamma_l}n)\\
&\leq c_B/c_{\gamma_l}\bigg[2^{k^*}\log (p)+\frac{2^{2k^*}\log (p)^2\log(1/\delta)\log(n)^6}{n\varepsilon^2}\bigg]\frac{\log(n)}{n}\\
&+2c\log(n)\frac{2(4R)^2(K+2)}{\varepsilon}\frac{1}{c_{\gamma_l}n}.
\end{split}
\end{equation*}
By combining the upper bounds of $a_1^2$ and $a_2$, we have:
\begin{equation*}
\begin{split}
\|\hat{\boldsymbol{\beta}}-\hat{\boldsymbol{\beta}}(k^*)\|_2^2&\leq (a_1+\sqrt{a_2})^2\leq 2a_1^2+2a_2\\
&\leq c_2\frac{s\log(p)\log(n)}{n}+c_3\frac{s^2\log(p)^2\log(1/\delta)\log(n)^7}{n^2\varepsilon^2}+c_4\frac{\log(n)^3}{n\varepsilon},
\end{split}
\end{equation*}
for some constant $c_2,c_3,c_4$.
	
Next, we consider the case where $\hat{k}\geq k^*$. By applying the triangle inequality to \eqref{eq:thm1_bic}, we obtain:
\begin{align*}
c_{\gamma_l}\|\hat{\boldsymbol{\beta}}-\hat{\boldsymbol{\beta}}(k^*)\|_2^2
\leq& 2c_{\gamma_u}\|\hat{\boldsymbol{\beta}}-\hat{\boldsymbol{\beta}}(k^*)\|_2\|\hat{\boldsymbol{\beta}}(k^*)-\boldsymbol{\beta}\|_2+\|\hat{\boldsymbol{\beta}}-\hat{\boldsymbol{\beta}}(k^*)\|_1\sqrt{2\|\boldsymbol{x}_ie_i\|^2_{\psi_1}/c\frac{\log (p)}{n}}\\
&+c_4\{f(n,k^*)-f(n,\hat{k})\}/n+\epsilon_{privacy}/n\\
\leq& 2c_{\gamma_u}\|\hat{\boldsymbol{\beta}}-\hat{\boldsymbol{\beta}}(k^*)\|_2\|\hat{\boldsymbol{\beta}}(k^*)-\boldsymbol{\beta}\|_2+\|\hat{\boldsymbol{\beta}}-\boldsymbol{\beta}\|_1\sqrt{2\|\boldsymbol{x}_ie_i\|^2_{\psi_1}/c\frac{\log (p)}{n}}\\
&+\|\boldsymbol{\beta}-\hat{\boldsymbol{\beta}}(k^*)\|_1\sqrt{2\|\boldsymbol{x}_ie_i\|^2_{\psi_1}/c\frac{\log (p)}{n}}\\
&+c_B\{f(n,k^*)-f(n,\hat{k})\}/n+\epsilon_{privacy}/n.
\end{align*}
By treating $\|\hat{\boldsymbol{\beta}}-\hat{\boldsymbol{\beta}}(k^*)\|_2:=t$ as an unknown variable, the previous expression becomes a quadratic function in $t$. To simplify the notation, we define $a_1^{\prime}=2c_{\gamma_u}/c_{\gamma_l}\|\hat{\boldsymbol{\beta}}(k^*)-\boldsymbol{\beta}\|_2$ and $a_2^{\prime}=c_B/c_{\gamma_l}\{f(n,k^*)-f(n,\hat{k})\}/n+1/c_{\gamma_l}\epsilon_{privacy}/n+1/c_{\gamma_l}\|\hat{\boldsymbol{\beta}}-\boldsymbol{\beta}\|_1\sqrt{2\|\boldsymbol{x}_ie_i\|^2_{\psi_1}/c\frac{\log (p)}{n}}+1/c_{\gamma_l}\|\boldsymbol{\beta}-\hat{\boldsymbol{\beta}}(k^*)\|_1\sqrt{2\|\boldsymbol{x}_ie_i\|^2_{\psi_1}/c\frac{\log (p)}{n}}$. Under the event $E_2$, we have
\begin{equation*}
a_1^{\prime}\leq\frac{2c_{\gamma_u}}{c_{\gamma_l}}\sqrt{ c_2^{\prime}\frac{2^k\log (p)\log (n)}{n}+c_3^{\prime}\frac{2^{2k}\log (p)^2\log(1/\delta)\log(n)^7}{n^2\varepsilon^2}}.
\end{equation*}
By the inequality $\|\cdot\|_1\leq\|\cdot\|_2\times\sqrt{\|\cdot\|_0}$, we have 
\begin{equation*}
\begin{split}
a_2^{\prime}\leq& 2/c_{\gamma_l}\sqrt{2\frac{\|\boldsymbol{x}_ie_i\|^2_{\psi_1}}{c}}\sqrt{\frac{2^{\hat{k}}\log(p)}{n}}\\
&\times\sqrt{c_2^{\prime}\frac{2^{\hat{k}}\log (p)\log (n)}{n}+c_3^{\prime}\frac{2^{2\hat{k}}\log (p)^2\log(1/\delta)\log(n)^7}{n^2\epsilon^2}}\\
+&c_B/c_{\gamma_l}\{f(n,k^*)-f(n,\hat{k})\}/n+1/c_{\gamma_l}\epsilon_{privacy}/n.
\end{split}
\end{equation*}
For $c_B>2\sqrt{\max\{c_2^{\prime},c_3^{\prime}\}}\sqrt{2\|\boldsymbol{x}_ie_i\|^2_{\psi_1}/c}$, we have
\begin{equation*}
a_2^{\prime}\leq c_B/c_{\gamma_l}f(n,k^*)/n+1/c_{\gamma_l}\epsilon_{privacy}/n.
\end{equation*}
By properties of solutions to quadratic inequalities and the bound $t\leq a_1+\sqrt{a_2}$, we have
\[
\|\hat{\boldsymbol{\beta}}-\hat{\boldsymbol{\beta}}(k^*)\|_2^2\leq 2(\frac{2c_{\gamma_u}}{c_{\gamma_l}})^2 \|\hat{\boldsymbol{\beta}}(k^*)-\boldsymbol{\beta}\|_2^2+2a_2^{\prime}.
\]
Then, using the fact that $f(n,k^*)/n\leq\frac{1}{\max\{c_2^{\prime},c_3^{\prime}\}}\|\hat{\boldsymbol{\beta}}(k^*)-\boldsymbol{\beta}\|_2^2$ and applying the large deviation bound for $\epsilon_{privacy}$ as used in the bound for $a_2$, we have
\begin{equation*}
\begin{split}
\|\hat{\boldsymbol{\beta}}-\hat{\boldsymbol{\beta}}(k^*)\|_2^2\leq c_2\frac{s\log(p)\log(n)}{n}+c_3\frac{s^2\log(p)^2\log(1/\delta)\log(n)^7}{n^2\varepsilon^2}+c_4\frac{\log(n)^3}{n\varepsilon},
\end{split}
\end{equation*}
for some constant $c_2,c_3,c_4$.
\end{proof}

\subsection{Proofs of Statistical Inference}

Given a pre-specified sparsity level $s_j$, the differentially private estimation algorithm for $\boldsymbol{w}_j$ is presented in Algorithm \ref{algo:wj}.

\begin{algorithm}[ht]\caption{Differentially Private Estimation of  $\boldsymbol{w}_j$ given sparsity}\label{algo:wj}
	\begin{algorithmic}[1]
		\Require{Dataset $\{\boldsymbol{x}_i\}_{i}^{n}$, step size $\eta^0$, privacy parameters $(\varepsilon, \delta)$, noise scale $B$, number of iterations $T$, truncation level $R$, feasibility parameter $C$, sparsity $s^*$, initial value $\boldsymbol{w}^{(0)}_j$.}
		\State Random split data into $T$ parts of roughly equal size: $\{1,\dots,n\}= \mathcal{S}_0\cup\dots\cup\mathcal{S}_{T-1}$ and $\mathcal{S}_i\cap\mathcal{S}_j=\emptyset$ for $i\neq j$.
		\For{$t$ in $0$ to $T-1$}
		\State	Gradient descent: $\boldsymbol{w}_j^{(t + 0.5)} = \boldsymbol{w}_j^{(t)} -\eta^0(\boldsymbol{e}_j-\sum_{i\in\mathcal{S}_t}\boldsymbol{x}_i\Pi_R(\boldsymbol{x}_i^{\top}\boldsymbol{w}_j^{(t)})/|\mathcal{S}_t|)$.
		\State Private report: $\boldsymbol{w}_j^{(t+1)} = \Pi_C(\text{NoisyIHT}(\boldsymbol{w}^{(t+0.5)}_j, s^*, \varepsilon/T, \delta/T, \eta^0 B/|\mathcal{S}_t|))$.
		\EndFor
		\Ensure{$\boldsymbol{w}_j^{(T)}$.}
	\end{algorithmic}
\end{algorithm}

The $\ell_2$ error bound for the output of Algorithm \ref{algo:wj} is outlined in Lemma \ref{lem:est_w}. The proof follows arguments similar to those in Theorem 4.4 of \cite{cai2019cost}.

\begin{lemma}
\label{lem:est_w} 
Suppose conditions \ref{cond:1}, \ref{cond:2} and \ref{cond:3} hold, and let $B=2Rc_x$, $C>L$ and $R=C_1\sqrt{\log(n)}$ for a constant $C_1$. There exists a constant $\rho$ such that, if $s^*=\rho L^4s_j$, $T=\rho L^2\log(8L^3n)$, $s_j\log(p)=o(n)$ and $s_j\log (p)\log(1/\delta)\log(n)^{2.5}/\varepsilon=o(n)$. Then with probability at least $1-\exp(-c_1^{\prime}\log n)$, there exist constants $c_2^{\prime}$ and $c_3^{\prime}$, such that
\begin{equation*}
\|\boldsymbol{w}_j^{(T)}-\boldsymbol{w}_j\|_2^2\leq c_2^{\prime}\frac{s_j\log (p)\log(n)}{n}+c_3^{\prime}\frac{s_j^{2}\log(p)^2 \log(1/\delta)\log(n)^5}{n^2\varepsilon^2}.
\end{equation*}
\end{lemma}

\subsubsection{Proof of Lemma \ref{lem:est_w}}
\begin{proof}[Proof of Lemma \ref{lem:est_w}]

We begin the proof by first presenting the statistical error without differential privacy constraints. Let $S_{oracle}=\text{supp}(\boldsymbol{w}_j)$ denote the support of the true parameter $\boldsymbol{w}_j$. For any subset $S$ satisfying $S_{oracle}\subseteq S$, $|S|\leq c_js_j$, and $s^*\leq|S|$, the oracle estimator $\hat{\boldsymbol{w}}_j^o$ is defined as follows:
\begin{equation*}
\hat{\boldsymbol{w}}_j^o=\argmin_{\boldsymbol{w}\in\mathbb{R}^p,\text{supp}(\boldsymbol{w})\subseteq S}\L_n(\boldsymbol{w}):=\frac{1}{2}\boldsymbol{w}^{\top}\widehat{\bSigma}\boldsymbol{w}-\boldsymbol{w}^{\top}\boldsymbol{e}_j,
\end{equation*}
where $c_j$ is a positive constant and $\widehat{\bSigma}=\sum_{i=1}^{n}\boldsymbol{x}_i^{\top}\boldsymbol{x}_i/n$. The condition $|S|\leq c_js_j$ implies that the sparsity requirement for $s^*$ is satisfied. 

The name ``oracle" refers to the fact that $\hat{\boldsymbol{w}}_j^o$ is an estimator that uses the true support set. We first study the statistical properties of $\hat{\boldsymbol{w}}_j^o$. The nonzero components of $\hat{\boldsymbol{w}}_j^o$ are given by
$$\hat{\boldsymbol{w}}_{j,S}^o=\argmin_{\boldsymbol{w}\in\mathbb{R}^{|S|}}\frac{1}{2}\boldsymbol{w}^{\top}\widehat{\bSigma}_{SS}\boldsymbol{w}-\boldsymbol{w}^{\top}\boldsymbol{e}_{j,S},$$ where $\hat{\boldsymbol{w}}_{j,S}^o$ is the sub-vector of $\hat{\boldsymbol{w}}_j^o$, and $\widehat{\bSigma}_{SS}$ is the sub-matrix of $\widehat{\bSigma}$, with both indexed by the set $S$. Since $j\in S$, the sub-vector $\boldsymbol{e}_{j,S}$ remains a unit vector. The analytic solution is given by $\hat{\boldsymbol{w}}_{j,S}^o=\widehat{\bSigma}_{SS}^{-1}\boldsymbol{e}_{j,S}$. Then,
\begin{equation*}
\begin{split}
&\|\hat{\boldsymbol{w}}_j^o-\boldsymbol{w}_j\|_2=\|\hat{\boldsymbol{w}}_{j,S}^o-\boldsymbol{w}_{j,S}\|_2=\|(\widehat{\bSigma}_{SS}^{-1}-\boldsymbol{\Sigma}_{SS}^{-1})\boldsymbol{e}_{j,S}\|_2\\
&\leq\|\widehat{\bSigma}_{SS}^{-1}-\boldsymbol{\Sigma}_{SS}^{-1}\|_2,
\end{split}
\end{equation*}
where the first equality uses the fact that the support of both $\hat{\boldsymbol{w}}_j^o$ and $\boldsymbol{w}_j$ lies in $S$, the second equality follows from the analytic solution form, and the last inequality uses the definition of the matrix $\ell_2$ norm.
	
By Corollary 10.1 in \cite{tan2020sparse}, for any constant $c^{\prime}_w$ and any set $S$ satisfying $|S|\leq c_js_j$, there exists a constant $c_w>0$, such that
\begin{equation*}
\|\widehat{\bSigma}_{SS}-\boldsymbol{\Sigma}_{SS}\|_2^2\leq\frac{c_w}{n}s_j\log(ep/s_j),
\end{equation*}
with probability at least $1-\exp\{-c_w^{\prime}s_j\log(ep/s_j)\}$. Then we have the following relation:
\begin{equation*}
\begin{split}
\|\widehat{\bSigma}_{SS}^{-1}-\boldsymbol{\Sigma}^{-1}_{SS}\|_2&=\|\widehat{\bSigma}_{SS}^{-1}(\widehat{\bSigma}_{SS}-\boldsymbol{\Sigma}_{SS})\boldsymbol{\Sigma}^{-1}_{SS}\|_2\\
&\leq\|\widehat{\bSigma}_{SS}^{-1}\|_2\|\widehat{\bSigma}_{SS}-\boldsymbol{\Sigma}_{SS}\|_2\|\boldsymbol{\Sigma}^{-1}_{SS}\|_2\\
&\leq 2L^2\sqrt{\frac{c_w}{n}s_j\log(ep/s_j)},
\end{split}
\end{equation*}
where the first equality holds because $\boldsymbol{\Sigma}_{SS}$ is invertible by Condition \ref{cond:1}, and $\widehat{\bSigma}_{SS}$ converges to $\boldsymbol{\Sigma}_{SS}$, implying that $\widehat{\bSigma}_{SS}$ is also invertible for sufficiently large $n$. The second inequality uses the bound on $\|\widehat{\bSigma}_{SS}-\boldsymbol{\Sigma}_{SS}\|_2$ from the previous result, and Condition \ref{cond:1}, which implies $\|\boldsymbol{\Sigma}^{-1}_{SS}\|_2\leq\|\boldsymbol{\Sigma}^{-1}\|_2\leq L$. Since $\widehat{\bSigma}_{SS}^{-1}$ converges to $\boldsymbol{\Sigma}_{SS}^{-1}$, we also have $\|\widehat{\bSigma}^{-1}_{SS}\|_2\leq 2\|\boldsymbol{\Sigma}^{-1}\|_2\leq 2L$ for sufficiently large $n$. The constant $2$ is not tight, but keeps the correct order. A similar technique will be used later in the proof. Then we have the following bound:
\begin{equation*}
\|\hat{\boldsymbol{w}}_j^o-\boldsymbol{w}_j\|_2^2\leq 4L^4\frac{c_w}{n}s_j\log(ep/s_j),
\end{equation*}
with probability at least $1-\exp\{-c_w^{\prime}s_j\log(ep/s_j)\}$.
	
Next, we consider the properties of the gradient descent algorithm. Before discussing the algorithm, we define an event under which the truncation operators do not take effect:
\begin{equation*}
\begin{split}
E_3^{\prime}:=\{\max_{t=0,\dots,T-1}|\boldsymbol{x}_i^{\top}\boldsymbol{w}_j^{(t)}|\leq R\text{ for all }i\in \mathcal{S}_t\big\}.
\end{split}
\end{equation*}
By Condition \ref{cond:1} and the independence between $\boldsymbol{x}_i$ and $\boldsymbol{w}^{(t)}$ induced by data splitting, we apply the Chernoff bound to obtain the following large deviation result:
$$\mathbb{P}(|\boldsymbol{x}_i^{\top}\boldsymbol{w}_j^{(t)}|\geq R)\leq 2\exp\{-cR^2/(C^2\|\boldsymbol{x}_i\|_{\psi_2}^2)\},$$
where $c$ is an absolute constant and we use the fact that $\boldsymbol{x}_i^{\top}\boldsymbol{w}_j^{(t)}$ is sub-Gaussian with sub-Gaussian norm bounded by $C\|\boldsymbol{x}_i\|_{\psi_2}$. By applying the union bound, we have
\begin{equation*}
\begin{split}
\mathbb{P}(E_3^{\prime})&\geq 1-\sum_{t=0}^{T-1}\sum_{i\in S_t}\mathbb{P}(|\boldsymbol{x}_i^{\top}\boldsymbol{w}_j^{(t)}|\geq R)\geq 1-2n\exp\{-cR^2C^2\|\boldsymbol{x}_i\|_{\psi_2}^2\}\},
\end{split}
\end{equation*}
where the first inequality follows from applying the Chernoff bound $n$ times. By choosing $R=\sqrt{2C^2\|\boldsymbol{x}_i\|_{\psi_2}^2\log(n)/c}$, we obtain $\mathbb{P}(E_3^{\prime})\geq 1-2n\exp(-2\log(n))=1-2\exp(-\log(n))$. Thus, truncation operators do not occur with high probability, and we omit them in the remainder of the proof. To simplify notation, we define the empirical loss function as 
$$\L_n(\boldsymbol{w})=\frac{1}{2}\boldsymbol{w}^{\top}\widehat{\bSigma}\boldsymbol{w}-\boldsymbol{w}^{\top}\boldsymbol{e}_j.$$ 
Since data splitting is used in the algorithm, the sample size in each iteration is $n/T$. For clarity of presentation, we omit the subsample notation. Note that $\L_n(\boldsymbol{w})$ satisfies the following property:
$$\langle\nabla\L_n(\boldsymbol{w}_1)-\nabla\L_n(\boldsymbol{w}_2),\boldsymbol{w}_1-\boldsymbol{w}_2\rangle=(\boldsymbol{w}_1-\boldsymbol{w}_2)^{\top}\widehat{\bSigma}(\boldsymbol{w}_1-\boldsymbol{w}_2).$$
Thus, we have
\begin{equation}
\label{eq:gamma}
\alpha\|\boldsymbol{w}_1-\boldsymbol{w}_2\|_2^2\leq \langle\nabla\L_n(\boldsymbol{w}_1)-\nabla\L_n(\boldsymbol{w}_2),\boldsymbol{w}_1-\boldsymbol{w}_2\rangle\leq \gamma\|\boldsymbol{w}_1-\boldsymbol{w}_2\|_2^2,
\end{equation}
for all $\boldsymbol{w}_1,\boldsymbol{w}_2\in\mathbb{R}^p$ such that $\max\{|\text{supp}(\boldsymbol{w}_1)|,|\text{supp}(\boldsymbol{w}_2)|\}\leq c_js_j/2$. Since $|\text{supp}(\boldsymbol{w}_1)\cup\text{supp}(\boldsymbol{w}_2)|\leq c_js_j$, and by the uniform convergence of submatrices, the above inequality holds with $\alpha=1/(2L)$ and $\gamma=2L$ with high probability, where we use Condition \ref{cond:1} for the population matrix $\boldsymbol{\Sigma}$. Then we have
\begin{equation*}
\begin{split}
\L_n(\boldsymbol{w}_j^{(t+1)})-\L_n(\boldsymbol{w}_j^{(t)})
=&\frac{1}{2}\boldsymbol{w}_j^{(t+1)\top}\widehat{\bSigma}\boldsymbol{w}_j^{(t+1)}-\frac{1}{2}\boldsymbol{w}_j^{(t)\top}\widehat{\bSigma}\boldsymbol{w}_j^{(t)}-(\boldsymbol{w}_j^{(t+1)}-\boldsymbol{w}_j^{(t)})^{\top}\boldsymbol{e}_j\\
=&\langle\boldsymbol{w}_j^{(t+1)}-\boldsymbol{w}_j^{(t)},\boldsymbol{w}_j^{(t)\top}\widehat{\bSigma}-\boldsymbol{e}_j\rangle\\
&+\frac{1}{2}(\boldsymbol{w}_j^{(t+1)}-\boldsymbol{w}_j^{(t)})^{\top}\widehat{\bSigma}(\boldsymbol{w}_j^{(t+1)}-\boldsymbol{w}_j^{(t)})\\
&\leq \langle\boldsymbol{w}_j^{(t+1)}-\boldsymbol{w}_j^{(t)},\boldsymbol{g}^t\rangle+\frac{\gamma}{2}\|\boldsymbol{w}_j^{(t+1)}-\boldsymbol{w}_j^{(t)}\|_2^2,
\end{split}
\end{equation*}
where $\boldsymbol{g}^t=\boldsymbol{w}_j^{(t)\top}\widehat{\bSigma}-\boldsymbol{e}_j$ is the gradient of $\L_n(\boldsymbol{w})$ evaluated at $\boldsymbol{w}_j^{(t)}$. Let $S^t=\text{supp}(\boldsymbol{w}_j^{(t)})$, $S^{t+1}=\text{supp}(\boldsymbol{w}_j^{(t+1)})$, and define $I^t=S^{t+1}\cup S^{t}\cup S$. Let $\boldsymbol{n}_1^t,\boldsymbol{n}_2^t,\dots,\boldsymbol{n}_{s^*}^t$ be the noise vectors added to $\boldsymbol{w}_j^{(t)}-\eta^0\nabla\L_n(\boldsymbol{w}_j^{(t)})$ during the peeling mechanism over a total of $s^*$ iterations in the $t$th step, and define $\boldsymbol{N}^t=4\sum_{i\in[s^*]}\|\boldsymbol{n}_i^t\|^2_{\infty}$. Then we have the following decomposition:
\begin{equation*}
\begin{split}
\langle\boldsymbol{w}_j^{(t+1)}-\boldsymbol{w}_j^{(t)},\boldsymbol{g}^t\rangle+\frac{\gamma}{2}\|\boldsymbol{w}_j^{(t+1)}-\boldsymbol{w}_j^{(t)}\|_2^2=&\frac{\gamma}{2}\|\boldsymbol{w}_{j,I^t}^{(t+1)}-\boldsymbol{w}_{j,I^t}^{(t)}+\frac{\eta}{\gamma}\boldsymbol{g}^t_{I^t}\|_2^2-\frac{\eta^2}{2\gamma}\|\boldsymbol{g}^t_{I^t}\|_2^2\\
&+(1-\eta)\langle\boldsymbol{w}_j^{(t+1)}-\boldsymbol{w}_j^{(t)},\boldsymbol{g}^t\rangle,
\end{split}
\end{equation*}
where $\gamma$ is defined in \eqref{eq:gamma}, and we introduce the notation $\eta:=\gamma\cdot\eta^0$.

We first consider the first two terms. Let $R$ be a subset of $S^t\backslash S^{t+1}$ such that $|R|=|I^t\backslash (S^t\cup S)|=|S^{t+1}\backslash (S^{t}\cup S)|$. Then, using the fact that $\boldsymbol{w}^{(t)}_{j,I^t/(S^t\cup S)}=\boldsymbol{0}$, and by Lemma 3.4 in \cite{cai2019cost}, we have, for every $c>1$,
\begin{equation*}
\begin{split}
&\frac{\eta^2}{\gamma^2}\|\boldsymbol{g}^t_{I^t\backslash(S^t\cup S)}\|_2^2=\|\boldsymbol{w}_{j,I^t\backslash(S^t\cup S)}^{(t)}-\frac{\eta}{\gamma}\boldsymbol{g}^t_{I^t\backslash(S^t\cup S)}\|_2^2\geq (1-1/c)\|\boldsymbol{w}_{j,R}^{(t)}-\frac{\eta}{\gamma}\boldsymbol{g}^t_{R}\|_2^2-c\boldsymbol{N}^t.
\end{split}
\end{equation*}
Since $\boldsymbol{w}_{j}^{(t+1)}$ is obtained by selecting the noisy maximum of $\boldsymbol{w}_j^{(t+0.5)}$ and then adding noise, we can write $\boldsymbol{w}_{j}^{(t+1)}=\tilde{\boldsymbol{w}}_{j}^{(t+1)}+\tilde{\boldsymbol{n}}_{S^{t+1}}$, where $\tilde{\boldsymbol{w}}_{j}^{(t+1)}$is the vector corresponding to the noisy maximum index of $\boldsymbol{w}^{(t+0.5)}_j$ and $\tilde{\boldsymbol{n}}_{S^{t+1}}$ represents the additional noise introduced by the peeling mechanism. Then we have
\begin{equation*}
\begin{split}
&\frac{\gamma}{2}\|\boldsymbol{w}_{j,I^t}^{(t+1)}-\boldsymbol{w}_{j,I^t}^{(t)}+\frac{\eta}{\gamma}\boldsymbol{g}^t_{I^t}\|_2^2-\frac{\eta^2}{2\gamma}\|\boldsymbol{g}^t_{I^t/(S^t\cup S)}\|_2^2\\
\leq&\frac{\gamma}{2}\|\tilde{\boldsymbol{n}}_{S^{t+1}}\|_2^2+\frac{\gamma}{2}\|\tilde{\boldsymbol{w}}_{j,I^t}^{(t+1)}-\boldsymbol{w}_{j,I^t}^{(t)}+\frac{\eta}{\gamma}\boldsymbol{g}^t_{I^t}\|_2^2-\frac{\gamma}{2}(1-1/c)\|\boldsymbol{w}_{j,R}^{(t)}-\frac{\eta}{\gamma}\boldsymbol{g}^t_{R}\|_2^2+\frac{c\gamma}{2}\boldsymbol{N}^t\\
=&\frac{\gamma}{2}\|\tilde{\boldsymbol{w}}_{j,I^t}^{(t+1)}-\boldsymbol{w}_{j,I^t}^{(t)}+\frac{\eta}{\gamma}\boldsymbol{g}^t_{I^t}\|_2^2-\frac{\gamma}{2}\|\tilde{\boldsymbol{w}}_{j,R}^{(t+1)}-\boldsymbol{w}_{j,R}^{(t)}+\frac{\eta}{\gamma}\boldsymbol{g}^t_{R}\|_2^2+\frac{\gamma}{2}(1/c)\|\boldsymbol{w}_{j,R}^{(t)}-\frac{\eta}{\gamma}\boldsymbol{g}^t_{R}\|_2^2\\
&+\frac{\gamma}{2}\|\tilde{\boldsymbol{n}}_{S^{t+1}}\|_2^2+\frac{c\gamma}{2}\boldsymbol{N}^t\\
\leq&\frac{\gamma}{2}\|\tilde{\boldsymbol{w}}_{j,I^t\backslash R}^{(t+1)}-\boldsymbol{w}_{j,I^t\backslash R}^{(t)}+\frac{\eta}{\gamma}\boldsymbol{g}^t_{I^t\backslash R}\|_2^2+\frac{\eta^2}{2c\gamma}(1+1/c)\|\boldsymbol{g}^t_{I^t\backslash(S^t\cup S)}\|_2^2+\frac{\gamma}{2}\|\tilde{\boldsymbol{n}}_{S^{t+1}}\|_2^2+c\gamma\boldsymbol{N}^t,
\end{split}
\end{equation*}
where we apply the selection criterion in the first inequality, use the fact that $\tilde{\boldsymbol{w}}_{j,R}^{(t+1)}=\boldsymbol{0}$ in the second equality, and apply Lemma 3.4 in \cite{cai2019cost} to $\|\boldsymbol{w}_{j,R}^{(t)}-\frac{\eta}{\gamma}\boldsymbol{g}^t_{R}\|_2^2$ in the last inequality. By Lemma A.3. in \cite{cai2019cost}, we have
\begin{equation*}
\begin{split}
&\|\tilde{\boldsymbol{w}}_{j,I^t\backslash R}^{(t+1)}-\boldsymbol{w}_{j,I^t\backslash R}^{(t)}+\frac{\eta}{\gamma}\boldsymbol{g}^t_{I^t\backslash R}\|_2^2\leq\frac{3}{2}\frac{|I^t/R|-s^*}{|I^t/R|-s_j}\|\hat{\boldsymbol{w}}^o_{j,I^t\backslash R}-\boldsymbol{w}_{j,I^t\backslash R}^{(t)}+\frac{\eta}{\gamma}\boldsymbol{g}^t_{I^t\backslash R}\|_2^2+3\boldsymbol{N}^t,
\end{split}
\end{equation*}
where $\hat{\boldsymbol{w}}_{j}^o$ is the oracle estimator. Plugging it into the previous inequality, we have
\begin{equation}
\begin{split}
\label{eq:second1}
&\frac{\gamma}{2}\|\boldsymbol{w}_{j,I^t}^{(t+1)}-\boldsymbol{w}_{j,I^t}^{(t)}+\frac{\eta}{\gamma}\boldsymbol{g}^t_{I^t}\|_2^2-\frac{\eta^2}{2\gamma}\|\boldsymbol{g}^t_{I^t\backslash(S^t\cup S)}\|_2^2\\
\leq&\frac{3\gamma}{4}\frac{|I^t/R|-s^*}{|I^t/R|-s_j}\|\hat{\boldsymbol{w}}^o_{j,I^t}-\boldsymbol{w}_{j,I^t}^{(t)}+\frac{\eta}{\gamma}\boldsymbol{g}^t_{I^t}\|_2^2+3\gamma/2\boldsymbol{N}^t+\frac{\eta^2(1+1/c)}{2c\gamma}\|\boldsymbol{g}^t_{I^t/(S^t\cup S)}\|_2^2\\
&+\frac{\gamma}{2}\|\tilde{\boldsymbol{n}}_{S^{t+1}}\|_2^2+c\gamma\boldsymbol{N}^t\\
\leq&\frac{3\gamma}{4}\frac{2s_j}{s^*+s_j}\|\hat{\boldsymbol{w}}^o_{j,I^t}-\boldsymbol{w}_{j,I^t}^{(t)}+\frac{\eta}{\gamma}\boldsymbol{g}^t_{I^t}\|_2^2+3\gamma/2\boldsymbol{N}^t+\frac{\eta^2(1+1/c)}{2c\gamma}\|\boldsymbol{g}^t_{I^t/(S^t\cup S)}\|_2^2\\
&+\frac{\gamma}{2}\|\tilde{\boldsymbol{n}}_{S^{t+1}}\|_2^2+c\gamma\boldsymbol{N}^t,
\end{split}
\end{equation}
where in the second inequality, we use the fact $|I^t\backslash R|\leq2s_j+s^*$ and $I^t\backslash (S^t\cup S)\subset S^{t+1}$. Furthermore,
\begin{equation}
\label{eq:second2}
    \begin{split}
        &\frac{3\gamma}{4}\frac{2s_j}{s^*+s_j}\|\hat{\boldsymbol{w}}_{j,I^t}^o-\boldsymbol{w}_{j,I^t}^{(t)}+\frac{\eta}{\gamma}\boldsymbol{g}^t_{I^t}\|_2^2\\
        \leq&\frac{3s_j}{s^*+s_j}(\eta\langle\hat{\boldsymbol{w}}_{j}^o-\boldsymbol{w}_{j}^{(t)},\boldsymbol{g}^t\rangle+\frac{\gamma}{2}\|\hat{\boldsymbol{w}}_{j}^o-\boldsymbol{w}_{j}^{(t)}\|_2+\frac{\eta^2}{2\gamma}\|\boldsymbol{g}^t_{I^t}\|_2^2)\\
        \leq&\frac{3s_j}{s^*+s_j}(\eta\mathcal{L}_n(\hat{\boldsymbol{w}}_{j}^o)-\eta\mathcal{L}_n(\boldsymbol{w}_{j}^{(t)})+\frac{\gamma-\eta\alpha}{2}\|\hat{\boldsymbol{w}}_{j}^o-\boldsymbol{w}_{j}^{(t)}\|_2+\frac{\eta^2}{2\gamma}\|\boldsymbol{g}^t_{I^t}\|_2^2),
    \end{split}
\end{equation}
where the constant $\alpha$ is defined in \eqref{eq:gamma}.

Next, we consider the second term, which can be decomposed as follows:
\begin{equation*}
\begin{split}
\langle\boldsymbol{w}_j^{(t+1)}-\boldsymbol{w}_j^{(t)},\boldsymbol{g}^t\rangle&=\langle\tilde{\boldsymbol{w}}_{j,S^{t+1}}^{(t+1)}-\boldsymbol{w}_{j,S^{t+1}}^{(t)},\boldsymbol{g}^t_{S^{t+1}}\rangle+\langle\tilde{\boldsymbol{n}}_{S^{t+1}},\boldsymbol{g}^t_{S^{t+1}}\rangle\\
&-\langle\boldsymbol{w}_{j,S^t\backslash S^{t+1}}^{(t)},\boldsymbol{g}_{S^t\backslash S^{t+1}}^t\rangle\\
&\leq-\frac{\eta}{\gamma}\|\boldsymbol{g}^t_{S^{t+1}}\|_2^2+c\|\boldsymbol{n}_{S^{t+1}}\|_2^2+(1/4c)\|\boldsymbol{g}^t_{S^{t+1}}\|^2_2\\
&-\langle\boldsymbol{w}_{j,S^t\backslash S^{t+1}}^{(t)},\boldsymbol{g}_{S^t\backslash S^{t+1}}^t\rangle,
\end{split}
\end{equation*}
where we use the inequality $ab\leq a^2/2+b^2/2$. The last term satisfies
\begin{equation*}
\begin{split}
&-\langle\boldsymbol{w}_{j,S^t\backslash S^{t+1}}^{(t)},\boldsymbol{g}_{S^t\backslash S^{t+1}}^t\rangle\leq\frac{\gamma}{2\eta}\bigg\{\|\boldsymbol{w}_{j,S^t\backslash S^{t+1}}^{(t)}-\frac{\eta}{\gamma}\boldsymbol{g}_{S^t\backslash S^{t+1}}^t\|_2^2-(\frac{\eta}{\gamma})^2\|\boldsymbol{g}_{S^t\backslash S^{t+1}}^t\|_2^2\bigg\},
\end{split}
\end{equation*}
by simple algebra. By applying Lemma 3.4 in \cite{cai2019cost} to $\|\boldsymbol{w}_{j,S^t\backslash S^{t+1}}^{(t)}-\frac{\eta}{\gamma}\boldsymbol{g}_{S^t\backslash S^{t+1}}^t\|_2^2$, we have
\begin{equation*}
\begin{split}
-\langle\boldsymbol{w}_{j,S^t\backslash S^{t+1}}^{(t)},\boldsymbol{g}_{S^t\backslash S^{t+1}}^t\rangle
&\leq\frac{\gamma}{2\eta}\{(1+1/c)\|\tilde{\boldsymbol{w}}_{j,S^{t+1}\backslash S^{t}}^{(t+1)}\|_2^2+(1+c)\boldsymbol{N}^t\}-\frac{\eta}{2\gamma}\|\boldsymbol{g}_{S^t\backslash S^{t+1}}^t\|_2^2\\
&=\frac{\eta}{2\gamma}\{(1+1/c)\|\boldsymbol{g}_{S^{t+1}\backslash S^{t}}^{t}\|_2^2+(1+c)\frac{\gamma}{\eta}\boldsymbol{N}^t\}-\frac{\eta}{2\gamma}\|\boldsymbol{g}_{S^t\backslash S^{t+1}}^t\|_2^2,
\end{split}
\end{equation*}
where we use the fact that $\tilde{\boldsymbol{w}}_{j,S^{t+1}\backslash S^{t}}^{(t+1)}=\frac{\eta}{\gamma}\boldsymbol{g}_{S^{t+1}\backslash S^{t}}^{t}$ in the second equality. Combining the results above, we have:
\begin{equation*}
\begin{split}
\langle\boldsymbol{w}_j^{(t+1)}-\boldsymbol{w}_j^{(t)},\boldsymbol{g}^t\rangle
\leq&-\frac{\eta}{\gamma}\|\boldsymbol{g}^t_{S^{t+1}}\|_2^2+c\|\tilde{\boldsymbol{n}}_{S^{t+1}}\|_2^2+(1/4c)\|\boldsymbol{g}^t_{S^{t+1}}\|^2_2\\
&+\frac{\eta}{2\gamma}\big\{(1+1/c)\|\boldsymbol{g}_{S^{t+1}\backslash S^{t}}^{t}\|_2^2+(1+c)\frac{\gamma}{\eta}\boldsymbol{N}^t\big\}-\frac{\eta}{2\gamma}\|\boldsymbol{g}_{S^t\backslash S^{t+1}}^t\|_2^2\\
\leq&\frac{\eta}{2\gamma}\|\boldsymbol{g}_{S^{t+1}\backslash S^{t}}^{t}\|_2^2-\frac{\eta}{2\gamma}\|\boldsymbol{g}_{S^t\backslash S^{t+1}}^t\|_2^2-\frac{\eta}{\gamma}\|\boldsymbol{g}^t_{S^{t+1}}\|_2^2\\
&+(1/c)(4+\frac{\eta}{2\gamma})\|\boldsymbol{g}^t_{S^{t+1}}\|^2_2+c\|\tilde{\boldsymbol{n}}_{S^{t+1}}\|_2^2+(1+c)\frac{\gamma}{\eta}\boldsymbol{N}^t\\
\leq&-\frac{\eta}{2\gamma}\|\boldsymbol{g}_{S^t\cup S^{t+1}}^t\|_2^2+(1/c)(4+\frac{\eta}{2\gamma})\|\boldsymbol{g}^t_{S^{t+1}}\|^2_2+c\|\tilde{\boldsymbol{n}}_{S^{t+1}}\|_2^2+(1+c)\frac{\gamma}{\eta}\boldsymbol{N}^t,
\end{split}
\end{equation*}
where we use simple algebra. Then, by plugging in the previous results into $\L_n(\boldsymbol{w}_j^{(t+1)})-\L_n(\boldsymbol{w}_j^{(t)})$, we have:
\begin{equation*}
\begin{split}
\L_n(\boldsymbol{w}_j^{(t+1)})-\L_n(\boldsymbol{w}_j^{(t)})\leq&\frac{\gamma}{2}\|\boldsymbol{w}_{j,I^t}^{(t+1)}-\boldsymbol{w}_{j,I^t}^{(t)}+\frac{\eta}{\gamma}\boldsymbol{g}^t_{I^t}\|_2^2-\frac{\eta^2}{2\gamma}\|\boldsymbol{g}^t_{I^t}\|_2^2-\frac{\eta(1-\eta)}{2\gamma}\|\boldsymbol{g}_{S^t\cup S^{t+1}}^t\|_2^2\\
&+(1-\eta)(1/c)(4+\frac{\eta}{2\gamma})\|\boldsymbol{g}^t_{S^{t+1}}\|^2_2+c(1-\eta)\|\tilde{\boldsymbol{n}}_{S^{t+1}}\|_2^2\\
&+(1-\eta)(1+c)\frac{\gamma}{\eta}\boldsymbol{N}^t\\
\leq&\frac{\gamma}{2}\|\boldsymbol{w}_{j,I^t}^{(t+1)}-\boldsymbol{w}_{j,I^t}^{(t)}+\frac{\eta}{\gamma}\boldsymbol{g}^t_{I^t}\|_2^2-\frac{\eta^2}{2\gamma}\|\boldsymbol{g}^t_{I^t\backslash(S^t\cup S)}\|_2^2-\frac{\eta^2}{2\gamma}\|\boldsymbol{g}^t_{S^t\cup S}\|_2^2\\
&-\frac{\eta(1-\eta)}{2\gamma}\|\boldsymbol{g}_{S^{t+1}\backslash(S^t\cup S)}^t\|_2^2+(1-\eta)(1/c)(4+\frac{\eta}{2\gamma})\|\boldsymbol{g}^t_{S^{t+1}}\|^2_2\\
&+c(1-\eta)\|\tilde{\boldsymbol{n}}_{S^{t+1}}\|_2^2+(1-\eta)(1+c)\frac{\gamma}{\eta}\boldsymbol{N}^t,
\end{split}
\end{equation*}
where we use the fact that $S^{t+1}\backslash(S^t\cup S)$ is a subset of $S^t\cup S^{t+1}$. Note that the first two terms
\begin{equation*}  
\frac{\gamma}{2}\|\boldsymbol{w}_{j,I^t}^{(t+1)}-\boldsymbol{w}_{j,I^t}^{(t)}+\frac{\eta}{\gamma}\boldsymbol{g}^t_{I^t}\|_2^2-\frac{\eta^2}{2\gamma}\|\boldsymbol{g}^t_{I^t\backslash(S^t\cup S)}\|_2^2
\end{equation*}
are analyzed in \eqref{eq:second1} and \eqref{eq:second2}. Combining all the results, we obtain:
\begin{equation*}
\begin{split}  
\L_n(\boldsymbol{w}_j^{(t+1)})-\L_n(\boldsymbol{w}_j^{(t)})\leq&\frac{3s_j}{s^*+s_j}(\eta\L_n(\hat{\boldsymbol{w}}_j^o)-\eta\L_n(\boldsymbol{w}_j^{(t)})+\frac{\gamma-\eta\alpha}{2}\|\hat{\boldsymbol{w}}_{j}-\boldsymbol{w}_j^{(t)}\|_2^2+\frac{\eta^2}{2\gamma}\|\boldsymbol{g}^t_{I^t}\|_2^2)\\
&-\frac{\eta^2}{4\gamma}\|\boldsymbol{g}^t_{S^t\cup S}\|_2^2-\frac{\eta(1-\eta)}{4\gamma}\|\boldsymbol{g}_{S^{t+1}\backslash(S^t\cup S)}^t\|_2^2\\
&+\frac{\gamma}{2}(4+3c)\frac{\gamma}{2\eta}\boldsymbol{N}^t+(\frac{\gamma}{2}+\frac{c}{3})\|\tilde{\boldsymbol{n}}_{S^{t+1}}\|_2^2,
\end{split}
\end{equation*}
where we let $\eta=2/3$	and choose the constant $c$ to be sufficiently large.
    
By choosing $s^*=72(\gamma/\alpha)^2s_j=\rho L^4s_j$ where $\rho$ is the absolute constant, we ensure that $3s_j/(s^*+s_j)\leq\alpha^2/\{24\gamma(\gamma-\eta\alpha)\}\leq 1/8$. Then,
\begin{equation*}
\begin{split}
&\L_n(\boldsymbol{w}_j^{(t+1)})-\L_n(\boldsymbol{w}_j^{(t)})\\
\leq&\frac{3s_j}{s_j+s^*}\eta(\L_n(\hat{\boldsymbol{w}}_j^o)-\L_n(\boldsymbol{w}_j^{(t)})\}+\frac{\alpha^2}{48\gamma}\|\hat{\boldsymbol{w}}_j^o-\boldsymbol{w}_j^{(t)}\|_2^2+\frac{1}{36\gamma}\|\boldsymbol{g}^t_{I^t}\|_2^2\\
&-\frac{1}{9\gamma}\|\boldsymbol{g}^t_{S^t\cup S}\|_2^2-\frac{1}{18\gamma}\|\boldsymbol{g}_{S^{t+1}/(S^t\cup S)}^t\|_2^2++\frac{\gamma}{2}(4+3c)\frac{\gamma}{2\eta}\boldsymbol{N}^t+(\frac{\gamma}{2}+\frac{c}{3})\|\tilde{\boldsymbol{n}}_{S^{t+1}}\|_2^2\\
\leq&\frac{3s_j}{s_j+s^*}\eta(\L_n(\hat{\boldsymbol{w}}_j^o)-\L_n(\boldsymbol{w}_j^{(t)})\}+\frac{\alpha^2}{48\gamma}\|\hat{\boldsymbol{w}}_j^o-\boldsymbol{w}_j^{(t)}\|_2^2-\frac{3}{36\gamma}\|\boldsymbol{g}^t_{S^t\cup S}\|_2^2\\
&+\frac{\gamma}{2}(4+3c)\frac{\gamma}{2\eta}\boldsymbol{N}^t+(\frac{\gamma}{2}+\frac{c}{3})\|\tilde{\boldsymbol{n}}_{S^{t+1}}\|_2^2\\
\leq &-(\frac{3\alpha}{72\gamma}+\frac{2s^*}{s_j+s^*})(\L_n(\boldsymbol{w}_j^{(t)})-\L_n(\hat{\boldsymbol{w}}_j^o))+c_n(\|\tilde{\boldsymbol{n}}_{S^{t+1}}\|_2^2+\boldsymbol{N}^t)\\
\leq &-\frac{1}{\rho L^2}(\L_n(\boldsymbol{w}_j^{(t)})-\L_n(\hat{\boldsymbol{w}}_j^o))+c_n(\|\tilde{\boldsymbol{n}}_{S^{t+1}}\|_2^2+\boldsymbol{N}^t),
\end{split}
\end{equation*}
where we use the fact that $\|\boldsymbol{g}^t_{I^t}\|_2^2=\|\boldsymbol{g}^t_{S^t\cup S}\|_2^2+\|\boldsymbol{g}^t_{S^{t+1}\backslash(S^t\cup S)}\|_2^2$ in the second inequality, and apply Lemma A.4 from \cite{cai2019cost} in the third inequality for an appropriate constant $c_n$. Thus, we have
\begin{equation*}
\L_n(\boldsymbol{w}_j^{(t+1)})-\L_n(\hat{\boldsymbol{w}}_j^o)\leq(1-\frac{1}{\rho L^2})(\L_n(\boldsymbol{w}_j^{(t)})-\L_n(\hat{\boldsymbol{w}}_j^o))+c_n(\|\tilde{\boldsymbol{n}}_{S^{t+1}}\|_2^2+\boldsymbol{N}^t).
\end{equation*}
Let $\tilde{\boldsymbol{N}}_t=c_n(\|\boldsymbol{n}_{S^{t+1}}\|_2^2+\boldsymbol{N}^t)$ and iterate above equation,
\begin{equation*}
\begin{split}
&\L_n(\boldsymbol{w}_j^{(T)})-\L_n(\hat{\boldsymbol{w}}_j^o)\leq(1-\frac{1}{\rho L^2})^T\{\L_n(\boldsymbol{w}_j^{(0)})-\L_n(\hat{\boldsymbol{w}}_j^o)\}+\sum_{k=0}^{T-1}(1-\frac{1}{\rho L^2})^{T-k-1}\tilde{\boldsymbol{N}}_k.
\end{split}
\end{equation*}
By choosing $T=\Omega\big(\log(n)\big)$, the first term is of order $1/n$, due to the boundedness of $\L_n(\boldsymbol{w}_j^{(0)})-\L_n(\hat{\boldsymbol{w}}_j^o)$. Furthermore, we have:
\begin{equation*}
\begin{split}
&\L_n(\boldsymbol{w}_j^{(T)})-\L_n(\hat{\boldsymbol{w}}_j^o)\geq\L_n(\boldsymbol{w}_j^{(T)})-\L_n(\boldsymbol{w}_j)\geq \alpha/2\|\boldsymbol{w}_j^{(T)}-\boldsymbol{w}_j\|_2^2-\langle\nabla\L_n(\boldsymbol{w}_j),\boldsymbol{w}_j-\boldsymbol{w}_j^{(T)}\rangle,
\end{split}
\end{equation*}
where the first inequality follows from the oracle property of the finite-sample loss function, and the second from algebra and the properties of submatrices. Combining all the results, we have:
\begin{equation*}
\begin{split}
&\alpha/2\|\boldsymbol{w}_j^{(T)}-\boldsymbol{w}_j\|_2^2\leq\|\nabla\L_n(\boldsymbol{w}_j)\|_{\infty}\sqrt{s_j+s^*}\|\boldsymbol{w}_j-\boldsymbol{w}_j^{(T)}\|_2\\
&+1/n+\sum_{k=0}^{T-1}(1-\frac{1}{\rho L^2})^{T-k-1}\tilde{\boldsymbol{N}}_k,
\end{split}
\end{equation*}
where we use Hölder's inequality and the norm inequality $\|\cdot\|_2\leq\sqrt{\|\cdot\|_0}\times\|\cdot\|_{\infty}$. Thus, by treating $t=\|\boldsymbol{w}_j^{(T)}-\boldsymbol{w}_j\|_2$ as the unknown variable, the inequality above becomes a quadratic inequality. Then, by the argument used in the proof of Theorem \ref{thm-adapt}, we have:
\begin{equation*}
\begin{split}
&(2\alpha)^{-2}\|\boldsymbol{w}_j^{(T)}-\boldsymbol{w}_j\|^2_2\leq (\|\nabla\L_n(\boldsymbol{w}_j)\|_{\infty}\sqrt{s_j+s^*})^2\\
&+(2\alpha)\{1/n+\sum_{k=0}^{T-1}(1-\frac{1}{\rho L^2})^{T-k-1}\tilde{\boldsymbol{N}}_k\},
\end{split}
\end{equation*}
It remains to analyze the two terms separately. Note that $\nabla\L_n(\boldsymbol{w}_j)$ is the gradient evaluated at the true parameter. Under the model, each coordinate of $\nabla\L_n(\boldsymbol{w}_j)=\boldsymbol{w}_j^{\top}\widehat{\bSigma}-\boldsymbol{e}_j$ is an average of $n/T$ i.i.d. sub-exponential random variables, where we use the fact that the product of two sub-Gaussian random variables is sub-exponential. Then, by Bernstein’s inequality and the union bound, we have:
\begin{equation*}
\begin{split}
\mathbb{P}(\|\nabla\L_n(\boldsymbol{w}_j)\|_{\infty}&\leq \sqrt{2\log(p)\|\boldsymbol{w}_j^{\top}\boldsymbol{x}_i\boldsymbol{x}_i^{\top}\|_{\phi_1}/(cn/T)})\\
&\geq 1-p\exp\{-2\log(p)\}=1-\exp\{-\log(p)\}.
\end{split}
\end{equation*}
Thus, with probability at least $1-\exp\{-\log(p)\}$, by the choice $s^*=\Omega(s_j)$, the first term is bounded by $c_2s_j\log(p)\log(n)/n$ for a constant $c_2$. The second term satisfies
\begin{equation*}
\begin{split}
\sum_{k=0}^{T-1}(1-\frac{1}{\rho L^2})^{T-k-1}\tilde{\boldsymbol{N}}_k&\leq \sum_{k=0}^{\infty}(1-\frac{1}{\rho L^2})^{k}\max_{0\leq t\leq T}\tilde{\boldsymbol{N}}_t\leq \rho L^2\max_{0\leq t\leq T-1}\tilde{\boldsymbol{N}}_t.
\end{split}    
\end{equation*}
Note that $\tilde{\boldsymbol{N}}_t=c_n(\|\tilde{\boldsymbol{n}}_{S^{t+1}}\|_2^2+\boldsymbol{N}^t)$. The term $\max_t\{\|\tilde{\boldsymbol{n}}_{S^{t+1}}\|_{\infty},\|\boldsymbol{N}^t\|_{\infty}\}$ is the maximum over $T(ps^*+s^*)$ independent sub-exponential random variables.
Thus, by a large deviation bound (Chernoff’s inequality) and the union bound, we have:
\begin{equation*}
\begin{split}
&\mathbb{P}(\max_t\{\|\boldsymbol{n}_{S^{t+1}}\|_{\infty},\|\boldsymbol{n}_t\|_{\infty}\}\leq 4\eta^0\frac{BT\sqrt{s^*\log(T/\delta)}}{|S_t|\epsilon}\log(p)/c)\\
\geq& 1-(ps^*+s^*)T\exp(-4\log(p))\geq 1-\exp(-\log(p)).
\end{split}
\end{equation*}
Therefore, with probability at least $1-\exp(-\log(p))$, the second term is bounded by $$c_3s_j^2\log (p)^2\log(1/\delta)\log(n)^5/(n^2\varepsilon^2)$$
for a constant $c_3$, where we use the condition $s^*=\Omega(s_j)$. Combining both terms, we obtain:
\begin{equation*}
\|\boldsymbol{w}_j^{(T)}-\boldsymbol{w}_j\|_2^2\leq c_2^{\prime}\frac{s_j\log(p)\log(n)}{n}+c_3^{\prime}\frac{s_j^2\log (p)^2\log(1/\delta)\log(n)^5}{n^2\varepsilon^2}.
\end{equation*}
\end{proof}

\subsubsection{Proof of Lemma \ref{lem:privacy_adap_w}}
\begin{proof}[Proof of Lemma \ref{lem:privacy_adap_w}]
For $0 \leq k \leq K$, the $\ell_{\infty}$ sensitivity of the gradient at the $t$-th iteration, given by 
$$-\eta^0\{\boldsymbol{e}_j-\sum_{i\in S_t}\boldsymbol{x}_i\Pi_R(\boldsymbol{x}_i^{\top}\boldsymbol{w}^{(t)})/|S_t|\},$$
as defined in line 8 of Algorithm \ref{alg3}, is
\begin{equation*}
\begin{split}
&\sup_{\boldsymbol{x}_i,\boldsymbol{x}_i^{\prime}}\eta^0/|S_t|\cdot\|\boldsymbol{x}_i\Pi_R(\boldsymbol{x}_i^{\top}\boldsymbol{w}^{(t)})-\boldsymbol{x}_i^{\prime}\Pi_R(\boldsymbol{x}_i^{\prime\top}\boldsymbol{w}^{(t)})\|_{\infty}
\leq \eta^0 T/n\cdot 2Rc_x,
\end{split}
\end{equation*}
where we use the fact that $\|\boldsymbol{x}\|_{\infty}\leq c_x$, by Condition \ref{cond:1}. By the advanced composition theorem, reporting the gradient is $(\varepsilon/\{T(K+2)\},\delta/\{T(K+1)\})$-DP. For $0\leq k\leq K$, outputting $\hat{\boldsymbol{w}}(k)$ is $(\varepsilon/(K+2),\delta/(K+1))$-DP by the standard composition theorem. Finally, by the composition theorem, returning all $\{\hat{\boldsymbol{w}}(k)\}_{k=0}^{K}$ is $(\varepsilon(K+1)/(K+2),\delta)$-DP.
	
Next, we consider the sensitivity of the BIC loss. Note that
\begin{equation*}
\begin{split}
&\sup_{\boldsymbol{x}_i,\boldsymbol{x}_i^{\prime}}|\Pi_R(\hat{\boldsymbol{w}}(k)^{\top}\boldsymbol{x}_i)\Pi_R(\boldsymbol{x}_i^{\top}\hat{\boldsymbol{w}}(k))/2-\Pi_R(\hat{\boldsymbol{w}}(k)^{\top}\boldsymbol{x}_i^{\prime})\Pi_R(\boldsymbol{x}_i^{\prime\top}\hat{\boldsymbol{w}}(k))/2|\leq R^2.
\end{split}
\end{equation*}
The BIC selection procedure returns the noisy minimizer. By Claim 3.9 in \cite{dwork2014algorithmic}, the output is $(\varepsilon/(K+2),0)$-DP. Finally, by applying the composition theorem, Algorithm \ref{alg3} is $(\varepsilon,\delta)$-DP.
\end{proof}

\subsubsection{Proofs of Lemma \ref{lem:est_w_bic}}

\begin{proof}[Proof of Lemma \ref{lem:est_w_bic}]
	
The proof follows similarly to that of Theorem \ref{thm-adapt}. Let $\hat{k}$ be the selected index corresponding to $\hat{\boldsymbol{w}}_j$ in Algorithm \ref{alg3}, and let $k^*$ denote the true parameter such that $2^{k^*-1}<\rho L^4 s_j\leq 2^{k^*}$. Note that $k^*$ is uniquely determined by $s_j$ and the constants $(\rho,L)$. By the condition $2^K>\rho L^4s_j$ in Lemma \ref{lem:est_w_bic}, the true parameter $k^*<K$ is feasible. Recall that the event
\begin{equation*}
\begin{split}
E_0:=\big\{&\inf_{\|\bu\|_0=o(n),\|\bu\|_2=1} {\bu}^{\top}\widehat{\bSigma}{\bu}\geq c_{\gamma_l}\|\bu\|_2^2,\sup_{\|\bu\|_0=o(n),\|\bu\|_2=1} {\bu}^{\top}\widehat{\bSigma}{\bu}\leq c_{\gamma_u}\|\bu\|_2^2\big\},
\end{split}
\end{equation*}
holds with high probability, as shown in Theorem \ref{thm-adapt}. We now define the event under which the truncation operator does not take effect in the estimation procedure:
\begin{equation*}
\begin{split}
E_3:=\{\max_{t=0,\dots,T-1}\max_{k=0,\dots,K}|\boldsymbol{x}_i^{\top}\boldsymbol{w}_{j,k}^{(t)}|\leq R\text{ for all }i\in S_t\big\}.
\end{split}
\end{equation*}
By the proof of Lemma \ref{lem:est_w}, the event $E_3$ occurs with high probability. Define the event
\begin{equation*}
\begin{split}
E_4&=\big\{\|\boldsymbol{w}_{j,k}^{(T)}-\boldsymbol{w}_j\|_2^2\leq c_2^{\prime}\frac{2^k\log (p)\log (n)}{n}\\
&+c_3^{\prime}\frac{2^{2k}\log (p)^2\log(1/\delta)\log(n)^7}{n^2\varepsilon^2}\text{ for all $k$ such that }2^k\geq \rho L^4s_j\big\}.
\end{split}
\end{equation*}
By Lemma \ref{lem:est_w}, and replacing $T$ with $KT$ in the term $\max_{0\leq t\leq T-1}\tilde{\boldsymbol{N}}_t$ in the proof of Lemma \ref{lem:est_w}, the event $E_4$ occurs with probability at least $1-\exp(-c_1^{\prime}\log n)$. Note that under the event $E_3\cap E_4$, for $2^k\geq\rho L^4 s_j$, we have
\begin{equation*}
\begin{split}
|\bx_i^{\top}\boldsymbol{w}_{j,k}^{(T)}|&\leq|\bx_i^{\top}\boldsymbol{w}_j|+|\bx_i^{\top}(\boldsymbol{w}_j-\boldsymbol{w}_{j,k}^{(T)})|\leq|\bx_i^{\top}\boldsymbol{w}_j|+\|\bx_i\|_{\infty}\|\boldsymbol{w}_j-\boldsymbol{w}_{j,k}^{(T)}\|_{1}\\
&\leq |\bx_i^{\top}\boldsymbol{w}_j| + c_x\sqrt{\|\boldsymbol{w}_j-\boldsymbol{w}_{j,k}^{(T)}\|_{0}}\cdot\|\boldsymbol{w}_j-\boldsymbol{w}_{j,k}^{(T)}\|_{2},
\end{split}
\end{equation*}
where the first inequality follows from the triangle inequality, the second inequality follows from Hölder's inequality, and the last inequality uses the bound $\|\cdot\|_1\leq\sqrt{\|\cdot\|_0}\times\|\cdot\|_2$. By the assumptions in Lemma \ref{lem:est_w_bic}, we have 
\begin{equation*}
    \begin{split}
        c_x\|\boldsymbol{w}_j-\boldsymbol{w}_{j,k}^{(T)}\|_{1}&\leq c_x\sqrt{c_2^{\prime}\frac{2^{2k}\log (p)\log (n)}{n}+c_3^{\prime}\frac{2^{3k}\log (p)^2\log(1/\delta)\log(n)^7}{n^2\varepsilon^2}}\\
        &\leq c_x\sqrt{c_2^{\prime}\frac{2^{2K}\log (p)\log (n)}{n}+c_3^{\prime}\frac{2^{3K}\log (p)^2\log(1/\delta)\log(n)^7}{n^2\varepsilon^2}}\\
        &=O\bigg(\sqrt{\frac{n\log(p)\log(n)}{n\log(p)^4}+\frac{n^{3/2}\log (p)^2\log(1/\delta)\log(n)^7}{n^2\varepsilon^2\log (p)^6}}\bigg)\\
        &=O\bigg(\sqrt{\frac{1}{\log(p)^2}+\frac{\log(1/\delta)\log(n)^3}{n^{1/2}\varepsilon^2}}\bigg)=o(1)
    \end{split}
\end{equation*}
and thus, for a proper choice of $R$, the parameter clipping does not occur for $2^k\geq\rho L^4s_j$.

By the oracle inequality for the BIC criterion, we obtain the following expression as a direct consequence of the selection procedure:
\begin{equation*}
\begin{split}
&\hat{\boldsymbol{w}}_j^{\top}\widehat{\bSigma}\hat{\boldsymbol{w}}_j/2-\hat{\boldsymbol{w}}_j^{\top}\boldsymbol{e}_j+c_Bf(p,\hat{k})/n+z_{\hat{k}}/n\\
&\leq \hat{\boldsymbol{w}}_j(k^*)^{\top}\widehat{\bSigma}\hat{\boldsymbol{w}}_j(k^*)/2-\hat{\boldsymbol{w}}_j(k^*)^{\top}\boldsymbol{e}_j+c_Bf(p,k^*)/n+z_{k^*}/n,
\end{split}
\end{equation*}
where we define the function $f(n,k)=2^k\log(p)\log(n)+\{2^{2k}\log (p)^2\log(1/\delta)\log(n)^7\}/(n\varepsilon^2)$, and $z_{\hat{k}}$ and $z_{k^*}$ are the noise terms added for privacy. Furthermore, by taking the maximum of the additional noise terms, we have:
\begin{equation*}
\begin{split}			&\hat{\boldsymbol{w}}_j^{\top}\widehat{\bSigma}\hat{\boldsymbol{w}}_j/2-\hat{\boldsymbol{w}}_j^{\top}\boldsymbol{e}_j+c_Bf(p,\hat{k})/n\\
&\leq \hat{\boldsymbol{w}}_j(k^*)^{\top}\widehat{\bSigma}_T\hat{\boldsymbol{w}}_j(k^*)/2-\hat{\boldsymbol{w}}_j(k^*)^{\top}\boldsymbol{e}_j+c_Bf(p,k^*)/n+\epsilon_{privacy}/n,
\end{split}
\end{equation*}
where $\epsilon_{privacy}$ is defined as $2\sup_{k=0,\dots,K}|z_k|$. By simple algebra, the above inequality implies that
\begin{equation*}
\begin{split}
(\hat{\boldsymbol{w}}_j-\hat{\boldsymbol{w}}_j(k^*))^{\top}\widehat{\bSigma}(\hat{\boldsymbol{w}}_j-\hat{\boldsymbol{w}}_j(k^*))/2
&\leq |\langle\hat{\boldsymbol{w}}_j-\hat{\boldsymbol{w}}_j(k^*),\hat{\boldsymbol{w}}_j(k^*)^{\top}\widehat{\bSigma}-\boldsymbol{e}_j\rangle|\\
&+c_B/n\{f(p,k^*)-f(p,\hat{k})\}+\epsilon_{privacy}/n.
\end{split}
\end{equation*}
Let $\widehat{U}=\text{supp}(\hat{\boldsymbol{w}}_j-\hat{\boldsymbol{w}}_j(k^*))$.
Note that
\[
|\widehat{U}|=\sqrt{n}/\log(p)^2+s_j=o(n).
\]
Hence, under the event $E_0$, we have:
\begin{align}
&  c_{\gamma_l}/2\|\hat{\boldsymbol{w}}_j-\hat{\boldsymbol{w}}_j(k^*)\|_2^2\leq(\hat{\boldsymbol{w}}_j-\hat{\boldsymbol{w}}_j(k^*))^{\top}\widehat{\bSigma}(\hat{\boldsymbol{w}}_j-\hat{\boldsymbol{w}}_j(k^*))/2\nonumber\\
\leq& |\langle\hat{\boldsymbol{w}}_j-\hat{\boldsymbol{w}}_j(k^*),\boldsymbol{w}_j^{\top}\widehat{\bSigma}-\boldsymbol{e}_j\rangle|+|\langle\hat{\boldsymbol{w}}_j-\hat{\boldsymbol{w}}_j(k^*),(\hat{\boldsymbol{w}}_j(k^*)-\boldsymbol{w}_j)^{\top}\widehat{\bSigma}\rangle|\nonumber\\
&+c_B\{f(p,k^*)-f(p,\hat{k})\}/n+\epsilon_{privacy}/n\nonumber\\
\leq& \|\hat{\boldsymbol{w}}_j-\hat{\boldsymbol{w}}_j(k^*)\|_1\|\boldsymbol{w}_j^{\top}\widehat{\bSigma}-\boldsymbol{e}_j\|_{\infty}+c_{\gamma_u}\|\hat{\boldsymbol{w}}_j-\hat{\boldsymbol{w}}_j(k^*)\|_2\|\boldsymbol{w}_j-\hat{\boldsymbol{w}}_j(k^*)\|_2\nonumber\\
&+c_B\{f(p,k^*)-f(p,\hat{k})\}/n
+\epsilon_{privacy}/n\label{eq:bic_w},
\end{align}
where the second inequality follows from the oracle inequality, and the third inequality follows from Hölder's inequality. By the proof of Lemma \ref{lem:est_w}, we have
\begin{equation*}
\|\boldsymbol{w}_j^{\top}\widehat{\bSigma}-\boldsymbol{e}_j\|_{\infty}\leq\sqrt{2\log(p)\|\boldsymbol{w}_j^{\top}\boldsymbol{x}_i\boldsymbol{x}_i^{\top}\|_{\phi_1}/(cn)}
\end{equation*}
with probability at least $1-\exp(-\log(p))$.
	
We first consider the case where $\hat{k}< k^*$. By applying Hölder's inequality to \eqref{eq:bic_w}, we obtain:
\begin{equation*}
\begin{split}
&c_{\gamma_l}/2\|\hat{\boldsymbol{w}}_j-\hat{\boldsymbol{w}}_j(k^*)\|_2^2\leq c_{\gamma_u}\|\hat{\boldsymbol{w}}_j-\hat{\boldsymbol{w}}_j(k^*)\|_2\|\hat{\boldsymbol{w}}_j(k^*)-\boldsymbol{w}_j\|_2\\
&+\|\hat{\boldsymbol{w}}_j-\hat{\boldsymbol{w}}_j(k^*)\|_2\sqrt{\frac{(2^{\hat{k}}+2^{k^*})\log(p)}{n}}\sqrt{2\|\boldsymbol{w}_j^{\top}\boldsymbol{x}_i\boldsymbol{x}_i^{\top}\|_{\phi_1}/c}\\
&+c_B\{f(p,k^*)-f(p,\hat{k})\}/n+\epsilon_{privacy}/n.
\end{split}
\end{equation*}
Under the assumption that $\hat{k}< k^*$, we have $c_B\{f(p,k^*)-f(p,\hat{k})\}+\epsilon_{privacy}>0$ when $c_B>0$. Let $\|\hat{\boldsymbol{w}}_j-\hat{\boldsymbol{w}}_j(k^*)\|_2:=t$ be treated as an unknown variable. Then the previous inequality becomes a quadratic inequality of the form $t^2\leq a_1t + a_2$, where we define:
$$a_1=2c_{\gamma_u}/c_{\gamma_l}\|\hat{\boldsymbol{w}}_j(k^*)-\boldsymbol{w}_j\|_2+1/c_{\gamma_l}\sqrt{\frac{2^{k^*}\log(p)}{n}}\sqrt{\|\boldsymbol{w}_j^{\top}\boldsymbol{x}_i\boldsymbol{x}_i^{\top}\|^2_{\psi_1}/c},$$ 
and 
$$a_2=2c_B/c_{\gamma_l}\{f(n,k^*)-f(n,\hat{k})\}/n+2/c_{\gamma_l}\epsilon_{privacy}/n.$$
By the solution to a quadratic inequality, it follows that $t\leq a_1+\sqrt{a_2}$. Furthermore, under event $E_4$ and using the fact that $ 2^{k^*}\leq \rho L^4s_j$ by the definition of $k^*$, we obtain the following upper bound for $a_1$:
\begin{equation*}
\begin{split}
a_1\leq& \frac{2c_{\gamma_u}}{c_{\gamma_l}}\sqrt{c_2^{\prime}\frac{2^{k^*}\log (p)\log (n)}{n}+c_3^{\prime}\frac{2^{2k^*}\log (p)^2\log(1/\delta)\log(n)^7}{n^2\varepsilon^2}}\\
&+1/c_{\gamma_l}\sqrt{\|\boldsymbol{w}_j^{\top}\boldsymbol{x}_i\boldsymbol{x}_i^{\top}\|^2_{\psi_1}/c}\sqrt{\frac{ 2^{k^*}\log(p)}{n}}.
\end{split}
\end{equation*}
It remains to consider the term $\sqrt{a_2}$. Since the distribution of each $z_i$ is Laplace, it is sub-exponential. Therefore, we have:
\begin{equation*}
\begin{split}
\mathbb{P}\bigg\{\epsilon_{privacy}\geq 4c\log(n)\frac{R^2(K+2)}{\varepsilon}\bigg\}
&\leq\sum_{i=0}^{K}\mathbb{P}\bigg\{|z_i|\geq 4c\log(n)\frac{R^2(K+2)}{\varepsilon}\bigg\}\\
&\leq(K+1)\exp\{-2\log(n)\}\leq\exp\{-\log(n)\}
\end{split}
\end{equation*}
By the definition of $f(n,k)$, we have
\begin{equation*}
\begin{split}
a_2/2&\leq c_B/c_{\gamma_l}\{f(n,k^*)-f(n,\hat{k})\}/n+1/c_{\gamma_l}\epsilon_{privacy}/n\\
&\leq c_B/c_{\gamma_l}f(n,k^*)/n+2c\log(n)\frac{2(4R)^2(K+2)}{\varepsilon}/(c_{\gamma_l}n)\\
&\leq c_B/c_{\gamma_l}\bigg[2^{k^*}\log (p)+\frac{2^{2k^*}\log (p)^2\log(1/\delta)\log(n)^6}{n\varepsilon^2}\bigg]\frac{\log(n)}{n}\\
&+2c\log(n)\frac{2(4R)^2(K+2)}{\varepsilon}\frac{1}{c_{\gamma_l}n}
\end{split}
\end{equation*}
Then, by combining the upper bounds of $a_1^2$ and $a_2$, we obtain:
\begin{equation*}
\begin{split}
\|\hat{\boldsymbol{w}}_j-\hat{\boldsymbol{w}}_j(k^*)\|_2^2&\leq (a_1+\sqrt{a_2})^2\leq 2a_1^2+a_2\\
&\leq c_2\frac{s\log(p)\log(n)}{n}+c_3\frac{s^2\log(p)^2\log(1/\delta)\log(n)^7}{n^2\varepsilon^2}+c_4\frac{\log(n)^3}{n\varepsilon},
\end{split}
\end{equation*}
for some constant $c_2,c_3,c_4$.
	
We now consider the case where $\hat{k}\geq k^*$. By applying the triangle inequality to \eqref{eq:bic_w}, we obtain:
\begin{align*}
&c_{\gamma_l}/2\|\hat{\boldsymbol{w}}_j-\hat{\boldsymbol{w}}_j(k^*)\|_2^2\leq c_{\gamma_u}\|\hat{\boldsymbol{w}}_j-\hat{\boldsymbol{w}}_j(k^*)\|_2\|\hat{\boldsymbol{w}}_j(k^*)-\boldsymbol{w}_j\|_2\\
&+\|\hat{\boldsymbol{w}}_j-\boldsymbol{w}_j\|_1\sqrt{2\log(p)\|\boldsymbol{w}_j^{\top}\boldsymbol{x}_i\boldsymbol{x}_i^{\top}\|_{\phi_1}/(cn)}\\
&+\|\boldsymbol{w}_j-\hat{\boldsymbol{w}}_j(k^*)\|_1\sqrt{2\log(p)\|\boldsymbol{w}_j^{\top}\boldsymbol{x}_i\boldsymbol{x}_i^{\top}\|_{\phi_1}/(cn)}+c_B\{f(p,k^*)-f(p,\hat{k})\}/n
+\epsilon_{privacy}/n.
\end{align*}
Let $\|\hat{\boldsymbol{w}}_j-\hat{\boldsymbol{w}}_j(k^*)\|_2:=t$ be treated as an unknown variable. The inequality above is a quadratic in $t$. To simplify the notation, define:
$$a_1^{\prime}=2c_{\gamma_u}/c_{\gamma_l}\|\hat{\boldsymbol{w}}_j(k^*)-\boldsymbol{w}_j\|_2,$$
and 
\begin{equation*}
\begin{split}
a_2^{\prime}=&2c_B/c_{\gamma_l}\{f(n,k^*)-f(n,\hat{k})\}/n+2/c_{\gamma_l}\epsilon_{privacy}/n\\
&+2/c_{\gamma_l}\|\hat{\boldsymbol{w}}_j-\boldsymbol{w}_j\|_1\sqrt{2\log(p)\|\boldsymbol{w}_j^{\top}\boldsymbol{x}_i\boldsymbol{x}_i^{\top}\|_{\phi_1}/(cn)}\\
&+2/c_{\gamma_l}\|\boldsymbol{w}_j-\hat{\boldsymbol{w}}_j(k^*)\|_1\sqrt{2\log(p)\|\boldsymbol{w}_j^{\top}\boldsymbol{x}_i\boldsymbol{x}_i^{\top}\|_{\phi_1}/(cn)}        
\end{split}
\end{equation*}
By the event $E_4$ and the fact that the $\|\cdot\|_1$ norm is bounded by $\|\cdot\|_2\times\sqrt{\|\cdot\|_0}$, we have :
\begin{equation*}
\begin{split}
a_2^{\prime}\leq& 4/c_{\gamma_l}\sqrt{2\frac{\|\boldsymbol{w}_j^{\top}\boldsymbol{x}_i\boldsymbol{x}_i^{\top}\|_{\phi_1}}{c}}\sqrt{\frac{2^{\hat{k}}\log(p)}{n}}\\
&\times\sqrt{c_2^{\prime}\frac{2^{\hat{k}}\log (p)\log (n)}{n}+c_3^{\prime}\frac{2^{2\hat{k}}\log (p)^2\log(1/\delta)\log(n)^7}{n^2\epsilon^2}}\\
+&c_B/c_{\gamma_l}\{f(n,k^*)-f(n,\hat{k})\}/|S_T|+1/c_{\gamma_l}\epsilon_{privacy}/|S_T|
\end{split}
\end{equation*}
For $c_B>4\sqrt{\max\{c_2^{\prime},c_3^{\prime}\}}\sqrt{2\|\boldsymbol{w}_j^{\top}\boldsymbol{x}_i\boldsymbol{x}_i^{\top}\|_{\phi_1}/c}$, then:
\begin{equation*}
a_2^{\prime}\leq c_B/c_{\gamma_l}f(n,k^*)/n+1/c_{\gamma_l}\epsilon_{privacy}/n.
\end{equation*}
Using the fact that
$$f(n,k^*)/n\leq\frac{1}{\max\{c_2^{\prime},c_3^{\prime}\}}\|\hat{\boldsymbol{w}}_j(k^*)-\boldsymbol{w}_j\|_2^2$$ 
and applying the large deviation bound for $\epsilon_{privacy}$ as used in the bound for $a_2$, we conclude:
\begin{equation*}
\begin{split}
\|\hat{\boldsymbol{w}}_j-\hat{\boldsymbol{w}}_j(k^*)\|_2^2\leq c_2\frac{s_j\log(p)\log(n)}{n}+c_3\frac{s_j^2\log(p)^2\log(1/\delta)\log(n)^7}{n^2\varepsilon^2}+c_4\frac{\log(n)^3}{n\varepsilon},
\end{split}
\end{equation*}
for some constant $c_2,c_3,c_4$.
\end{proof}

\begin{proof}[Proof of Lemma \ref{lem:three}]
We first consider the event related to truncation. By the events $E_1,E_2,E_3,E_4$ defined in the proofs of Theorem \ref{thm-adapt} and Lemma \ref{lem:est_w_bic}, we know that truncation does not occur with probability approaching one. Therefore, we omit the truncation notation in the remainder of the proof. By simple algebra, we have:
\begin{equation*}
\hat{\beta}^{(db)}_j-\beta_j= \underbrace{\hat{\boldsymbol{w}}_j^\top\frac{\sum_{i=1}^{n}\boldsymbol{x}_ie_i}{n}}_{R_{1,j}}-\underbrace{(\boldsymbol{e}_j^\top-\hat{\boldsymbol{w}}_j^\top\widehat{\boldsymbol{\Sigma}})({\boldsymbol{\beta}}-\hat{\boldsymbol{\beta}})}_{R_{2,j}}+\underbrace{z_j^{(db)}}_{R_{3,j}}.
\end{equation*}
For $R_{2,j}$,
\begin{equation*}
\begin{split}
|R_{2,j}|&\leq |(\boldsymbol{e}_j^{\top}-\boldsymbol{w}_j^\top\widehat{\bSigma})(\boldsymbol{\beta}-\hat{\boldsymbol{\beta}})|+|(\hat{\boldsymbol{w}}_j-\boldsymbol{w}_j)^{\top}\widehat{\bSigma}(\boldsymbol{\beta}-\hat{\boldsymbol{\beta}})|\\
&\leq \|\boldsymbol{e}_j-\boldsymbol{w}_j^\top\widehat{\bSigma}\|_{\infty}\|\boldsymbol{\beta}-\hat{\boldsymbol{\beta}}\|_1+c_{\gamma_u}\|\hat{\boldsymbol{w}}_j-\boldsymbol{w}_j\|_{2}\|\boldsymbol{\beta}-\hat{\boldsymbol{\beta}}\|_2\\
&\leq \|\boldsymbol{e}_j-\boldsymbol{w}_j^\top\widehat{\bSigma}\|_{\infty}\sqrt{2^{K}}\|\boldsymbol{\beta}-\hat{\boldsymbol{\beta}}\|_2+c_{\gamma_u}\|\hat{\boldsymbol{w}}_j-\boldsymbol{w}_j\|_{2}\|\boldsymbol{\beta}-\hat{\boldsymbol{\beta}}\|_2\\
&\leq \sqrt{2\|\boldsymbol{w}_j^{\top}\boldsymbol{x}_i\boldsymbol{x}_i^{\top}\|_{\phi_1}/c}\sqrt{\frac{\log(p)}{n}}\sqrt{2^K}\\
&\times\sqrt{c_2\frac{s\log (p)\log (n)}{n}+c_3\frac{s^2\log(p)^2 \log(1/\delta)\log(n)^7 }{n^2\varepsilon^2}+c_4\frac{\log(n)^3}{n\varepsilon}}\\
&+c_{\gamma_u}\sqrt{c_2\frac{s_j\log(p)\log(n)}{n}+c_3\frac{s_j^2\log(p)^2\log(1/\delta)\log(n)^7 }{n^2\varepsilon^2}+c_4\frac{\log(n)^3}{n\varepsilon}}\\
&\times\sqrt{c_2\frac{s\log(p)\log(n)}{n}+c_3\frac{s^2\log(p)^2\log(1/\delta)\log(n)^7 }{n^2\varepsilon^2}+c_4\frac{\log(n)^3}{n\varepsilon}},
\end{split}
\end{equation*}
we use the triangle inequality in the first inequality; Hölder’s inequality and event $E_0$ in the second inequality; the inequality $\|\cdot\|_1\leq\sqrt{\|\cdot\|_0}\times\|\cdot\|_2$ and the bound $\|\boldsymbol{\beta}-\hat{\boldsymbol{\beta}}\|_0\leq 2^K$ in the third inequality; and results from Theorem~\ref{thm-adapt} and Lemma~\ref{lem:est_w_bic} in the final inequality. Let
$$r_n=s_0\log(p)\log(n)/n^{1/2}+s_0^{2}\log(p)^2\log(1/\delta)\log(n)^7/(n^{1.5}\varepsilon^2)+\log(n)^3/(n^{1/2}\varepsilon).$$ 
Then, we have
\begin{equation*}
    |R_{2,j}|=O\bigg(\sqrt{\frac{\log(p)}{n}\times\frac{\sqrt{n}}{\log(p)^2}\times n^{-1/2}r_n}+n^{-1/2}r_n\bigg)=O(n^{-1/2}\sqrt{r_n}+n^{-1/2}r_n).
\end{equation*}

The term $R_{1,j}$ is asymptotically normal when $\boldsymbol{\hat{w}}_j$ is replaced by $\boldsymbol{w}_j$. Let 
$$R_{1,j}^*=\boldsymbol{w}_j^{\top}\frac{\sum_{i=1}^{n}\boldsymbol{x}_ie_i}{n}.$$
Then we have
\[
\textup{Var}(R_{1,j}^{*})=\frac{\Omega_{j,j}\sigma^2}{n},
\]
where $\Omega_{j,j}$ denotes the $(j,j)$-th entry of $\bSigma^{-1}$. It remains to bound the difference between $R_{1,j}^*$ and $R_{1,j}$. We have:
\begin{equation*}
\begin{split}
&|R_{1,j}^*-R_{1,j}|\leq\|\hat{\boldsymbol{w}}_j-\boldsymbol{w}\|_1\|\frac{\sum_{i=1}^{n}\boldsymbol{x}_i e_i}{n}\|_{\infty}\leq\sqrt{\|\hat{\boldsymbol{w}}_j-\boldsymbol{w}\|_0}\cdot\|\hat{\boldsymbol{w}}_j-\boldsymbol{w}\|_2\cdot\big\|\frac{\sum_{i=1}^{n}\boldsymbol{x}_i e_i}{n}\big\|_{\infty}\\
&\leq\sqrt{2^K}\sqrt{c_2\frac{s_j\log p\log n}{n}+c_3\frac{s_j^2\log(p)^2 \log(1/\delta)\log(n)^7}{n^2\varepsilon^2}+c_4\frac{\log(n)^3}{n\varepsilon}}\\
&\times\sqrt{2\|\boldsymbol{x}_ie_i\|^2_{\psi_1}/c\frac{\log(p)}{n}}\\
&=O\bigg(\sqrt{\frac{\log(p)}{n}\times\frac{\sqrt{n}}{\log(p)^2}\times n^{-1/2}r_n}\bigg),
\end{split}
\end{equation*}
where we use Hölder’s inequality in the first step; the inequality $\|\cdot\|_1\leq\sqrt{\|\cdot\|_0}\times\|\cdot\|_2$ and the bound $\|\boldsymbol{\beta}-\hat{\boldsymbol{\beta}}\|_0\leq 2^K$ in the second step; and Lemma~\ref{lem:est_w_bic} together with the high-probability bound for $\|\sum_{i=1}^{n}\boldsymbol{x}_i e_i/n\|_{\infty}$, which appears in the proof of Theorem \ref{thm-adapt}, in the final step. 
	
It remains to combine all the terms. The combined quantity $\sqrt{n}|R_{2,j}|+\sqrt{n}|R_{1,j}-R_{1,j}^*|$ is of the order $O_p(\max(r_n^{1/2},r_n))$. Furthermore, we consider the term $z_j^{(db)}$. By Markov's inequality, we have
\begin{equation*}
z_j^{(db)}=O_p\bigg(\frac{R^2}{\varepsilon n}\sqrt{\log(1/\delta)}\bigg),
\end{equation*}
which is typically $o_p(n^{-1/2})$ under the regularity conditions in Theorem \ref{thm:main}.
\end{proof}

\begin{proof}[Proof of Theorem \ref{thm:main}]
We first establish the privacy guarantee. By Lemma~\ref{lem:privacy-adapt} and Lemma~\ref{lem:privacy_adap_w}, the first two steps of Algorithm~\ref{algo:inference} are each $(\varepsilon/4,\delta/4)$-DP. By the composition theorem, it remains to show that Steps 3 and 4 are also $(\varepsilon/4,\delta/4)$-DP, respectively.
	
The sensitivity of $\sum_{i=1}^{n}\Pi_R(\hat{\boldsymbol{w}}_j^{\top}\boldsymbol{x}_i)\Pi_R(y_i)/n-\sum_{i=1}^{n} \Pi_R(\hat{\boldsymbol{w}}_j^{\top}\boldsymbol{x}_i)\Pi_R(\boldsymbol{x}_i^{\top}\hat{\boldsymbol{\beta}})/n$ is bounded by:
\begin{equation*}
\begin{split}
&\sup_{(\boldsymbol{x}_i,y_i),(\boldsymbol{x}_i^{\prime},y_i^{\prime})}\frac{1}{n}|\Pi_R(\hat{\boldsymbol{w}}_j^{\top}\boldsymbol{x}_i)\Pi_R(y_i)-\Pi_R(\hat{\boldsymbol{w}}_j\boldsymbol{x}_i)\Pi_R(\boldsymbol{x}_i^{\top}\hat{\boldsymbol{\beta}})\\
&-\Pi_R(\hat{\boldsymbol{w}}_j^{\top}\boldsymbol{x}_i^{\prime})\Pi_R(y_i^{\prime})+\Pi_R(\hat{\boldsymbol{w}}_j\boldsymbol{x}_i^{\prime})\Pi_R(\boldsymbol{x}_i^{\prime\top}\hat{\boldsymbol{\beta}})|\leq\frac{2}{n}(R^2+R^2).
\end{split}
\end{equation*}
The sensitivity of $\frac{1}{n}\sum_{i=1}^n(\Pi_R(y_i)-\Pi_R(\boldsymbol{x}_i^\top\hat{\boldsymbol{\beta}}))^2$ is bounded by:
\begin{equation*}
\begin{split}
&\sup_{(\boldsymbol{x}_i,y_i),(\boldsymbol{x}_i^{\prime},y_i^{\prime})}\frac{1}{n}|(\Pi_R(y_i)-\Pi_R(\boldsymbol{x}_i^\top\hat{\boldsymbol{\beta}}))^2-(\Pi_R(y^{\prime}_i)-\Pi_R(\boldsymbol{x}_i^{\prime\top}\hat{\boldsymbol{\beta}}))^2|\leq \frac{2}{n}(R+R)^2.
\end{split}
\end{equation*}
Therefore, Steps 3 and 4 are each $(\varepsilon/4,\delta/4)$-DP by the Gaussian mechanism. Finally, by the composition theorem, Algorithm \ref{algo:inference} is $(\varepsilon,\delta)$-DP.
	
We now establish the validity of the proposed confidence interval. Note that under the additional order conditions in Theorem \ref{thm:main}, and by Lemma \ref{lem:three}, we have:
\begin{equation*}
\sqrt{n}(\hat{\beta}^{(db)}_j-\beta_j)\stackrel{d}{\to}N(0,\Omega_{jj}\sigma^2).
\end{equation*}
By Lemma \ref{lem:est_w_bic}, the $\ell_2$-convergence of $\hat{\boldsymbol{w}}_j$ implies its $\ell_{\infty}$-convergence, and consequently, $\hat{w}_{j,j}\stackrel{p}{\to}w_{j,j}$. It remains to consider the estimation of $\sigma^2$. We first address the event of truncation. By event $E_1$, defined in the proof of Theorem \ref{thm:main}, truncation does not occur with probability approaching one. Therefore, we omit the truncation notation in the remainder of the proof. Then we have:
\begin{equation*}
\begin{split}
&\hat{\sigma}^2-\sigma^2=\frac{1}{n}\sum_{i=1}^n(y_i-\boldsymbol{x}_i^\top\hat{\boldsymbol{\beta}})^2+z-\sigma^2\\
=&\frac{1}{n}\sum_{i=1}^n(y_i-\boldsymbol{x}_i^\top\boldsymbol{\beta})^2-\sigma^2+\frac{1}{n}\sum_{i=1}^n(\boldsymbol{x}_i^\top\hat{\boldsymbol{\beta}}-\boldsymbol{x}_i^\top\boldsymbol{\beta})^2-\frac{2}{n}\sum_{i=1}^{n}(y_i-\boldsymbol{x}_i^\top\boldsymbol{\beta})\times(\boldsymbol{x}_i^\top\hat{\boldsymbol{\beta}}-\boldsymbol{x}_i^\top\boldsymbol{\beta})+z,
\end{split}
\end{equation*}
where $z$ is the noise added in Step 4 of the algorithm. We have the convergence $\sum_{i=1}^n(y_i-\boldsymbol{x}_i^\top\boldsymbol{\beta})^2/n-\sigma^2=o_p(1)$ by the weak law of large numbers. For the second term, we observe:
\begin{equation*}
\begin{split}
&\frac{1}{n}\sum_{i=1}^n(\boldsymbol{x}_i^\top\hat{\boldsymbol{\beta}}-\boldsymbol{x}_i^\top\boldsymbol{\beta})^2=(\hat{\boldsymbol{\beta}}-\boldsymbol{\beta})^{\top}\widehat{\bSigma}(\hat{\boldsymbol{\beta}}-\boldsymbol{\beta})\leq c_{\gamma_u}\|\hat{\boldsymbol{\beta}}-\boldsymbol{\beta}\|_2^2=o_p(1),
\end{split}
\end{equation*}
where the first equality follows from simple algebra, the inequality uses event $E_0$, and the convergence follows from Theorem \ref{thm-adapt}. The remaining term satisfies:
\begin{equation*}
\begin{split}
&|\frac{2}{n}\sum_{i=1}^{n}(y_i-\boldsymbol{x}_i^\top\boldsymbol{\beta})(\boldsymbol{x}_i^\top\hat{\boldsymbol{\beta}}-\boldsymbol{x}_i^\top\boldsymbol{\beta})|=|\frac{2}{n}\sum_{i=1}^{n}e_i(\boldsymbol{x}_i^\top\hat{\boldsymbol{\beta}}-\boldsymbol{x}_i^\top\boldsymbol{\beta})|\leq 2\|\frac{1}{n}\sum_{i=1}^{n}e_i\boldsymbol{x}_i\|_{\infty}\|\boldsymbol{\beta}-\hat{\boldsymbol{\beta}}\|_1\\
&\leq 2\|\frac{1}{n}\sum_{i=1}^{n}e_i\boldsymbol{x}_i\|_{\infty}\cdot\|\boldsymbol{\beta}-\hat{\boldsymbol{\beta}}\|_2\cdot\sqrt{\|\boldsymbol{\beta}-\hat{\boldsymbol{\beta}}\|_0}\\
&\leq \sqrt{2\|\boldsymbol{x}_ie_i\|_{\phi_1}/c}\sqrt{\frac{\log(p)}{n}}\sqrt{2^K}\\
&\times\sqrt{c_2\frac{s\log (p)\log (n)}{n}+c_3\frac{s^2\log(p)^2 \log(1/\delta)\log(n)^7 }{n^2\varepsilon^2}+c_4\frac{\log(n)^3}{n\varepsilon}}\\
&=O\bigg(\sqrt{\frac{\log(p)}{n}\times\frac{\sqrt{n}}{\log(p)^2}\times n^{-1/2}r_n}\bigg)=o_p(1),
\end{split}
\end{equation*}
where the first equality follows from the definition of the linear model, the second inequality applies Hölder’s inequality, and the third inequality uses the fact that $\|\cdot\|_1\leq\sqrt{\|\cdot\|_0}\cdot\|\cdot\|_2$. The final convergence follows from Theorem \ref{thm-adapt} and the conditions in Theorem \ref{thm:main}. Therefore, we conclude that $\hat{\sigma}^2\stackrel{p}{\to}\sigma^2$, and the final result follows by Slutsky’s theorem.
\end{proof}

\subsection{Proof of FDR}
\begin{proof}[Proof of Lemma \ref{lem:privacy_fdr}]
By Lemma \ref{lem:privacy-adapt}, releasing $\tilde{\boldsymbol{\beta}}_{(1)}$ satisfies $(\varepsilon,\delta)$-DP. By the composition theorem, it remains to show that the DP-OLS procedure also satisfies $(\varepsilon,\delta)$-DP. Note that the DP-OLS procedure estimates the numerator and denominator separately. We first compute the $\ell_2$ sensitivity of the denominator, $\sum_{i\in \mathcal{D}_2}\boldsymbol{x}_{i,\mathcal{A}}\boldsymbol{x}_{i,\mathcal{A}}^{\top}/|\mathcal{D}_2|$:
$$\sup_{(\boldsymbol{x}_i,\boldsymbol{x}_i^{\prime})}\|\boldsymbol{x}_{i,\mathcal{A}}\boldsymbol{x}_{i,\mathcal{A}}^{\top}-\boldsymbol{x}_{i,\mathcal{A}}^{\prime}\boldsymbol{x}_{i,\mathcal{A}}^{\prime\top}\|_2/n_2\leq 2|\mathcal{A}|c_x^2/n_2,$$
where we use the notation $n_2 = |\mathcal{D}_2|$. Note that the sparsity level $|\mathcal{A}|$ is bounded by $2^K$ according to Algorithm \ref{alg:linear}. The $\ell_2$ sensitivity of the numerator, $\sum_{i\in \mathcal{D}_2}\boldsymbol{x}_{i,\mathcal{A}}\Pi_R(y_i)/|A_2|$, satisfies:
\begin{equation*}
\begin{split}
&\sup_{(\boldsymbol{x}_i,y_i),(\boldsymbol{x}_i^{\prime},y_i^{\prime})}\|\boldsymbol{x}_{i,\mathcal{A}}\Pi_R(y_i)-\boldsymbol{x}_{i,\mathcal{A}}^{\prime}\Pi_R(y_i^{\prime})\|_2/n_2\leq 2\sqrt{|\mathcal{A}|}c_xR/n_2.
\end{split}
\end{equation*}
The DP-OLS procedure satisfies $(\varepsilon,\delta)$-DP by the Gaussian mechanism and the composition theorem. The output of Algorithm \ref{alg:fdr} is a deterministic function of $\tilde{\boldsymbol{\beta}}_{(1)}$ and $\tilde{\boldsymbol{\beta}}_{(2)\mathcal{A}}$, and is therefore $(2\varepsilon,2\delta)$-DP by the post-processing property.
\end{proof}

\begin{proof}[Proof of Theorem \ref{thm:fdr}]
We first consider the truncation operators. Note that under the event $E_1$, truncation does not occur. Thus, we omit truncation in the following analysis. Given the signal strength condition, for each $i\in\mathcal{S}$, we have
\begin{equation*}
    |\hat{\beta}_i|\geq|\beta_i|-|\hat{\beta}_i-\beta_i|\geq\min_{j\in\mathcal{S}}|\beta_j|-\|\hat{\boldsymbol{\beta}}-\boldsymbol{\beta}\|_{2}\geq\min_{j\in\mathcal{S}}|\beta_j|/2>0.
\end{equation*}
Thus, the sure screening property holds with probability approaching one; that is, $\mathbb{P}(\mathcal{S}\subseteq\mathcal{A})\to 1$ as $n\to\infty$. Furthermore, since we only need to consider the subset $\mathcal{A}\subseteq[p]$, we omit the subscript $\mathcal{A}$ to simplify the notation. Without loss of generality, we assume $|\mathcal{D}_2|=n_2=n/2$. 

Note that the DP-OLS estimator $\tilde{\boldsymbol{\beta}}_{(2)}$ satisfies:
\begin{equation*}
\begin{split}
\tilde{\boldsymbol{\beta}}_{(2)}-\boldsymbol{\beta}&=\boldsymbol{\tilde{\Sigma}}^{-1}_{(2)XX}\big(\frac{1}{n/2}\sum_{i\in\mathcal{D}_2}\boldsymbol{x}_{i}e_i+\boldsymbol{N}_{XY}\big)+(\boldsymbol{\tilde{\Sigma}}^{-1}_{(2)XX}-\widehat{\bSigma}^{-1}_{(2)XX})\widehat{\bSigma}_{(2)XX}\boldsymbol{\beta}\\			&:=\tilde{\boldsymbol{\beta}}_{(2)}^{(0)}+\tilde{\boldsymbol{\beta}}_{(2)}^{(1)},
\end{split}
\end{equation*}
where we use the definition of a linear model and the following notations
\begin{equation*}
\widehat{\bSigma}_{(2)XX}:=\sum_{i\in\mathcal{D}_2}\boldsymbol{x}_i\boldsymbol{x}_i^{\top}/|\mathcal{D}_2|,\boldsymbol{\tilde{\Sigma}}_{(2)XX}:=\sum_{i\in\mathcal{D}_2}\boldsymbol{x}_i\boldsymbol{x}_i^{\top}/|\mathcal{D}_2|+\boldsymbol{N}_{XX}.
\end{equation*}
The decomposition of the DP-OLS estimator differs from that of the OLS estimator. Due to the additional noise added to the denominator, the DP-OLS estimator is closely related to the ridge regression estimator. Consequently, the term $\tilde{\boldsymbol{\beta}}_{(2)}^{(1)}$ represents the bias component in the DP-OLS estimator.
	
The following proof relies on two critical observations: (1) conditional on $\mathcal{D}_1\cup\{\boldsymbol{x}_i\}_{i\in\mathcal{D}_2}$, the distribution of $\tilde{\boldsymbol{\beta}}_{(2)}^{(0)}$ is symmetric around $0$; and (2) the term $\tilde{\boldsymbol{\beta}}_{(2)}^{(1)}$, representing the bias, is small. The first observation can be justified as follows. Conditional on the first part of the data $\mathcal{D}_1$, the active set $\mathcal{A}$ selection is fixed. Furthermore, conditional on the covariates $\{\boldsymbol{x}_i\}_{i\in\mathcal{D}_2}$ and the added noise $\boldsymbol{N}_{XX}$, the distribution of $\tilde{\boldsymbol{\beta}}_{(2)}^{(0)}$ is symmetric around zero because it is a linear combination of independent Gaussian random variables. Therefore, conditional on $\{\boldsymbol{x}_i\}_{i\in\mathcal{D}_2}$, the distribution of $\tilde{\boldsymbol{\beta}}_{(2)}^{(0)}$ is a weighted mixture of distributions symmetric around zero and is itself symmetric around zero. Without loss of generality, we assume that $|\mathcal{A}|=\hat{s}\to\infty$; otherwise, the FDR control problem becomes trivial.
	
Define the variable $\boldsymbol{R}$ and its normalized version $\boldsymbol{R}^0$ as follows:
\begin{equation*}
\begin{split}
\boldsymbol{R}&=\big(\widehat{\bSigma}_{(2)XX}+\boldsymbol{N}_{XX}\big)^{-1}(\sigma^2\widehat{\bSigma}_{(2)XX}+\sigma^2_r\boldsymbol{I}_p)\\
&\big(\widehat{\bSigma}_{(2)XX}+\boldsymbol{N}_{XX}\big)^{-1}\\
&=:\boldsymbol{ABA},
\end{split}
\end{equation*}
and
\begin{equation*}
\boldsymbol{R}^0:=\{R_{ij}^{0}\}\text{ for }R_{ij}^{0}=\frac{R_{ij}}{\sqrt{R_{ii}R_{jj}}},
\end{equation*}
where $\sigma^2$ is the variance of the residual $e_i$ and $\sigma^2_r=B_2^2\cdot8\log(2.5/\delta)/\varepsilon^2$ is defined in Algorithm \ref{alg:fdr}. Without loss of generality, we assume $\sigma^2=1$. Here, $R_{ij}^0$ represents the conditional correlation between the $i$-th and $j$-th components of the DP-OLS regression coefficients $\tilde{\boldsymbol{\beta}}_{(2)}$. By Wigner's semicircle law, the maximum eigenvalue of $\boldsymbol{N}_{XX}$ converges to $0$ with high probability:
\begin{equation*}
c\sqrt{\hat{s}}\hat{s}\frac{\sqrt{\log(1/\delta)}}{n\varepsilon}=o(1),
\end{equation*}
where we use the fact that the sparsity is denoted by $|\mathcal{A}|=\hat{s}$. Furthermore, by the order condition in Theorem \ref{thm:fdr}, we have $\sigma_r^2=o(1)$. By the proof of Lemma \ref{lem:est_w} and Weyl’s theorem, we have
\begin{equation*}
\lambda_j(\boldsymbol{A}-\boldsymbol{\Sigma}^{-1}_{\mathcal{A}\mathcal{A}})=o_p(1)\text{ and }\lambda_j(\boldsymbol{B}-\boldsymbol{\Sigma}_{\mathcal{A}\mathcal{A}})=o_p(1),
\end{equation*}
for $j=1,\dots,|\mathcal{A}|$, where $\lambda_j(\cdot)$ denotes the $j$-th eigenvalue. Here, $\boldsymbol{\Sigma}_{\mathcal{A}\mathcal{A}}^{-1}$ is a sub-matrix of $\boldsymbol{\Sigma}^{-1}$, and $\boldsymbol{\Sigma}_{\mathcal{A}\mathcal{A}}$ is a sub-matrix of $\boldsymbol{\Sigma}$. Moreover, we have
\begin{equation*}
\lambda_{min}(\boldsymbol{R})=1/L+o_p(1)\text{ and }\lambda_{max}(\boldsymbol{R})=L+o_p(1),
\end{equation*}
where we use the Condition \ref{cond:1}. Therefore,
\begin{equation*}
\begin{split}
&\|\boldsymbol{R}^{0}\|_{l,1}\leq\lambda_{\min}^{-1}(\boldsymbol{R})\|\boldsymbol{R}\|_{l,1}\leq\lambda_{\min}(\boldsymbol{R})^{-1}\hat{s}\|\boldsymbol{R}\|_{l,2}\\
&\leq\lambda_{\max}(\boldsymbol{R})\lambda_{\min}(\boldsymbol{R})^{-1}\hat{s}^{3/2}=O_p(\hat{s}^{3/2}),
\end{split}
\end{equation*}
where $\|\boldsymbol{R}\|_{l,p}:=(\sum_{i,j=1}^{m}|R_{ij}^p|)^{1/p}$ denotes the element-wise matrix norm. We use the relation $\sqrt{R_{ii}R_{jj}}\geq \min_{i}R_{ii}\geq\lambda_{min}(\boldsymbol{R})$ in the first inequality, the inequality $\|\cdot\|_{l,1}\leq\sqrt{\|\cdot\|_{l,0}}\cdot\|\cdot\|_{l,2}$ in the second inequality, and the relation $\|\boldsymbol{R}\|_{l,2}\leq\sqrt{\hat{s}\max_i\sum_{j=1}^{\hat{s}}R_{ij}^2}\leq\sqrt{\hat{s}}\lambda_{max}(\boldsymbol{R})$ in the last inequality. In addition, we have $\|\boldsymbol{R}^{0}_{\bar{\mathcal{S}}\cap\mathcal{A}}\|_{l,1}\leq O_p(p_0^{3/2})$, where $p_0=|\bar{\mathcal{S}}\cap\mathcal{A}|$. These results will be used to bound the correlations.
	
For any threshold $t\in\mathbb{R}$, we define
\begin{equation*}
\hat{G}_p^0(t)=\frac{1}{p_0}\sum_{j\in\bar{\mathcal{S}}\cap\mathcal{A}}1(M_j>t)\text{, }G_p^0(t)=\frac{1}{p_0}\sum_{j\in\bar{\mathcal{S}}\cap\mathcal{A}}\mathbb{P}(M_j>t);
\end{equation*}
\begin{equation*}
\hat{G}_p^1(t)=\frac{1}{p_1}\sum_{j\in\mathcal{S}\cap\mathcal{A}}1(M_j>t)\text{, }\hat{V}_p^0(t)=\frac{1}{p_0}\sum_{j\in\bar{\mathcal{S}}\cap\mathcal{A}}1(M_j<-t),
\end{equation*}
where $p_0=|\bar{\mathcal{S}}\cap\mathcal{A}|$ and $p_1=\hat{s}-p_0$. Let $r_p=p_1/p_0$ and
\begin{equation*}
\begin{split}
&\text{FDP}(t)=\frac{\hat{G}_p^0(t)}{\hat{G}_p^0(t)+r_p\hat{G}_p^1(t)}\text{, }\text{FDP}^{s}(t)=\frac{\hat{V}_p^0(t)}{\hat{G}_p^0(t)+r_p\hat{G}_p^1(t)},\\			&\text{and }\text{FDP}^{e}(t)=\frac{G_p^0(t)}{G_p^0(t)+r_pG_p^1(t)}.
\end{split}
\end{equation*}
It is easy to see $G_p^0(t)=\E\{\hat{G}_p^0(t)\}$. Furthermore, we have
\begin{equation*}
\begin{split}
&\text{Var}\{\hat{G}_p^0(t)\}=\frac{1}{p_0^2}\sum_{j\in\bar{\mathcal{S}}\cap\mathcal{A}}\text{Var}\{1(M_j>t)\}+\frac{1}{p_0^2}\sum_{i,j\in\bar{\mathcal{S}}\cap\mathcal{A};i\neq j}\text{Cov}\{1(M_i>t),1(M_j>t)\}.
\end{split}
\end{equation*}
The first term is bounded by $(1/4)/p_0\stackrel{n\to\infty}{\longrightarrow}0$. Without loss of generality, we assume $\tilde{\beta}_{(1),i}>0$ and $\tilde{\beta}_{(1),j}>0$, where $\tilde{\beta}_{(1),i}$ denotes the estimate from the first part of the data, $\mathcal{D}_1$. By definition, the function $f(u,v)$ is non-negative, symmetric in $u$ and $v$, and monotonically increasing in both arguments. Therefore, there exists a function $\mathcal{I}_t(u)$, defined by $\mathcal{I}_t(u)=\inf\{v\geq 0:f(u,v)>t\}$, such that for $\mathcal{I}_t(\tilde{\beta}_{(1),i})$ and $\mathcal{I}_t(\tilde{\beta}_{(1),j})$, we have,
\begin{equation*}
\mathbb{P}(M_i>t,M_j>t)=\mathbb{P}\{\tilde{\beta}_{(2),i}>\mathcal{I}_t(\tilde{\beta}_{(1),i}),\tilde{\beta}_{(2),j}>\mathcal{I}_t(\tilde{\beta}_{(1),j})\}.
\end{equation*}
Note that the bias satisfies
\begin{equation*}
\begin{split}
&\sqrt{n}\|\tilde{\boldsymbol{\beta}}_{(2)}^{(1)}\|_{\infty}\leq \sqrt{n}\|\tilde{\boldsymbol{\beta}}_{(2)}^{(1)}\|_{2}\leq\sqrt{n}\|\boldsymbol{\tilde{\Sigma}}^{-1}_{(2)XX}-\widehat{\bSigma}^{-1}_{(2)XX}\|_2\|\widehat{\bSigma}_{(2)XX}\|_2\|\boldsymbol{\beta}\|_2\\
&=O_p(\sqrt{n}\frac{\hat{s}^{3/2}\sqrt{\log(1/\delta)}}{n\varepsilon})=o_p(1),
\end{split}
\end{equation*}
where we use Wigner’s semicircle law and the conditions in Theorem \ref{thm:fdr}. By the Lipschitz continuity of $f(u,v)$, we have
\begin{equation*}
\begin{split}
&\mathbb{P}\{\tilde{\beta}_{(2)i}>\mathcal{I}_t(\tilde{\beta}_{(1)i}),\tilde{\beta}_{(2)j}>\mathcal{I}_t(\tilde{\beta}_{(1)j})\}=\mathbb{P}\{\tilde{\beta}_{(2)i}-\tilde{\beta}_{(2)i}^{(1)}\\
&>\mathcal{I}_t(\tilde{\beta}_{(1)i}),\tilde{\beta}_{(2)j}-\tilde{\beta}_{(2)j}^{(1)}>\mathcal{I}_t(\tilde{\beta}_{(1)j})\}+o_p(1).
\end{split}
\end{equation*}
Note that the joint distribution of $(\tilde{\beta}_{(2)i}^{(0)},\tilde{\beta}_{(2)j}^{(0)})$ is bivariate normal conditional on $\boldsymbol{N}_{XX}$. By Theorem 1 in \cite{azriel2015empirical}, for any $t_1,t_2\in\mathbb{R}$,
\begin{equation*}
\begin{split}
&\mathbb{P}(\tilde{\beta}_{(2)i}-\tilde{\beta}_{(2)i}^{(0)}>t_1,\tilde{\beta}_{(2)j}-\tilde{\beta}_{(2)j}^{(0)}>t_2)\\
&-\mathbb{P}(\tilde{\beta}_{(2)i}-\tilde{\beta}_{(2)i}^{(0)}>t_1)\mathbb{P}(\tilde{\beta}_{(2)j}-\tilde{\beta}_{(2)j}^{(0)}>t_2)\leq O(|R_{ij}^0|).
\end{split}
\end{equation*}
Therefore,
\begin{equation*}
\begin{split}
&\frac{1}{p_0^2}\sum_{i,j\in\bar{\mathcal{S}}\cap\mathcal{A};i\neq j}\text{Cov}\{1(M_i>t),1(M_j>t)\}\leq O_p(p_0^{-2}\|R^0_{\bar{\mathcal{S}}\cap\mathcal{A}}\|_1)+o_p(1)\\
\leq&  O_p(p_0^{-2}p_0^{3/2})+o_p(1)\stackrel{n\to\infty}{\longrightarrow}0.
\end{split}
\end{equation*}
By Markov's inequality,
\begin{equation*}
|\hat{G}_p^0(t)-G_p^0(t)|\to 0.
\end{equation*}
Similarly, we have
\begin{equation*}
|\hat{V}_p^0(t)-\frac{1}{p_0}\sum_{j\in\bar{\mathcal{S}}\cap\mathcal{A}}\mathbb{P}(M_j<-t)|\to 0.
\end{equation*}
Using the fact that the bias satisfies $\|\tilde{\boldsymbol{\beta}}_{(2)}^{(1)}\|_{\infty}=o_p(n^{-1/2})$, we obtain
\begin{equation*}
\begin{split}
\mathbb{P}(M_j>t)&=\mathbb{P}\{\tilde{\beta}_{(2),j}>\mathcal{I}_t(\tilde{\beta}_{(1),j})\}=\mathbb{P}\{\tilde{\beta}_{(2),j}^{(1)}>\mathcal{I}_t(\tilde{\beta}_{(0),j})\}+o_p(1)\\
&=\mathbb{P}\{\tilde{\beta}_{(2),j}^{(0)}< -\mathcal{I}_t(\tilde{\beta}_{(1),j})\}+o_p(1),
\end{split}
\end{equation*}
for $j\in\bar{\mathcal{S}}\cap\mathcal{A}$, where the last equality follows from the symmetry of $\tilde{\beta}_{(2),k}^{(0)}$ under the null. Therefore, we conclude that
\begin{equation*}
|\hat{V}_p^0(t)-G_p^0(t)|\to 0
\end{equation*}
	
Thus, by algebra, we have
\begin{equation*}
\sup_{0\leq t\leq 1}|\text{FDP}(t)-\text{FDP}^s(t)|=o_p(1).
\end{equation*}
For any $c\in (0,q)$ and $t_{q-c}$ satisfying $\mathbb{P}\{\text{FDP}^s(t_{q-c})\leq q-c\}\to 1$, we obtain 
\begin{equation}
\begin{split}
\label{eq:fdr_2}
&\mathbb{P}(\tau_q\leq t_{q-c})\geq\mathbb{P}\{\text{FDP}^s(t_{q-c})\leq q\}\\
&\geq\mathbb{P}\{\text{FDP}(t_{q-c})\leq q-c,|\text{FDP}^s(t_{q-c})-\text{FDP}(t_{q-c})|\leq c\}\\
&\geq 1-c+o(1),
\end{split}
\end{equation}
where the first inequality follows from the definition of $t_q$. It then follows that
\begin{equation*}
\begin{split}
\limsup_{n\to\infty}\E\{\text{FDP}(\tau_q)\}
&\leq\limsup_{n\to\infty}\E\{\text{FDP}(\tau_q)\mid\tau_q\leq t_{q-c}\}\mathbb{P}(\tau_q\leq t_{q-c})+\mathbb{P}(\tau_q>t_{q-c})\\
&\leq\limsup_{n\to\infty}\E\{\text{FDP}^s(\tau_q)\mid\tau_q\leq t_{q-c}\}\mathbb{P}(\tau_q\leq t_{q-c})\\
&+\limsup_{n\to\infty}\E\{|\text{FDP}(\tau_q)-\text{FDP}^e(\tau_q)|\mid\tau_q\leq t_{q-c}\}\mathbb{P}(\tau_q\leq t_{q-c})\\
&+\limsup_{n\to\infty}\E\{|\text{FDP}^s(\tau_q)-\text{FDP}^e(\tau_q)|\mid\tau_q\leq t_{q-c}\}\mathbb{P}(\tau_q\leq t_{q-c})+c\\
&\leq\limsup_{n\to\infty}\E\{\text{FDP}^s(\tau_q)\}+\limsup_{n\to\infty}\E\{|\text{FDP}(\tau_q)-\text{FDP}^e(\tau_q)|\}\\
&+\limsup_{n\to\infty}\E\{|\text{FDP}^s(\tau_q)-\text{FDP}^e(\tau_q)|\}+c+o(1)\leq q+c+o(1),
\end{split}
\end{equation*}
where the first inequality uses the law of total expectation, the second applies the triangle inequality, and the third follows from inequality \eqref{eq:fdr_2}.

So far, we have shown that $\limsup_{n\to\infty}\text{FDR}(\tau_q)\leq q$. Next, we consider the power of the procedure. Recall that the bias of the DP-OLS estimator satisfies:
\begin{equation*}
\begin{split}
&\|\tilde{\boldsymbol{\beta}}_{(2)}-\boldsymbol{\beta}\|_{\infty}\leq\|\tilde{\boldsymbol{\beta}}_{(2)}^{(0)}\|_{\infty}+\|\tilde{\boldsymbol{\beta}}_{(2)}^{(1)}\|_{\infty}\\			&\leq\|\tilde{\boldsymbol{\beta}}_{(2)}^{(0)}\|_{\infty}+\|\tilde{\boldsymbol{\beta}}_{(2)}^{(1)}\|_{2}\\
&\leq O\{\sqrt{\log(\hat{s})/n}+\sqrt{\hat{s}\log(1/\delta)\log(n)}/(n\varepsilon)+\hat{s}^{3/2}\sqrt{\log(1/\delta)}/(n\varepsilon)\},
\end{split}
\end{equation*}
with probability at least $1-\exp(-\log(n))$, where we apply the large deviation theorem to $\sum_{i\in\mathcal{D}_2}\boldsymbol{x}_{i}e_i/(n/2)$ and $\boldsymbol{N}_{XY}$, and use the bias bound of $\boldsymbol{\tilde{\beta}}^{(1)}_{(2)}$. Under the additional signal strength condition, we have
\begin{equation*}
\min_{i\in\mathcal{S}\cap\mathcal{A}}|\hat{\beta}_{2,i}|\geq\max_{i\in\bar{\mathcal{S}}\cap\mathcal{A}}|\hat{\beta}_{2,i}|.
\end{equation*}
By the sure screening property, we also have
\begin{equation*}
\min_{i\in\mathcal{S}}|\hat{\beta}_{1,i}|\geq\max_{i\in\bar{\mathcal{S}}}|\hat{\beta}_{1,i}|.
\end{equation*}
By the definition of $f(u,v)$, it follows that $\min_{i\in\mathcal{S}\cap\mathcal{A}}|M_i|\geq\max_{i\in\bar{\mathcal{S}}\cap\mathcal{A}}|M_i|$. Consequently, we have $\hat{G}^1_p(\tau_q)\to 1$ with probability approaching one, and therefore, the power asymptotically converges to one.
\end{proof}

\section{Additional Numerical Results}
\subsection{Simulation Results for DP-BIC}

We evaluate the finite-sample performance of the proposed DP-BIC procedure in Algorithm 2, focusing on its selection properties and estimation accuracy for debiased inference. The simulation settings are identical to those described in Section 6.1.1, using the identity covariance matrix. The privacy parameter for each coordinate is $\varepsilon=2$ and $\delta=1/n^{1.1}$. To assess parameter selection, we compute the proportion of correctly identified true features, the average number of falsely selected zero coefficients, and the mean squared error $\|\hat{\boldsymbol{\beta}} - \boldsymbol{\beta}\|_2^2$ for the selected model, following the evaluation criteria used by \cite{fan2013tuning}. We examine the performance of DP-BIC under varying sample sizes and signal strengths, which are two critical factors in model selection.

\begin{figure}[ht]
	\centering
	\includegraphics[width=0.7\linewidth]{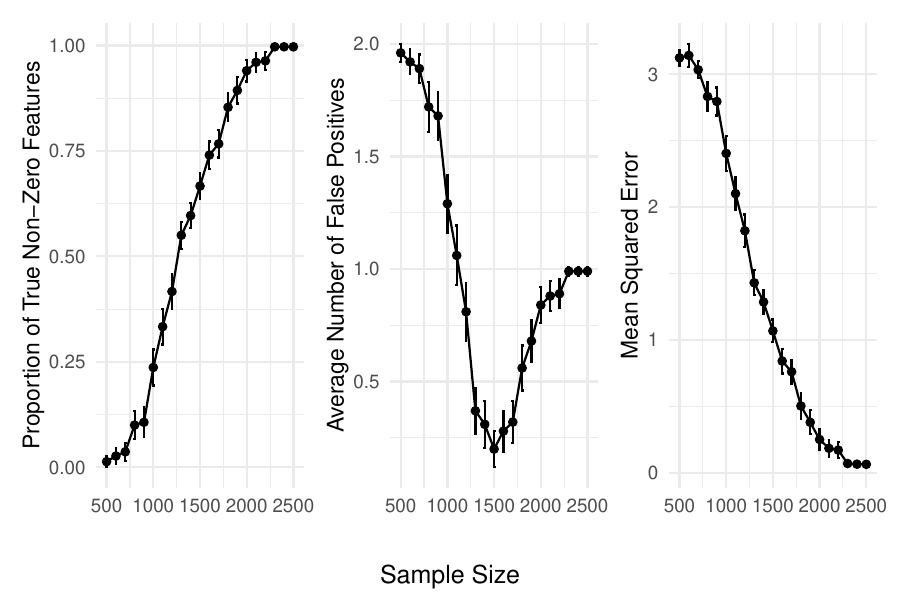}
	\caption{Simulation results evaluating the performance of the DP-BIC procedure for varying sample sizes. The three panels display: (left) the proportion of true non-zero features correctly identified, (middle) the average number of false positives, and (right) the mean squared error. Error bars represent $\pm 2$ standard errors across $100$ simulation replications.}
	\label{fig:sel-n}
\end{figure}

Figure \ref{fig:sel-n} presents simulation results evaluating the performance of the proposed DP-BIC procedure in terms of feature selection and estimation accuracy across varying sample sizes. The left panel shows the proportion of true non-zero features correctly identified, demonstrating that selection accuracy improves as the sample size increases. When the sample size is small ($n=500$), the proposed procedure fails to recover the true coefficients due to the large sample requirement imposed by DP. In contrast, when the sample size is large ($n=2000$), the procedure successfully identifies all the true coefficients. The middle panel displays the average number of false positives. Note that the true non-zero features are $S_0={1, 2, 3}$, and the proposed DP-BIC selects $2^K$ features. Thus, the proposed method tends to select $K=1$ when the sample size is smaller than 1500, and tends to select $K=2$ when the sample size is larger than $1500$. As a result, we see the average number of false positives decrease first, and increase to 1 as the sample size continues to increase, because $2^2 - 3 = 1$. The right panel illustrates the mean squared error between the estimated and true coefficients, which consistently decreases with increasing sample size, reflecting improved estimation accuracy. These results collectively confirm that the DP-BIC procedure becomes more reliable and accurate with larger sample sizes, aligning with the findings of \cite{fan2013tuning} for the non-private BIC.

\begin{figure}[ht]
	\centering
	\includegraphics[width=0.7\linewidth]{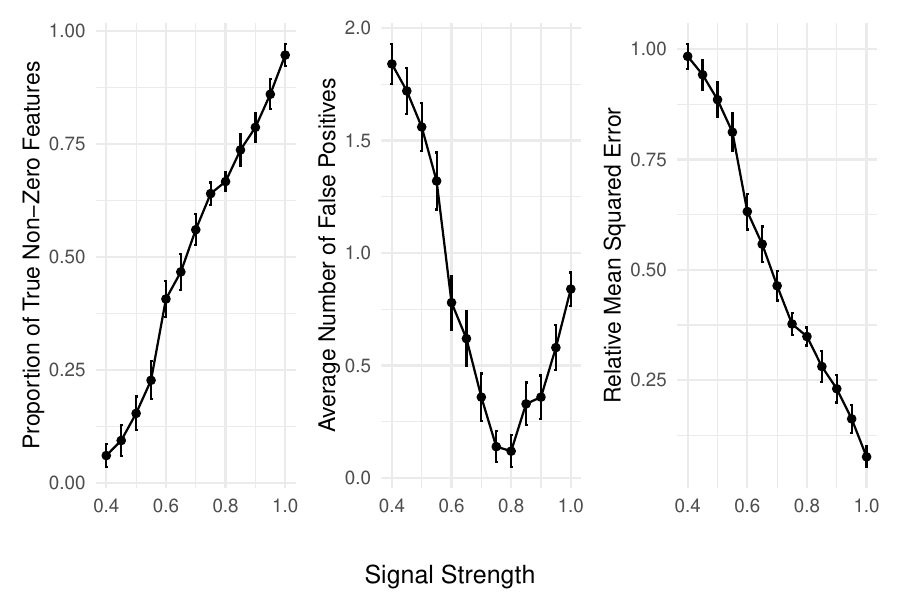}
	\caption{Simulation results evaluating the performance of the DP-BIC procedure under varying signal strengths. The three panels display: (left) the proportion of true non-zero features correctly identified; (middle) the average number of false positives; and (right) the mean squared error. Error bars represent $\pm 2$ standard errors computed over 100 simulation replications.}
	\label{fig:sel-s}
\end{figure}

Figure \ref{fig:sel-s} presents simulation results evaluating the performance of the proposed DP-BIC procedure across varying signal strengths. Instead of evaluating the mean squared error, we assess the relative mean squared error defined as $|\hat{\boldsymbol{\beta}} - \boldsymbol{\beta}|_2^2/|\boldsymbol{\beta}|_2^2$. The trends for the proportion of true non-zero features correctly identified and the average number of false positives are similar to those observed in Figure \ref{fig:sel-n}. As signal strength increases, selection becomes more accurate, consistent with the findings of \cite{fan2013tuning} for the non-private BIC. The right panel illustrates the relative mean squared error between the estimated and true coefficients, which consistently decreases with increasing signal strength. Given a fixed privacy budget, higher signal strength improves both selection accuracy and relative estimation efficiency.

\subsection{Additional Real Data Example: Parkinson's Telemonitoring}

For real data applications, we demonstrate the performance of the proposed differentially private algorithms in analyzing the Parkinson's Disease Progression data \citep{tsanas2009accurate}. In clinical diagnosis, assessing the progression of Parkinson's disease (PD) symptoms typically relies on the Unified Parkinson's Disease Rating Scale (UPDRS), which necessitates the patient's physical presence at the clinic and time-consuming physical evaluations conducted by trained medical professionals. Thus, monitoring symptoms is associated with high costs and logistical challenges for patients and clinical staff. This dataset aims to track UPDRS by noninvasive speech tests. However, it is crucial to protect the privacy of each patient's data, as any unauthorized disclosure could lead to potential harm or trouble for the participants. By ensuring privacy protection, individuals are more likely to contribute their personal data, facilitating advancements in Parkinson's disease research.

The data collection process involved the utilization of the Intel AHTD, a telemonitoring system designed for remote, internet-enabled measurement of various motor impairment symptoms associated with Parkinson's disease (PD). The research was overseen by six U.S. medical centers, namely the Georgia Institute of Technology (seven subjects), the National Institutes of Health (ten subjects), Oregon Health and Science University (fourteen subjects), Rush University Medical Center (eleven subjects), Southern Illinois University (six subjects), and the University of California, Los Angeles (four subjects). A total of 52 individuals diagnosed with idiopathic PD were recruited. Following an initial screening process to eliminate flawed recordings (such as instances of patient coughing), a total of 5923 sustained phonations were subjected to analysis. In total, 16 dysphonia measures were applied to the 5923 sustained phonations. \cite{tsanas2009accurate} proposed using a linear model and did not consider privacy issues.

\subsubsection{Debiased Inference}

To evaluate the performance of our proposed differentially private inference algorithm \ref{algo:inference} in high-dimensional settings, we add $5,000$ random features generated independently and identically from the standard normal distribution. Therefore, the dataset comprises a sample size of $n=5,923$ with covariates having a dimension of $p=5,016$, where the first $16$ of these covariates represent real features.

We consider the following three methods: the oracle method, the proposed differentially private algorithm, and the non-private debiased Lasso \citep{van2014asymptotically}. The oracle method uses only the 16 real features, while the proposed algorithm and the debiased Lasso utilize all 5,016 features. The privacy parameters are $\varepsilon=0.5$ and $\delta=1/n^{1.1}$. Figure \ref{fig:4} displays the confidence intervals for the 16 real features obtained from the three methods. Overall, the private confidence intervals consistently cover the estimates obtained from the oracle method and the debiased Lasso. The private confidence intervals exhibit substantial overlap with the confidence intervals from both the oracle method and the debiased Lasso. However, due to the privacy costs, the width of the proposed confidence intervals is slightly larger than the confidence intervals from the debiased Lasso.

\begin{figure}[ht]
	\centering
	\includegraphics[width=\linewidth]{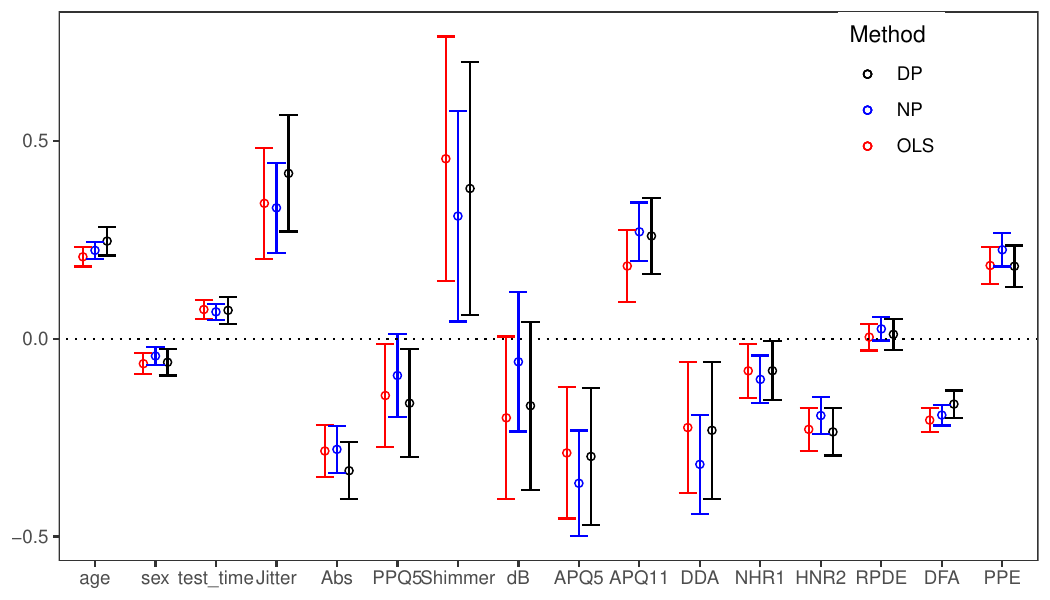}
	\caption{The $95\%$ confidence interval of OLS, nonprivate debiased Lasso (NP), and proposed Algorithm \ref{algo:inference} with finite sample correction (DP).}
	\label{fig:4}
\end{figure}

\subsubsection{FDR control}


Next, we evaluate the performance of the private FDR control algorithm proposed in Section \ref{sec: fdr} on the Parkinson's Disease Progression data. We add $100$ random features generated independently and identically from $N(0, 1)$. The proposed algorithm is compared with the non-private data splitting algorithm by \cite{dai2022false} and the knockoff by \cite{barber2015controlling}. We use equal-sized data splitting. The results are reported in Figure \ref{fig:5}, where the target FDR is set to $0.1$ and $0.3$, respectively. The proposed method exhibits a notable number of discoveries within the real features while registering only a minimal number of false discoveries among the random features, compared to knockoff and non-private data-splitting methods. 

\begin{figure}[ht]
	\centering
	\includegraphics[width=\linewidth]{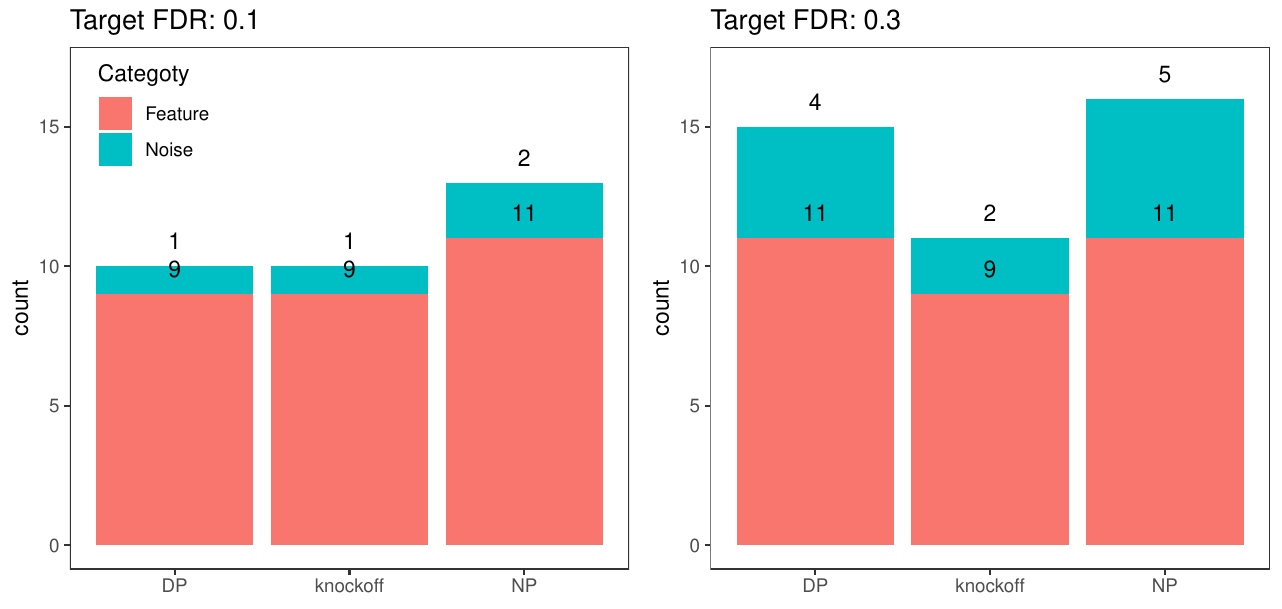}
	\caption{Numbers of the discovers for Algorithm \ref{alg:fdr} (DP), the non-private version (NP), and the knockoff at target FDR=$0.1$ and $0.3$.}
	\label{fig:5}
\end{figure}

\begin{table}[ht]
	\centering
	\begin{tabular}{l|rrrrr}
		\hline
		feature & age & sex & test\_time & Jitter & Abs  \\ 
		\hline
		knockoff & \checkmark &  & \checkmark & \checkmark & \checkmark \\ 
		Non-Private & \checkmark & \checkmark & \checkmark & \checkmark& \checkmark \\ 
		DP-FDR& \checkmark &  &  & \checkmark & \checkmark \\ 
		\hline
		\hline
		feature  & APQ5 & APQ11 & DDA & NHR1 & HNR2 \\ 
		\hline
		knockoff & \checkmark & \checkmark &  &  & \checkmark \\ 
		Non-Private &  & \checkmark & \checkmark & \checkmark  & \checkmark \\ 
		DP-FDR&  &  & \checkmark &  & \checkmark  \\ 
		\hline
		\hline
		feature  &PPQ5 & Shimmer & dB& RPDE & DFA  \\ 
		\hline
		knockoff & &  &   && \checkmark    \\
		Non-Private &  &  &    & &\checkmark\\ 
		DP-FDR  &  & \checkmark&& &\checkmark \\
		\hline
		\hline
		feature & PPE  \\ 
		\hline
		knockoff&\checkmark\\
		Non-Private &\checkmark \\ 
		DP-FDR  &\checkmark \\
		\hline
		\hline
	\end{tabular}
	\caption{Selected Real Feature for Algorithm \ref{alg:fdr} (DP), the non-private version (NP), and the knockoff at target FDR=$0.1$.}
	\label{tab:3a}
\end{table}

We further report the selected features at the target FDR level of $q=0.1$ in Table \ref{tab:3a}. There is substantial overlap among the three methods, with several features being consistently selected. For instance, age, Jitter, Abs, HNR2, DFA, and PPE are all chosen by all three methods. For Parkinson's disease (PD), which is the second most prevalent neurodegenerative disorder among the elderly, age is the most crucial risk factor. Numerous medical studies underscore the pivotal role of age as the single most significant factor associated with PD, as documented by \cite{elbaz2002risk}.
Furthermore, Jitter and Abs are commonly employed to characterize cycle-to-cycle variability in fundamental frequency, while HNR (Harmonics-to-Noise Ratio) is an essential feature in speech processing techniques. Detrended Fluctuation Analysis (DFA) and Pitch Period Entropy (PPE) represent two recently proposed speech signal processing methods, both of which exhibit a strong correlation with PD-dysphonia, as highlighted in \cite{little2008suitability}. In addition to these commonly selected features, the proposed method also identifies shimmer and DDA, which are frequently used to describe cycle-to-cycle variability in amplitude. Shimmer and DDA are also endorsed as relevant features in clinical studies by \cite{tsanas2009accurate}. The features identified by our proposed method receive substantial clinical support, with a privacy guarantee for the individual patients.

\section{Useful Tools in Differential Privacy }


\begin{lemma}[\cite{dwork2014algorithmic}]
	\label{lemma:laplace_gaussian_mechanism}
	$\ $
	\begin{enumerate}
		\item (Laplace mechanism): For a deterministic algorithm $\mathcal{T}(\cdot)$ with $l_1$ sensitivity $\Delta_1(\mathcal{T})$, the randomized algorithm $\mathcal{M}(\cdot):=\mathcal{T}(\cdot)+\boldsymbol{\xi}$ achieves $(\varepsilon,0)$-differential privacy, where $\boldsymbol{\xi}=(\xi_1,\dots,\xi_m)^{\top}$ follows i.i.d. Laplace distribution with scale parameter $\Delta_1(\mathcal{T})/\varepsilon$.
		\item (Gaussian mechanism): For a deterministic algorithm $\mathcal{T}(\cdot)$ with $l_2$ sensitivity $\Delta_2(\mathcal{T})$, the randomized algorithm $\mathcal{M}(\cdot):=\mathcal{T}(\cdot)+\boldsymbol{\xi}$ achieves $(\varepsilon,\delta)$-differential privacy, where $\boldsymbol{\xi}=(\xi_1,\dots,\xi_m)^{\top}$ follows i.i.d. Gaussian distribution with mean $0$ and standard deviation $\sqrt{2\log(1.25/\delta)}\Delta_2(\mathcal{T})/\varepsilon$.
	\end{enumerate}
\end{lemma}


\begin{lemma}
	\label{lem:post_combin_dp}
	Differentially private algorithms have the following properties \citep{dwork2006calibrating}:
	\begin{enumerate}
		\item Post-processing: Let $\mathcal{M}(\cdot)$ be an $(\varepsilon,\delta)$-DP algorithm and $f(\cdot)$ be a deterministic function that maps $\mathcal{M}(D)$ to real Euclidean space, then $f(\mathcal{M}(D))$ is also an $(\varepsilon,\delta)$-DP algorithm.
		\item Composition: Let $\mathcal{M}_1(\cdot)$ be $(\varepsilon_1,\delta_1)$-differentially private and $\mathcal{M}_2(\cdot)$ be $(\varepsilon_2,\delta_2)$-differentially private, then $\mathcal{M}_1\circ\mathcal{M}_2(\cdot)$ is $(\varepsilon_1+\varepsilon_2,\delta_1+\delta_2)$-differentially private.
		\item Advanced Composition: Let $\mathcal{M}(\cdot)$ be $(\varepsilon,0)$-differentially private and $0<\delta^{\prime}<1$, then $k$-fold adaptive composition of $\mathcal{M}(\cdot)$ is $(\varepsilon^{\prime},\delta^{\prime})$-differentially private for $\varepsilon^{\prime}=k\varepsilon(e^{\varepsilon}-1)+\varepsilon\sqrt{2k\log(1/\delta^{\prime})}$.
	\end{enumerate}
\end{lemma}
\vskip 0.2in
\bibliography{sample}

\end{document}